\newcommand{\algA}{$\mathrm{BF}$}
\newcommand{\algB}{$\mathrm{VQS}$}
\newtheorem{theorem}{Theorem}
\newtheorem{theorem*}{Subtheorem}
\newtheorem{lemma*}{Lemma}
\newtheorem{corollary}{Corollary}
\newtheorem{proposition}{Proposition}
\newtheorem{definition}{Definition}
\DeclareMathOperator*{\argmax}{arg\,max}
\theoremstyle{remark}
\newtheorem{remark}{Remark}
\newcommand{\be}{\begin{eqnarray}}
\newcommand{\ee}{\end{eqnarray}}
\newcommand{\ben}{\begin{eqnarray*}}
	\newcommand{\een}{\end{eqnarray*}}
\newcommand{\bfl}{\begin{flalign*}}
	\newcommand{\efl}{\end{flalign*}}
\newcommand{\dref}[1]{(\ref{#1})}
\newcommand{\prob}[1]{{\mathbb P} \left( #1\right)}
\newcommand{\calQ}{{\mathcal Q}}
\newcommand{\calD}{{\mathcal D}}
\newcommand{\calA}{{\mathcal A}}
\newcommand{\calL}{{\mathcal L}}
\newcommand{\calH}{{\mathcal H}}
\newcommand{\mVQ}{\mathrm{VQ}}
\newcommand{\state}[1]{{#1}}
\newenvironment{subproof}[1][\proofname]{%
	\begin{proof}[#1]%
	}{%
	\end{proof}%
}
\begin{document}

\title{Scheduling Jobs with Random Resource Requirements in Computing Clusters
\thanks{The authors are with the Department of Electrical Engineering at Columbia University, New York, NY 10027, USA. Emails: \{kpsychas, jghaderi\}@ee.columbia.edu.
	This research was supported in part by NSF Grants CNS-1652115 and CNS-1717867.
}
}
\author{
\IEEEauthorblockN{Konstantinos Psychas, Javad Ghaderi}\\
\IEEEauthorblockA{
Electrical Engineering Department, Columbia University
		} 
}

\maketitle
\begin{abstract}
We consider a natural scheduling problem which arises in many distributed computing frameworks. 
Jobs with diverse resource requirements (e.g. memory requirements) arrive over time 
and must be served by a cluster of servers, each with a finite resource capacity. To improve 
throughput and delay, the scheduler can pack as many jobs as possible in the servers 
subject to their capacity constraints.
Motivated by the ever-increasing complexity of workloads in shared clusters, 
we consider a setting where the jobs' resource requirements belong to a very large number of diverse types or, in the extreme, even infinitely many types, e.g. when resource requirements are drawn from an \textit{unknown} distribution over a continuous support. 
The application of classical scheduling approaches that crucially 
rely on a predefined finite set of types is discouraging in this high (or infinite) dimensional setting. 
We first characterize a fundamental limit on the maximum throughput in such setting,
and then develop oblivious scheduling algorithms that have \textit{low complexity} and can achieve \textit{at least} 1/2 and 2/3 of the maximum throughput, \textit{without the knowledge of traffic or resource 
requirement distribution}. Extensive simulation results, using both synthetic and real traffic 
traces, are presented to verify the performance of our algorithms.

\end{abstract}

\begin{IEEEkeywords}
Scheduling Algorithms, Stability, Queues, Knapsack, Data Centers
\end{IEEEkeywords}

\section{Introduction}

Distributed computing frameworks (e.g., MapReduce~\cite{hadoop}, 
Spark~\cite{spark}, Hive~\cite{hive}) have enabled processing of 
very large data sets across a cluster of servers. The processing is 
typically done by executing a set of jobs or tasks in the servers.
A key component of such systems is the resource manager (\textit{scheduler}) 
that assigns incoming jobs to servers and reserves the requested resources 
(e.g. CPU, memory) on the servers for running jobs. For example, in Hadoop~\cite{hadoop}, 
the resource manager reserves the requested resources, by launching \textit{resource containers} in servers. 
Jobs of various applications can arrive to the cluster, which often 
have very diverse resource requirements. Hence, to improve throughput and delay, a scheduler 
should pack as many jobs (containers) as possible in the servers, while retaining their 
resource requirements and not exceeding server's capacities.  

A salient feature of resource demand is that it is hard to predict and 
cannot be easily classified into a small or moderate number of resource profiles 
or ``\textit{types}''. This is amplified by the increasing complexity of workloads, 
i.e., from traditional batch jobs, to queries, graph processing, streaming, 
machine learning jobs, etc., that rely on multiple computation frameworks, and 
all need to share \textit{the same cluster}. For example, Figure~\ref{fig:simgtracestats} 
shows the statistics of memory and CPU resource requirement requested by jobs in a Google 
cluster~\cite{ClusterData}, over the first day in the trace. If jobs were to be divided into types according to their memory requirement alone, there would be more than $700$ types. Moreover, the statistics change over time and these types are not sufficient to model all the job requirements in a month, which are more 
than $1500$. We can make a similar observation for CPU requirements,
which take more than $400$ discrete types. Analyzing the joint CPU and 
memory requirements, there would be more than $10,000$ distinct types.
Building a low-complexity scheduler that can provide high performance in such a high-dimensional 
regime is extremely challenging, as learning the demand for all types is 
infeasible, and finding the optimal packing of jobs in servers, even 
when the demand is known, is a hard combinatorial problem (related to \textit{Bin 
Packing} and \textit{Knapsack} problems~\cite{Martello1990}).

\begin{figure}
	\centering
    \begin{subfigure}{.5\columnwidth}
  	\centering
	\includegraphics[width=\columnwidth]{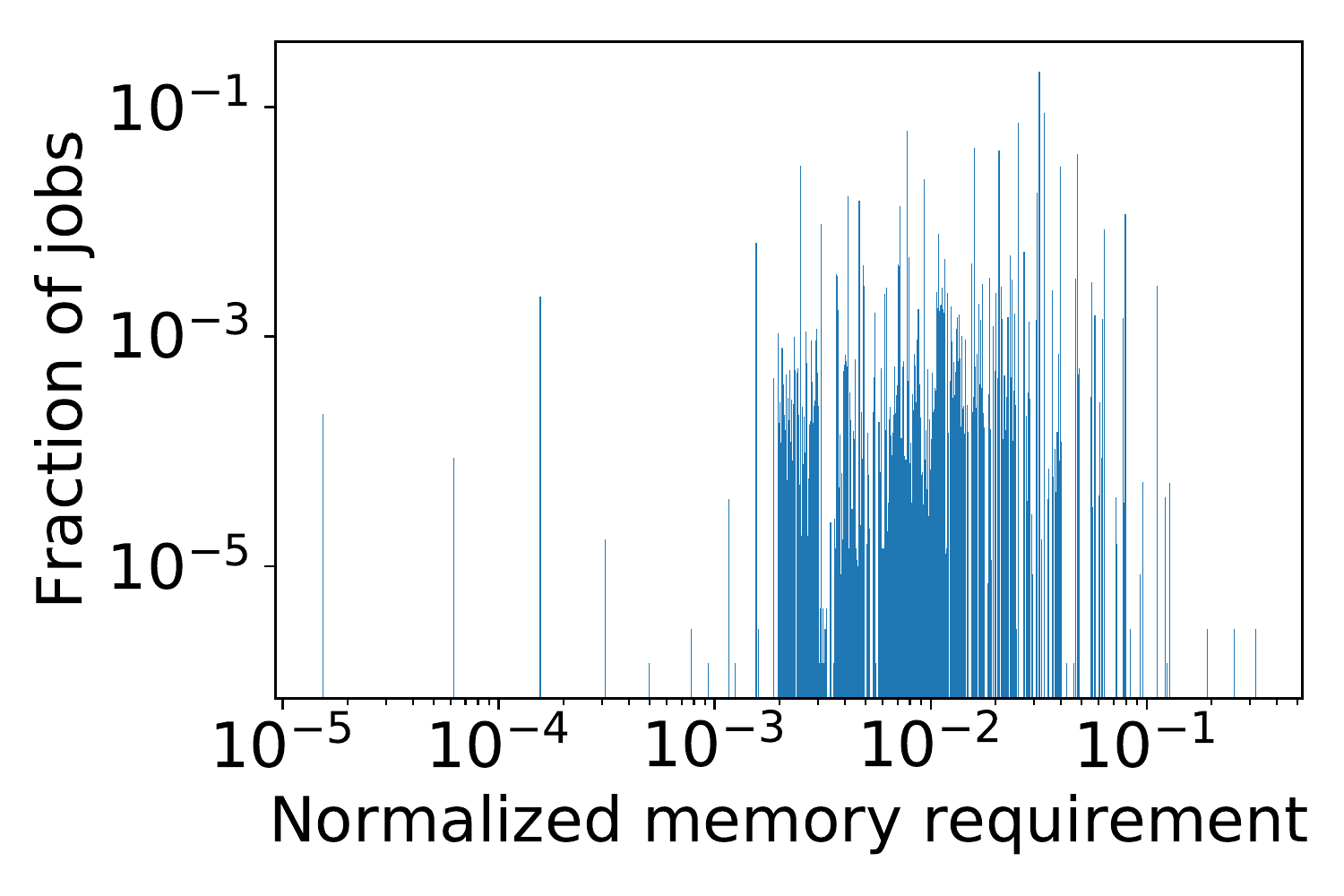}
	\label{fig:simgtracestatsmem}
    \end{subfigure}%
    \begin{subfigure}{.5\columnwidth}
  	\centering
	\includegraphics[width=\columnwidth]{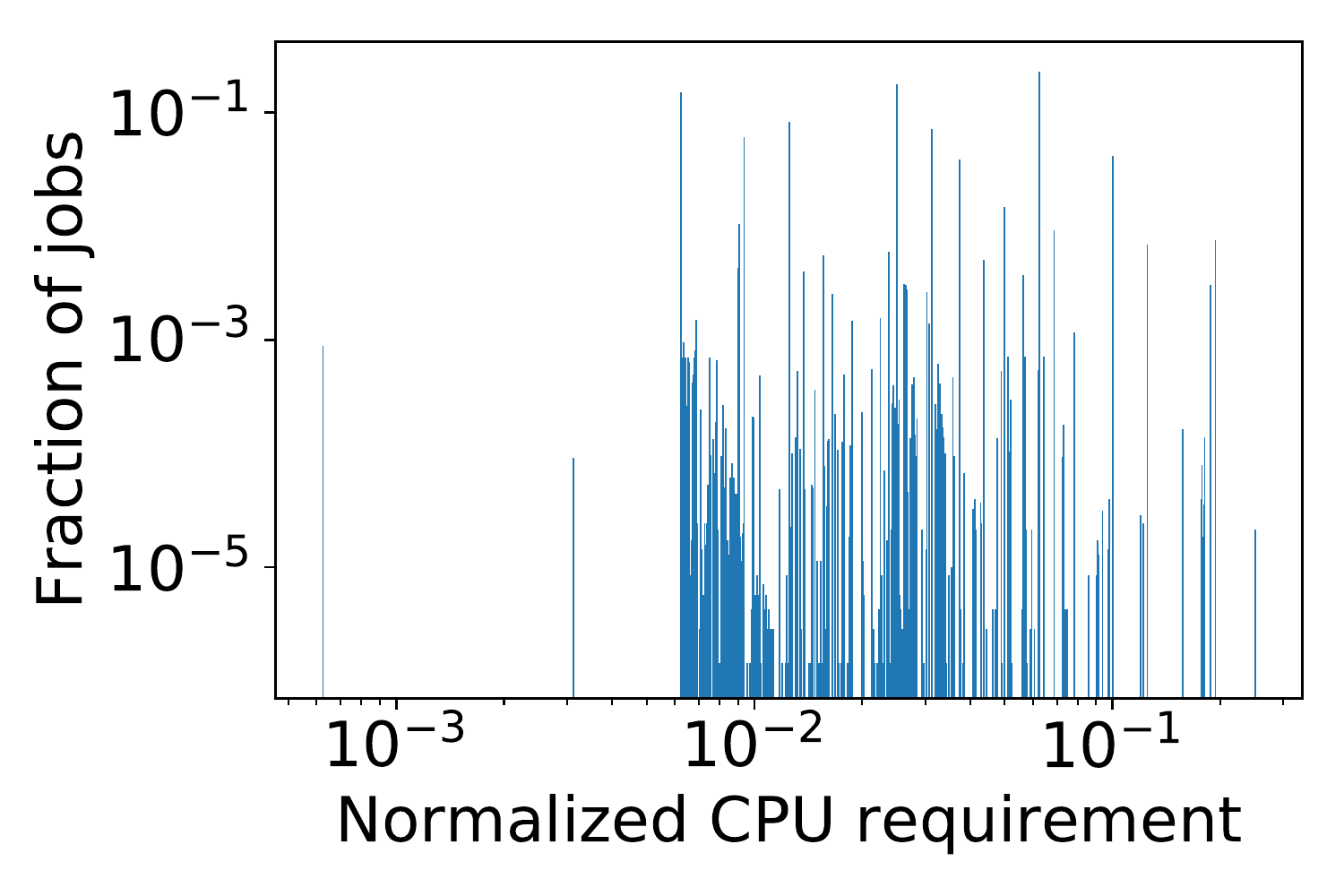}
	\label{fig:simgtracestatscpu}
    \end{subfigure}
\vspace{-0.1 in}
    \caption{There are more than 700 discrete memory 
    	requirements and 400 discrete CPU requirements in the 
        tasks submitted to a Google cluster during a day.}
    \label{fig:simgtracestats}
\end{figure} 

Despite the vast literature on scheduling algorithms, their theoretical study 
in such high-dimensional setting is very limited. The majority of the past work 
relies on a crucial assumption that there is a predefined finite set of discrete types, 
e.g.~\cite{magsriyi12, Stolyar2013, Siva13, Siva14, ghaderi2016randomized, PG17}. 
Although we can consider every possible resource profile as a type, the 
number of such types could be formidably large. The application of scheduling 
algorithms, even with polynomial complexity in the number of types, is 
discouraging in such setting. A natural solution could be to divide the 
resource requests into a smaller number of types. Such a scheduler can be 
strictly suboptimal, since, as a result of mapping to a smaller number of 
types, jobs may underutilize or overutilize the resource compared to what 
they actually require. Moreover, in the \textit{absence of any prior knowledge about 
the resource demand statistics}, it is not clear how the partitioning of the resource 
axis into a small number of types should be actually done.

Our work fulfills one of the key deficiencies of the past work in the modeling 
and analysis of scheduling algorithms for distributed server systems. 
Our model allows a very large or, in the extreme case, even \textit{infinite} number of job types, 
i.e., when the jobs' resource requirements follow a 
probability distribution over a continuous support. 
To the best of our knowledge, there is no past work on characterizing the optimal 
throughput and what can be achieved when there are no discrete job types. Our goal is 
to characterize this throughput and design algorithms that: (1) have low complexity, 
and (2) can provide provable throughput guarantees \textit{without} the knowledge 
of the traffic or the resource requirement statistics.

\subsection{Related Work}
Existing algorithms for scheduling jobs in distributed computing platforms 
can be organized in two categories.

In the first category, we have algorithms that do not provide any throughput guarantees, 
but perform well empirically or focus on other performance metrics such as fairness 
and makespan. These algorithms include slot-based schedulers that divide 
servers into a predefined number of slots for placing tasks~\cite{isard2009quincy,tang2013dynamic},
resource packing approaches such as~\cite{grandl2015multi,Verma2015}, fair resource sharing 
approaches such as~\cite{Ghodsi2011,chowdhury2016hug}, and Hadoop's default schedulers such 
as FIFO~\cite{usama2017job}, Fair scheduler~\cite{Fairscheduler}, and Capacity 
scheduler~\cite{Capscheduler}.

In the second category, we have schedulers with throughput guarantees, 
e.g.,~\cite{magsriyi12, Siva13, Siva14, ghaderi2016randomized, PG17}. 
They work under the assumption that there is a finite number of discrete job types. 
This assumption naturally lends itself to \textit{MaxWeight} algorithms~\cite{tassiulas1992stability}, where each server schedules jobs according to a maximum weight configuration chosen from a finite set of configurations. 
The number of configurations however grows exponentially 
large with the number of types, making the application of these algorithms discouraging 
in practice. Further, their technique \textit{cannot} be applied to our setting 
which can include an infinite number of job types.

There is also literature on classical bin packing problem~\cite{garey2}, where given a list of objects of various sizes, and an infinite number of unit-capacity bins, the goal is to use the minimum number of bins to pack the objects. Many algorithms have been proposed for this problem with approximation ratios for the optimal number of bins or waste, e.g.~\cite{john,kenyon2002linear,coffman1996approximation}. 
There is also work in a setting of bin packing with 
queues, e.g.~\cite{Shah2008,coffman2001bandwidth,gamarnik2004stochastic}, under the model that an empty bin arrives at each time, then some jobs from the queue are 
packed in the bin at that time, and the bin cannot be reused in future.  
Our model is \textit{fundamentally} different from these lines of work, as the number of servers (bins) in our setting is fixed and we need to reuse the servers to schedule further jobs from the queue, when jobs depart from servers.  

\subsection{Main Contributions}
Our main contributions can be summarized as follows:  
\begin{itemize}[leftmargin=4mm]
 	\item [\bf 1.] \textbf{Characterization of Maximum Achievable Throughput.}
		We characterize the maximum throughput (\textit{maximum supportable workload}) that can be 
		theoretically achieved by any scheduling algorithm in the setting that the 
		jobs' resource requirements follow a general probability distribution $F_R$ over possibly 
		infinitely many job types. The construction of optimal schedulers to approach this 
		maximum throughput relies on a careful partition of jobs into sufficiently large number of 
		types, using the complete knowledge of the resource probability distribution $F_R$.
 	\item [\bf 2.] \textbf{Oblivious Scheduling Algorithms.}
		We introduce scheduling algorithms based on ``\textit{Best-Fit}'' packing and ``\textit{universal partitioning}'' of resource requirements into types, \textit{without} the knowledge of the resource probability distribution $F_R$. The algorithms have low complexity and can provably achieve at least $1/2$ and 
		$2/3$ of the maximum throughput, respectively. Further, we show that $2/3$ is tight 
		in the sense that no oblivious scheduling algorithm, that maps the resource requirements 
		into a finite number of types, can achieve better than $2/3$ of the maximum throughput 
		for all general resource distributions $F_R$.   
	\item[\bf 3.] \textbf{Empirical Evaluation.}
		We evaluate the throughput and queueing delay performance of all algorithms empirically
    	using both synthetic and real traffic traces.
\end{itemize}

\section{System Model and Definitions}
\subsubsection*{Cluster Model}
We consider a collection of $L$ servers denoted by the set $\mathcal{L}$.
For simplicity, we consider a single resource (e.g. memory) and assume that the servers have the same resource capacity. 
While job resource requirements are in general multi-dimensional (e.g. CPU, memory), it has been
observed that memory is typically the bottleneck resource~\cite{Capscheduler,nitu2018working}.
Without loss of generality, we assume that each server's capacity is normalized to one. 
\subsubsection*{Job Model} Jobs arrive over time, and the $j$-th job, $j=1,2,\cdots$, requires an amount $R_j$ of the (normalized) resource for the duration of its service. 
The resource requirements $R_1, R_2,\cdots$ are i.i.d. random variables with a 
\textit{general} cdf (cumulative distribution function) $F_R (\cdot): (0,1] \to [0,1]$, with average $\bar{R} = \mathds{E}(R)$.
Note that each job should be served by one server and its resource requirement \textit{cannot be 
fragmented} among multiple servers. 
In the rest of the paper, we use the terms job size and job resource requirement interchangeably.  

\subsubsection*{Queueing Model}
We assume time is divided into time slots $t=0,1,\cdots$. 
At the beginning of each time slot $t$, a set $\calA(t)$ of jobs arrive to the system. 
We use $A(t)$ to denote the cardinality of $\calA(t)$. 
The process $A(t)$, $t=0,1,\cdots$, is assumed to be i.i.d. with a finite mean 
$\mathds{E}[A(t)]=\lambda$ and a finite second moment. 

There is a queue $\calQ(t)$ that contains the jobs that have arrived up to time slot $t$ 
and have not been served by any servers yet. At each time slot, the scheduler 
can select a set of jobs $\calD(t)$ from $\calQ(t)$ and place each job in a server that has 
enough available resource to accommodate it.
Specifically, define $\calH(t)= (\calH_\ell(t),\ \ell \in \calL)$, where 
$\calH_\ell(t)$ is the set of existing jobs in server $\ell$ at time $t$. At any time, the total size of the jobs packed in server $\ell$ \textit{cannot} exceed its capacity, 
i.e., 
\be \label{eq:packing}
\sum_{j \in \calH_\ell(t)} R_j \leq 1,\ \forall \ell \in \calL, \ t=0,1,\cdots.
\ee 
Note that jobs may be scheduled out of the order that they arrived, depending on the resource availability of servers. Let $D(t)$ denote the 
cardinality of $\calD(t)$ and $Q(t)$ denote the cardinality of $\calQ(t)$ 
(the number of jobs in the queue). Then the queue $\calQ(t)$ and its size $Q(t)$ evolve as
\be 
&\calQ(t+1)=\calQ(t) \cup \calA(t)-\calD(t), \label{eq:queue} \\
&Q(t+1)=Q(t)+A(t)-D(t). \label{eq:queuesize}
\ee

Once a job is placed in a server, it completes its service after a geometrically distributed amount of time with mean $1/\mu$, after which it releases its reserved resource. 
This assumption is made to simplify the analysis, and the results can be extended to more general service time distributions
(see Section~\ref{sec:dis} for a discussion).  

\subsubsection*{Stability and Maximum Supportable Workload}
The system state is given by $(\calQ(t),\calH(t))$ which evolves 
as a Markov process over an \textit{uncountably infinite state space}
\footnote{The state space can be equivalently represented in a complete separable metric space, as we show in Section~\ref{pr:thm_GF1}.}
We investigate the stability of the system in terms of the average queue size, 
i.e., the system is called stable if 
$\lim \sup _{t } \mathds{E}[Q(t)] <\infty$. 
Given a job size distribution $F_R$, a workload $\rho:=\lambda/\mu$ is called supportable if there exists a scheduling policy that can stabilize the system for the job arrival rate $\lambda$ and the mean service duration $1/\mu$. 

\textit{Maximum supportable workload} is a workload $\rho^\star$ such that any 
$\rho <\rho^\star$ can be stabilized by some scheduling policy, which possibly uses the knowledge of the job size distribution $F_R$,  
but no $\rho >\rho^\star$ can be stabilized by any scheduling policy.  

\section{Characterization of Maximum Supportable Workload}\label{sec:stab}
In this section, we provide a framework to characterize the maximum supportable workload $\rho^\star$ given a job resource distribution $F_R$. We start with an overview of the results for a system with a finite set of discrete job types.
\subsection{Finite-type System}
It is easy to characterize the maximum supportable workload when jobs belong to a finite set of discrete types.
In this case, it is well known that the supportable workload region is the sum of  convex hull of \textit{feasible configurations} of servers, e.g.~\cite{magsriyi12, Siva13, Siva14, ghaderi2016randomized, PG17},
which are defined as follows.
\begin{definition}[Feasible configuration]\label{def:conf}
Suppose there is a finite set of $J$ job types, with job sizes $r_1, \cdots, r_J$. An integer-valued vector $\mathbf{k} = \left(k_1, \cdots, k_{J}\right)$ is a feasible configuration for a server if it is possible to simultaneously pack $k_1$ jobs of 
of type $1$, $k_2$ jobs of type $2$, $\dots$, and $k_J$ jobs of type $J$ in the server, without exceeding its capacity. Assuming normalized server's capacity, any feasible configuration $\mathbf{k}$ must therefore satisfy $\sum_{j=1}^J k_j r_j\leq 1$, $k_j \in \mathbb{Z}_+$, $j=1,\cdots, J$. 
We use $\overline{\mathcal{K}}$ to denote the (finite) set of all feasible configurations.
\end{definition}

We define $P_j\triangleq\mathbb{P}(R=r_j)$ to be the probability that size of an arriving job is $r_j$, $\bm{P}=(P_1,\cdots,P_J)$ to be the vector of such arrival probabilities, and $\rho = \lambda/\mu$ to be the workload. We also refer to $\rho \bm{P}$ as \textit{the workload vector}. As shown in \cite{magsriyi12, Siva13, Siva14, ghaderi2016randomized}, the maximum supportable workload $\rho^\star$ is 

\be \label{eq:region}
\rho^\star= \sup \Big\{\rho \in \mathbb{R}_+ : \rho \bm{P} < \sum_{\ell \in \calL} \mathbf{x}^\ell, \mathbf{x}^\ell\in \rm Conv(\overline{\mathcal{K}}), \ell \in \calL \Big\}
\ee
where $\rm Conv(\cdot)$ is the convex hull operator, and the vector inequality is component-wise. Also $\sup$ (or $\inf$) denotes \textit{supremum} (or \textit{infimum}). Hence any $\rho < \rho^\star$ is supportable by some scheduling algorithm, while no $\rho>\rho^\star$ can be supported by any scheduling algorithm. 

The optimal or near-optimal scheduling policies then basically follow 
the well-known \textit{MaxWeight} algorithm~\cite{tassiulas1992stability}. Let $Q_j(t)$ be the number of type-$j$ jobs waiting in queue at time $t$. At any time $t$ for each server $\ell$, the algorithm maintains a feasible configuration $\mathbf{k}(t)$ that has the ``maximum weight''~\cite{Siva13, Siva14} (or a fraction of the maximum weight~\cite{PG17}), among all the feasible configurations $\overline{\mathcal{K}}$. The weight of a configuration is formally defined below. 
\begin{definition}[Weight of a configuration]\label{def:weight}
Given a queue size vector $\bm{Q}=(Q_1, \cdots, Q_J)$, the weight of a feasible configuration $\mathbf{k}= \left(k_1, \cdots, k_{J}\right)$ 
is defined as the inner product
\be
&\langle \mathbf{k}, \mathbf{Q} \rangle =
\sum_{j=1}^{J} k_j Q_j.
\ee
\end{definition}
\subsection{Infinite-type System} 
In general, the support of the job size distribution $F_R$ can span an infinite 
number of types (e.g., $F_R$ can be a continuous function over $(0,1]$). We introduce 
the notion of virtual queue which is used to characterize 
the supportable workload for any general distribution $F_R$.

\begin{definition}[Partition and Virtual Queues (\rm{VQs})]\label{def:virtual}
Define a partition ${X}$ of interval $(0,1]$ as a finite collection of disjoint subsets 
$X_j \subset (0, 1]$, $j = 1, \cdots, J$, such that $\cup_{j=1}^{J} X_j = (0, 1]$.
If the size of an arriving job belongs to $X_j$, we say it is a type-$j$ job.
For each type $j$, we consider a \textit{virtual queue} $\rm{VQ}_j$ 
which contains the type-$j$ jobs waiting in the queue for service.
\end{definition}

As in the finite-type system, given a partition $X$, we can define the probability that a type-$j$ job arrives as 
	$P^{(X)}_j \triangleq \prob{R \in X_j}$, the arrival probability vector as $\bm{P}^{(X)} = (P_1, \cdots, P_J)$, 	and the workload vector as $\rho \bm{P}^{(X)}$. However, under this definition, it is not clear what configurations are feasible, since the 
jobs in the same virtual queue can have different sizes, even though they are 
called of the same type. Hence we make the following definition.

\begin{definition}[Rounded \rm{VQs}]\label{def:rounded}
We call $\rm{VQs}$ ``\textit{upper-rounded $\rm{VQs}$}'', if the sizes of type-$j$ jobs are assumed to be $r_j = \sup{X_j}$, $j=1,\cdots, J$.  
Similarly, we call them ``\textit{lower-rounded $\rm{VQs}$}'', if the sizes of type-$j$ jobs are assumed to be $r_j = \inf{X_j}$, $j=1,\cdots, J$.  
\end{definition}

Given a partition ${X}$, let $\overline{\rho}^\star({X})$ and 
$\underline{\rho}^\star({X})$ be respectively the maximum workload $\lambda/\mu$ 
under which the system with upper-rounded virtual queues and the system 
with the lower-rounded virtual queues can be stabilized. Since these systems have finite types, these quantities can be described by  \dref{eq:region} applied to the corresponding finite-type system with workload vector $\rho \bm{P}^{(X)}$.  

Let also $\overline \rho^\star = \sup_{{X}} \overline \rho^\star({X})$ 
and $\underline \rho^\star = \inf_{{X}} \underline \rho^\star({X})$ 
where the supremum and infimum are over all possible partitions of interval 
$(0,1]$. Next theorem states the result of existence of maximum supportable
workload.

\begin{theorem}\label{thm:OPT}
	Consider any general (continuous or discontinuous) probability 
	distribution of job sizes with cdf $F_R(\cdot)$.
	Then there exists a unique $\rho^\star$ such that 
	$\overline \rho^\star = \underline \rho^\star = \rho^\star$. Further, given any 
	$\rho < \rho^\star$, there is a partition ${X}$ such that the associated 
	upper-rounded virtual queueing system (and hence the original system) can be 
	stabilized. 
\end{theorem}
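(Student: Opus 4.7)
The plan is to establish the two inequalities $\overline{\rho}^\star \leq \underline{\rho}^\star$ and $\overline{\rho}^\star \geq \underline{\rho}^\star$ separately; once these are in place, uniqueness of $\rho^\star$ follows, and the partition-existence claim is immediate from the definition $\overline{\rho}^\star = \sup_{X} \overline{\rho}^\star(X)$.

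For the first inequality, I would first observe that for any fixed partition $X$, the set of integer-feasible upper-rounded configurations is contained in the lower-rounded one, since $\sum_j k_j \overline{r}_j \leq 1$ forces $\sum_j k_j \underline{r}_j \leq 1$ because $\underline{r}_j \leq \overline{r}_j$; invoking \eqref{eq:region} yields $\overline{\rho}^\star(X) \leq \underline{\rho}^\star(X)$. To cross-compare different partitions, I would establish monotonicity under refinement: if $Y$ refines $X$ (with $Y_m \subseteq X_{j(m)}$), then $\overline{\rho}^\star(X) \leq \overline{\rho}^\star(Y)$ and $\underline{\rho}^\star(Y) \leq \underline{\rho}^\star(X)$. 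For the upper-rounded direction, given a $\mathrm{Conv}$-feasible allocation supporting $\rho$ in partition $X$, I would split each type-$j$ mass across its subtypes $Y_m$ with weights $P^{(Y)}_m / P^{(X)}_{j(m)}$. The key observation is that any integer configuration $\mathbf{k}$ in the upper-rounded $X$-system can be relabeled as an integer configuration in the upper-rounded $Y$-system by randomly assigning each type-$j$ block to some subtype $m^\star \in \{m:Y_m\subseteq X_j\}$; since $\overline{r}^Y_{m^\star} \leq \overline{r}^X_{j(m^\star)}$, feasibility is preserved, and averaging over the random relabeling recovers the intended fractional allocation. The lower-rounded direction is symmetric. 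Applying this to a common refinement of $X$ and $X'$ yields $\overline{\rho}^\star(X) \leq \underline{\rho}^\star(X')$ for all partitions $X, X'$, and taking the supremum over $X$ and infimum over $X'$ gives $\overline{\rho}^\star \leq \underline{\rho}^\star$.

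For the reverse inequality, I would fix $\rho < \underline{\rho}^\star$ and construct a partition $X$ with $\overline{\rho}^\star(X) > \rho$. Choose small $\epsilon, \delta > 0$, let $X_0 = (0, \epsilon]$, and subdivide $(\epsilon, 1]$ into intervals with multiplicative mesh $\overline{r}_j \leq (1+\delta)\underline{r}_j$ for $j \geq 1$. By assumption, the lower-rounded system on $X$ supports $\rho$ via some $\mathrm{Conv}$-feasible allocation; for each configuration $\mathbf{k}$ in its support, using $\underline{r}_0 = 0$ gives
\[
\textstyle \sum_j k_j \overline{r}_j \;\leq\; k_0\,\epsilon + (1+\delta)\sum_{j \geq 1} k_j \underline{r}_j \;\leq\; 1 + \delta + k_0\,\epsilon.
\]
The required arrival rate into bucket $0$ is $\rho F_R(\epsilon)$, which vanishes as $\epsilon \to 0$ since $F_R$ is a cdf on $(0,1]$. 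Hence the average number $\mathbb{E}[k_0]$ of small jobs per server in any rate-matching schedule, and consequently the term $\mathbb{E}[k_0]\,\epsilon$, is controllably small. Combining the multiplicative bound with this small-job estimate, one obtains $\overline{\rho}^\star(X) \geq \underline{\rho}^\star(X) \cdot (1 - g(\epsilon, \delta))$ with $g \to 0$ as $\epsilon, \delta \to 0$; integer feasibility is restored by the same randomized-relabeling trick as in the refinement step, and choosing $\epsilon, \delta$ small enough yields $\overline{\rho}^\star(X) > \rho$.

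With $\rho^\star := \overline{\rho}^\star = \underline{\rho}^\star$ established, the partition-existence claim is immediate: for any $\rho < \rho^\star = \sup_X \overline{\rho}^\star(X)$ there is a partition $X$ with $\overline{\rho}^\star(X) > \rho$, and the corresponding upper-rounded finite-type system can be stabilized by standard techniques (e.g.\ MaxWeight~\cite{tassiulas1992stability}); since virtual jobs are at least as large as the true jobs they represent, the same scheduling decisions remain capacity-feasible in the original system. The main obstacle is the second step: in the lower-rounded system, $\underline{r}_0 = 0$ allows $k_0$ to be arbitrarily large in a feasible configuration, whereas upper-rounding caps it at $1/\epsilon$, so the two convex hulls have genuinely different shapes in the small-job direction. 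Translating a lower-rounded schedule into a near-equivalent upper-rounded one while retaining integer feasibility and keeping the throughput loss quantitative in $\epsilon, \delta$ is the technical crux of the argument.
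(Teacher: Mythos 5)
Your Step 1 is correct and takes a genuinely different (and arguably cleaner) route than the paper. Rather than sandwiching $\rho^\star$ directly between $\overline{\rho}^\star(X)$ and $\underline{\rho}^\star(X)$ as the paper sketches, you first establish refinement monotonicity and then use a common refinement to deduce $\overline{\rho}^\star(X)\le\underline{\rho}^\star(X')$ for arbitrary $X,X'$. Both approaches work; the paper's is shorter since it piggybacks on the coupling with the original (unrounded) system, whereas yours is self-contained within the family of rounded systems.

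Step 2 is where the argument breaks. Two separate problems arise, and they are not fixed by the devices you invoke.

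First, a multiplicative mesh $\overline{r}_j\le(1+\delta)\underline{r}_j$ on job sizes does \emph{not} give a $(1+\delta)$-type relation between the lower- and upper-rounded configuration polytopes, because integrality intervenes. Concretely, consider a single large type with $\underline{r}=1/2$ and $\overline{r}=1/2+\delta'$ for arbitrarily small $\delta'>0$: the lower-rounded system admits $k=2$, the upper-rounded system only $k=1$, so the per-type throughput ratio is $2$, not $(1+O(\delta))$, no matter how fine the mesh. This effect occurs whenever an interval of the mesh straddles a reciprocal integer $1/m$, and there is no reason the contribution of such intervals to the workload vanishes with $\delta$ unless one controls the distribution near those points separately (which is exactly what the atom treatment in the paper does). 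Your claimed estimate $\overline{\rho}^\star(X)\ge\underline{\rho}^\star(X)(1-g(\epsilon,\delta))$ therefore does not follow from the mesh bound.

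Second, the small-job bucket is not handled. Bounding $\mathbb{E}[k_0]$ across the mixture by $O(F_R(\epsilon))$ says nothing about the individual configurations in the support of the allocation, some of which may have $k_0$ arbitrarily large (since $\underline{r}_0=0$ imposes no constraint), and for those the blow-up term $k_0\epsilon$ is not small. Scaling a configuration $\mathbf{k}$ down by $1/(1+\delta+k_0\epsilon)$ leaves the integer lattice, and the ``randomized relabeling'' device is a tool for moving between partitions, not for rescaling configurations into a smaller knapsack polytope; the latter is precisely the integrality obstruction above.

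The paper avoids both issues by a different mechanism that is worth noting, because it is the actual crux. It chooses equi-probable quantile intervals $X^{(n)}_i=(\xi^{(n)}_{i-1},\xi^{(n)}_i]$, for which the lower-rounded size of type $i$ equals the upper-rounded size of type $i-1$. After discarding the zero-size type and shifting indices, the lower-rounded feasible set is \emph{identical} to the upper-rounded feasible set minus the single configuration $\mathbf{e}_{m_n}$; and because all per-type arrival probabilities are equal, the two workload vectors are parallel on the shared coordinates. This yields an exact relation
\begin{equation*}
\underline{\rho}^\star(X^{(n)})-\overline{\rho}^\star(X^{(n)})
=\frac{\underline{\rho}^\star(X^{(n)})^2}{L2^{n+1}-\underline{\rho}^\star(X^{(n)})},
\end{equation*}
which vanishes, with atoms peeled off as singleton sets in the general case. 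Your multiplicative-mesh construction lacks both the equal-probability structure (so the workload vectors are not parallel after relabeling) and the singleton treatment of atoms, and that is where the argument would need to be rebuilt.
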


\begin{proof}  
The proof of Theorem~\ref{thm:OPT} has two steps. First, we show that  
$\overline{\rho}^\star({X}) \leq \rho^\star \leq 
\underline{\rho}^\star({X})$ for any partition $X$.
Second, we construct a sequence of partitions, that depend on the job size distribution $F_R$, and become increasingly finer, such that the difference between the two bounds vanishes in the limit. 

Full proof can be found in Appendix~\ref{pr:thm_OPT}.
\end{proof}

Theorem~\ref{thm:OPT} implies that there is a way of mapping the job sizes to a finite number of types using partitions, such that by using finite-type scheduling algorithms, 
the achievable workload approaches the optimal workload as partitions become finer.
However, the construction of the 
partition crucially \textit{relies on the knowledge of the job size
distribution} $F_R$, which may not be readily available in practice.
Further, the number of feasible configurations \textit{grows exponentially} large as the number of 
subsets in the partition increases, which prevents efficient implementation 
of discrete type scheduling policies (e.g. MaxWeight) in practice. 

Next, we focus on low-complexity scheduling algorithms 
that \textit{do not} assume the knowledge of $F_R$ a priori, and can provide 
a fraction of the maximum supportable workload $\rho^\star$.

\section{Best-Fit Based Scheduling}
The \textit{Best-Fit} algorithm was first introduced as a heuristic for \textit{Bin Packing} problem~\cite{garey2}: given a list of objects of various sizes, we are asked to pack them into bins of unit capacity so as to minimize the number of bins used. Under Best-Fit, the objects are processed one by one and each object is placed in the ``tightest'' bin (with the least residual capacity) that can accommodate
the object, otherwise a new bin is used. Theoretical guarantees of Best-Fit in terms of approximation ratio have been extensively studied under discrete and continuous object size distributions~\cite{john,kenyon2002linear,coffman1996approximation}.  

There are several \textit{fundamental} differences between the 
classical bin packing problem and our problem. 
In the bin packing problem, there is an infinite number 
of bins available and once an object is placed in a bin, it remains in the bin forever, 
while in our setting, the number of bins (the equivalent of servers) is fixed, and bins 
have to be reused to serve new objects from the queues as objects depart from the bins, 
and new objects arrive to the queue.
Next, we describe how Best-Fit (\textit{BF}) can be adapted for job scheduling in our setting.

\subsection{\textrm{BF-J/S} Scheduling Algorithm}

Consider the following two adaptations of Best-Fit (BF) for job scheduling:

\begin{itemize}[leftmargin=4mm]
	\item \textbf{BF-J} (\textit{Best-Fit from Job's perspective}): 
	 
	List the jobs in the queue in an arbitrary order (e.g. according to their arrival times). Starting from the first job, each job is placed in the server with the ``least residual capacity'' among the servers that can accommodate it, if possible, otherwise the job remains in the queue. 
	
	\item \textbf{BF-S} (\textit{Best-Fit from Server's perspective}): 
	
	List servers in an arbitrary order (e.g. according to their index). Starting from the first server, each server is filled iteratively by choosing the ``largest-size job'' in the queue that can fit in the server, until no more jobs can fit.
\end{itemize}

\textrm{BF-J} and \textrm{BF-S} need to be performed in every time slot.  	
Under both algorithms, observe that no further job from the queue can be added in any of the servers.
However, these algorithms are not computationally efficient as they both make many redundant searches over the jobs in the queue or over the servers, when there are no new job arrivals to the queue or there are no job departures from some servers. 
Combining both adaptations, we describe the algorithm below which is computationally more efficient. 
\begin{itemize}[leftmargin=4mm]
\item \textbf{BF-\textrm{J/S}} (\textit{Best-Fit from Job's and Server's perspectives}):

It consists of two steps:		
\begin{itemize}[leftmargin=2mm]
	\item [1)] 
	Perform \textrm{BF-S} only over the list of servers that had job 
	departures during the previous time slot. Hence, some jobs that have not been scheduled in the previous time slot or some of
	newly arrived jobs are scheduled in servers.	
	\item [2)] 
	Perform \textrm{BF-J} only over the list of newly arrived jobs that have not been scheduled in the first step. 
\end{itemize}
\end{itemize}

\subsection{Throughput Guarantee}
The following theorem characterizes the maximum supportable workload under \algA{}-\textrm{J/S}.

\begin{theorem}\label{thm:GF1}
	Suppose any job has a minimum size $u$. Algorithm \algA{}-\textrm{J/S} can achieve at least $\frac{1}{2}$ of the maximum supportable workload $\rho^\star$, for any $u > 0$. 
\end{theorem}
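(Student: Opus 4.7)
The plan is to combine the characterization of $\rho^\star$ in Theorem~\ref{thm:OPT} with a Lyapunov drift argument, where the crucial structural input is that BF-J/S always produces a maximal packing of the servers.

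The first step is reduction to a finite-type analysis. Since $\rho^\star=\sup_X \overline{\rho}^\star(X)$, for any $\rho<\rho^\star/2$ there is a partition $X$ of $(0,1]$ with $2\rho<\overline{\rho}^\star(X)$. Rounding job sizes \emph{upward} to the right endpoints of the subsets $X_j$ only shrinks the set of feasible server configurations, so if BF-J/S stabilizes the upper-rounded virtual-queue system at workload $\rho$, it also stabilizes the original continuous-size system. This reduces the problem to a Markov chain on a countable state space with finitely many types $r_1,\dots,r_J\ge u$. On this chain I would take the Lyapunov function $V(\bm{Q})=\sum_j Q_j^2$, for which a routine computation gives
\[
E\bigl[V(\bm{Q}(t{+}1))-V(\bm{Q}(t))\,\bigm|\,\bm{Q}(t),\mathcal{H}(t)\bigr]=2\sum_j Q_j(t)\bigl(\lambda_j-E[D_j(t)\mid\cdot]\bigr)+O(1),
\]
where $D_j(t)$ is the number of type-$j$ jobs placed into servers by BF-J/S at time $t$; the $O(1)$ term is bounded thanks to the per-server occupancy cap $\lfloor 1/u\rfloor$ enforced by $u>0$. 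The objective becomes to show $\langle\bm{Q}(t),E[\bm{D}(t)]\rangle\ge\langle\bm{Q}(t),\bm{\lambda}\rangle+\Omega(\|\bm{Q}(t)\|_1)$ outside a compact set, after which Foster--Lyapunov delivers positive recurrence.

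The heart of the argument is the following structural fact: once the two phases of BF-J/S complete in a slot, the aggregate packing $\bigcup_\ell \mathcal{H}_\ell(t)$ is \emph{maximal}, i.e., no remaining queued job fits in any server. Starting from this maximality I would run a charging argument to compare BF-J/S against a MaxWeight-optimal comparator that realizes the supportable vector in~\dref{eq:region} for the rounded finite-type system: every comparator-placed job absent from BF-J/S's packing is witnessed by a BF-J/S-placed blocker sitting in the corresponding bin, and each blocker is charged at most once. This upgrades the static ``maximal packing covers at least half of any alternative packing'' fact into its weighted analog, which substituted into the drift expression produces the required negative drift whenever $\rho<\overline{\rho}^\star(X)/2$.

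The main obstacle is precisely this charging/comparison step, for two intertwined reasons. First, BF-J/S is \emph{weight-oblivious}: it places jobs purely by residual capacity, never looking at $\bm{Q}$, so the charging has to lift a purely geometric blocking relation to a weighted bound against MaxWeight configurations. Second, BF-J/S only re-optimizes a server once it has had a departure, so at any single instant only a portion of the in-service configuration has been freshly produced by a maximal scheduling step; the drift inequality therefore has to be established over a long enough window that every server's current packing reflects a recent BF-J/S decision, which is where the assumption $u>0$---bounding per-server job counts and hence the expected gap between consecutive departures---enters decisively.
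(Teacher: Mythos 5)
Your proposal takes a genuinely different route from the paper, and unfortunately the central step of your route does not hold.

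The paper does \emph{not} reduce to a finite-type system or compare BF-J/S against a MaxWeight comparator via a charging argument. It works directly on the uncountable state space $(\mathcal{Q}(t),\mathcal{H}(t))$ (represented in a Polish/Skorokhod space), uses the \emph{resource-mass} Lyapunov function $V(t)=\sum_{i\in\mathcal{Q}(t)\cup\mathcal{H}(t)}R_i/\mu$ rather than $\sum_j Q_j^2$, and compares not against the throughput region of~\eqref{eq:region} but against the elementary work-conservation bound $\rho^\star\le L/\bar{R}$. The factor $\frac{1}{2}$ then comes from showing, via a multi-step drift argument (Theorem~1 of~\cite{Foss2004}), that under BF-J/S every server is \emph{more than half full} for almost all of a long window $[t_0,t_0+N_2]$ when the total queued resource is large. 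The key structural case split, which your proposal does not have, is: if the queued mass of jobs with size $\le 1/2$ is large, a maximal packing forces every server to be at least half full; if instead the queued mass of jobs with size $>1/2$ is large, the $u>0$ bound forces servers to empty in geometrically-bounded time, at which moment BF-S grabs the largest queued job (size $>1/2$), keeping the server at least half full. Either way the expected resource departure rate is at least $(1-\epsilon_1)L/2$ per slot, against a maximum of $L$, hence the factor $2$.

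The gap in your proposal is the claimed ``weighted analog of the maximal-packing-covers-half fact.'' That statement is false. Consider one server of capacity $1$, two job types of sizes $0.65$ and $0.4$, and queue-length weights $Q_{0.65}=1$, $Q_{0.4}=100$. The maximal packing $\{0.65\}$ has weight $1$; the MaxWeight configuration $\{0.4,0.4\}$ has weight $200$. BF-J/S is weight-oblivious, so it can sit in the first configuration indefinitely, and no charging scheme pairing blockers with blocked jobs can recover a constant fraction of the weighted optimum. This is precisely why the paper's argument is phrased in terms of \emph{occupied resource}, where half-fullness is the right invariant, rather than in terms of a weighted inner product against a MaxWeight comparator. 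Your finite-type reduction is also unjustified as stated: BF-J/S applied to upper-rounded job sizes makes different placement decisions than BF-J/S on the true sizes, so ``BF-J/S stabilizes the rounded system $\Rightarrow$ BF-J/S stabilizes the original'' does not follow from a coupling; the paper sidesteps this entirely by analyzing the continuous-size chain directly.
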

\begin{proof}
	We present a sketch of the proof here and provide the full proof in Appendix~\ref{pr:thm_GF1}.
	The proof uses Lyapunov analysis for Markov chain  $(\calQ(t),\calH(t))$ whose state includes the jobs in queues and servers and their sizes. 
	The Markov chain can be equivalently represented in a Polish space and we  prove its positive recurrence using a multi-step Lyapunov technique~\cite{Foss2004}
	and properties of \algA{}-\textrm{J/S}. We use a Lyapunov function which is the sum of sizes of all jobs in the system at time $t$. Given that jobs have a minimum size, keeping the total size bounded implies
	the number of jobs is also bounded.
	
	The key argument in the proof is that by using \algA{}-\textrm{J/S} as described, 
	all servers operate in more than ``half full'', most of the time, 
	when the total size of jobs in the queue becomes large. 
	To prove this, we consider two possible cases:
	\begin{itemize}[leftmargin=4mm]
		\item \textit{The total size of jobs in queue with size $\le \frac{1}{2}$ is large:}\\
			In this case, these jobs will be scheduled greedily whenever the server is more than half empty. Hence, the server will always become 
			more than half full until there are no such jobs in the queue.
		\item \textit{The total size of jobs in queue with size $>\frac{1}{2}$ is large:}\\ If at time slot $t$, a job in server is not completed, it will complete its service 
			within the next time slot with probability $\mu$, independently of the other 
			jobs in the server. Given the minimum job size, the number of jobs in a 
			server is bounded so it will certainly empty in 
			a finite time. Once this happens, jobs will be scheduled starting 
			from the largest-size one, and the server will remain more than half full, as long as
			there is a job of size more than $1/2$ to replace it. This step is true because 
			of the way Best-Fit works and \textit{does not} 
			hold for other bin packing algorithms like First-Fit. 
			\end{itemize}
See the full proof in Appendix~\ref{pr:thm_GF1}.
\end{proof}

\section{Partition Based Scheduling}

\algA{}-\textrm{J/S} demonstrated an algorithm that can achieve at least half of the maximum workload $\rho^\star$, without relying on any partitioning of jobs into types. In this section, we propose partition based scheduling algorithms that can provably
achieve a larger fraction of the maximum workload $\rho^\star$, using a \textit{universal partitioning} into a small number of types, without the knowledge of job size distribution $F_R$. 

\subsection{Universal Partition and Associated Virtual Queues}
Consider a partition of the interval $(1/{2^J},1]$ into the following $2J$ subintervals: 
\begin{equation}\label{eq:VQAintervals}
\begin{aligned}
&I_{2m} = \Big(\frac{2}{3} \frac{1}{2^m}, \frac{1}{2^m}\Big],\ m=0,\cdots, J-1 \\
&I_{2m+1} = \Big(\frac{1}{2} \frac{1}{2^m}, \frac{2}{3} \frac{1}{2^m}\Big] ,\ m=0,\cdots, J-1.
\end{aligned}
\end{equation}
We refer to this partition as partition $I$, where $J>1$ is a fixed parameter to be determined shortly. The odd and even subintervals in $I$ are geometrically shrinking. Figure~\ref{fig:intrvl} gives a visualization of this partition.
\begin{figure}
	\centering
	\includegraphics[width=0.95\columnwidth]{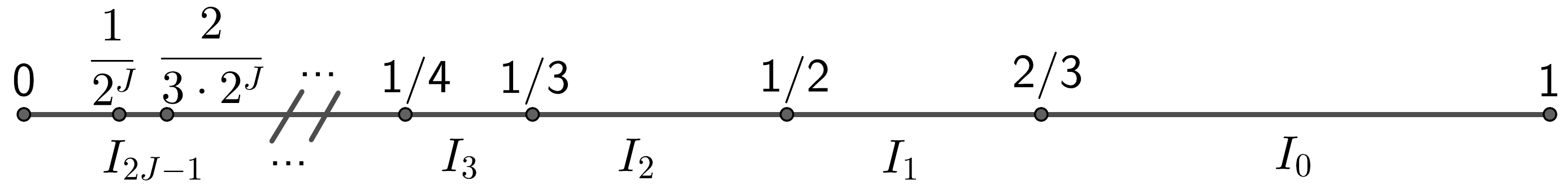}
	\caption{Partition $I$ of interval $(1/{2^J}, 1]$ based on \dref{eq:VQAintervals}.}
	\label{fig:intrvl}
\end{figure}

Jobs in queue are divided among virtual queues (Definition~\ref{def:virtual}) 
according to partition $I$. Specifically,
when the size of a job falls in the subinterval 
$I_j$, $j=0,\cdots,2J-1$, we say this job is of type $j$ and it is placed 
in a virtual queue $\mathrm{VQ}_j$, \textit{without rounding its size}. Moreover, 
jobs whose sizes fall in  $(0, 1/2^J]$ are placed in the last virtual 
queue $\mathrm{VQ}_{2J-1}$, and their sizes are rounded up to $1/2^J$.

We use $Q_{j}(t)$ to denote the size (cardinality) of $\mathrm{VQ}_j$ 
at time $t$ and use $\mathbf{Q}(t)$ to denote the vector of all VQ sizes.

\subsection{\textrm{VQS} (\textit{Virtual Queue Scheduling}) Algorithm}
To describe the \textrm{VQS} algorithm, we define the following reduced set of configurations 
which are feasible for the system of upper-rounded $\mVQ$s (Definition~\ref{def:rounded}) 
\begin{definition}[Reduced feasible configuration set] \label{def:reduced}
	The \textit{reduced feasible configuration set}, denoted by ${\mathcal{K}}^{(J)}_{RED}$, consists of the following $4J-4$ configurations: 
	\begin{equation}\label{eq:configs}
	\begin{aligned}
	2^m \mathbf{e}_{2m}, \quad &m = 0, \cdots, J-1 \\
	3 \cdot 2^{m-1} \mathbf{e}_{2m+1}, \quad &m = 1, \cdots, J-1 \\
	\mathbf{e}_{1} + \lfloor 2^m/3 \rfloor \mathbf{e}_{2m}, 
	\quad &m = 2, \cdots, J-1 \\
	\mathbf{e}_{1} + 2^{m-1} \mathbf{e}_{2m+1}, 
	\quad &m = 1, \cdots, J-1 \\
	\end{aligned}
	\end{equation}
	where $\mathbf{e}_{j} \in \mathbb{Z}^{2J}$ denotes the basis vector 
	with a single job of type $j$, $j=0,\cdots,2J-1$, and zero jobs of any other types.
\end{definition} 
Note that each  configuration $\mathbf{k}=(k_0,\cdots,k_{2J-1}) \in {\mathcal{K}}^{(J)}_{RED}$ either contains jobs from only one $\mVQ_j$, $j=0,\cdots, 2J-1$, or contains   
jobs from $\mVQ_1$ and one other $\mVQ_j$.

The ``\textit{VQS algorithm}'' consists of two steps: (1) setting active configuration, and (2) job scheduling using the active configuration:
\begin{itemize}[leftmargin=5mm]
	\item [1.]\textit{Setting active configuration}: 
	
	Under $\mathrm{VQS}$, every server $\ell \in \calL$ has an \textit{active 
		configuration} $\mathbf{k}^\ell (t) \in {\mathcal{K}}^{(J)}_{RED}$ which is renewed only when the server becomes empty.
	Suppose time slot $\tau^{\ell}_i$ is the $i$-th time that server $\ell$ is empty (i.e., it has been empty or all its jobs depart during this time slot). At this time, the configuration of server $\ell$ is set to the max weight configuration among the configurations of $\mathcal{K}^{(J)}_{RED}$ (Definitions~\ref{def:weight} and \ref{def:reduced}), i.e., 
	\be
	\mathbf{k}^\star (\tau^{\ell}_i)= 
	\argmax_{\mathbf{k} \in \mathcal{K}_{RED}^{(J)}}
	{\langle \mathbf{k}, \mathbf{Q}(\tau^{\ell}_i) \rangle} =\argmax_{\mathbf{k} \in \mathcal{K}_{RED}^{(J)}}
		\sum_{j=0}^{2J-1} k_{j} Q_{j}.
	\ee   
The active configuration remains fixed until the next time $\tau^{\ell}_{i+1}$ that the server becomes empty gain, i.e., 
\be
	\mathbf{k}^{\ell}(t)=\mathbf{k}^\star (\tau^{\ell}_i),\ \tau^{\ell}_i \leq t < \tau^{\ell}_{i+1}.
\ee  
	\item [2.] \textit{Job scheduling:}
	
	Suppose the active configuration of server $\ell$ at time $t$ is $\mathbf{k}\in {\mathcal{K}}^{(J)}_{RED}$. Then the server  schedules jobs as follows:
	\begin{enumerate}[leftmargin=3mm]
		\item [(i)] 
		If $k_{1}=1$, the server reserves $2/3$ of its capacity for 
		serving jobs from $\mVQ_1$, so it can serve at most one job
		of type $1$ at any time. 
		If there is no such job in the server already, it schedules one from $\mVQ_1$.
		\item [(ii)] Any configuration $\mathbf{k} \in {\mathcal{K}}^{(J)}_{RED}$ 
		has at most one $k_{j} > 0$ other than $k_{1}$. 
		The server will 
		schedule jobs from the corresponding $\mVQ_j$, starting from the 
		head-of-the-line job in $\mVQ_j$, until no more jobs can fit in the server. 
		The actual number of jobs scheduled from $\mVQ_j$ in the server could 
		be more than $k_{j}$ depending on their actual sizes.
	\end{enumerate}   
\end{itemize}  
\begin{remark}
The reason for choosing times $\tau_i^{\ell}$ to renew the configuration of server $\ell$ is to \textit{avoid possible preemption} of existing jobs in server (similar to~\cite{magsriyi12,Siva14}). Also note that active configurations in ${\mathcal{K}}^{(J)}_{RED}$ are based on upper-rounded $\mVQ$s. Since jobs \textit{are not} actually rounded in $\mVQ$s, the algorithm can schedule more jobs than what specified in the configuration.
\end{remark}

\subsection{Throughput Guarantee}
The {\algB} algorithm can provide a stronger throughput guarantee than \textrm{BF-J/S}. 
A key step to establish the throughput guarantee is related
to the property of configurations in the set $\mathcal{K}^{(J)}_{RED}$, which is stated below.
\begin{proposition}\label{prop:2/3} 
	Consider any partition $X$ which is a refinement of partition $I$, 
	i.e., any subset of $X$ is contained in an interval $I_j$ in \dref{eq:VQAintervals}. 
	Given any set of jobs with sizes in $(1/2^J,1]$ in the queue, let $\bm{Q}$ and $\mathbf{Q}^{(X)}$ be the corresponding vector of $\mVQ$ sizes under partition $I$ and partition $X$. Then there is a configuration $\mathbf{k} \in \mathcal{K}_{RED}^{(J)}$ such that
    \begin{equation}\label{eq:23_opt}
	\langle {\mathbf{k}}, \mathbf{Q}\rangle\ge \frac{2}{3}
	\langle {\mathbf{k}}^{(X)}, \mathbf{Q}^{(X)} \rangle,\ \forall {\mathbf{k}}^{(X)} \in {\mathcal{K}}^{(X)},
	\end{equation}
where ${\mathcal{K}}^{(X)}$ is the set of ``all'' feasible configurations based on  \textit{upper-rounded} $\mVQ$s for partition $X$. 
\end{proposition}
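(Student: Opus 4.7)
The plan is to reduce the statement to a combinatorial inequality in the discrete interval structure of partition $I$, and then verify it by a case analysis driven by how much of the server's ``large'' (above $1/2$) capacity is occupied by the partition-$X$ configuration.

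First, for any feasible $\mathbf{k}^{(X)}\in\mathcal{K}^{(X)}$, I would group its jobs by which interval $I_j$ of partition $I$ contains the partition-$X$ subset $X_i$ of their type, and let $n_j$ denote the count of jobs in the $j$-th group. Since $X_i\subseteq I_j$ implies $Q_i^{(X)}\le Q_j$, this gives
\[
\langle \mathbf{k}^{(X)},\mathbf{Q}^{(X)}\rangle \;\le\; \sum_j n_j Q_j.
\]
The feasibility condition $\sum_i k_i^{(X)}\sup X_i\le 1$, together with the strict inequality $\sup X_i>\inf I_{j(i)}$ (which holds because each $I_j$ is open at its lower endpoint), yields $\sum_j n_j\inf I_j<1$. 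Since $\inf I_0=2/3$ and $\inf I_1=1/2$, this forces $n_0,n_1\in\{0,1\}$ and $n_0+n_1\le 1$. It therefore suffices to show: for every non-negative integer vector $(n_j)$ satisfying these constraints, some $\mathbf{k}\in\mathcal{K}_{RED}^{(J)}$ satisfies $\langle \mathbf{k},\mathbf{Q}\rangle\ge(2/3)\sum_j n_j Q_j$.

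Second, I would split into three cases by $(n_0,n_1)\in\{(0,0),(1,0),(0,1)\}$ and apply the elementary identity $\max(a,b)\ge\tfrac{2}{3}a+\tfrac{1}{3}b$ throughout. In Case $(0,0)$, the LP relaxation of $\sum_{j\ge 2} n_j Q_j$ subject to $\sum_{j\ge 2}n_j\inf I_j\le 1$ gives $\max_{j\ge 2}Q_j/\inf I_j$, and the single-type reduced configurations $2^m\mathbf{e}_{2m}$ and $3\cdot 2^{m-1}\mathbf{e}_{2m+1}$ attain exactly $2/3$ and $3/4$ of the respective bounds. In Case $(1,0)$, the remaining capacity is at most $1/3$, and comparing $\mathbf{e}_0$ against $2^m\mathbf{e}_{2m}$ (or its odd analogue) via the $\max$-inequality closes both subcases. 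In Case $(0,1)$, where the remaining capacity is at most $1/2$, I would compare the paired configuration $\mathbf{e}_1+\lfloor 2^m/3\rfloor\mathbf{e}_{2m}$ (for $m\ge 2$) or $\mathbf{e}_1+2^{m-1}\mathbf{e}_{2m+1}$ (for $m\ge 1$) against the corresponding single-type configuration; after applying $\max\ge\tfrac{2}{3}(\cdot)+\tfrac{1}{3}(\cdot)$, the even subcase reduces to the arithmetic inequality $\lfloor 2^m/3\rfloor\ge 2^{m-2}$, which I would verify directly for $m=2,3$ and deduce from $2^m/3-1\ge 2^{m-2}$ when $2^m\ge 12$; the odd subcase reduces to the trivial $5/3\ge 4/3$.

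Third, the main obstacle is the boundary subcase of Case $(0,1)$ at $m=1$, since $\mathbf{e}_1+\lfloor 2/3\rfloor\mathbf{e}_2=\mathbf{e}_1$ is not an element of $\mathcal{K}_{RED}^{(J)}$. To handle this, I would replace the loose LP bound $n_2<3/2$ by the sharper integer bound $n_2\le 1$, so that the target becomes $(2/3)(Q_1+Q_2)$, and compare $2\mathbf{e}_2$ against $\mathbf{e}_1+\mathbf{e}_3$. Using $Q_3\ge 0$, this reduces to $\max(Q_1,2Q_2)\ge(2/3)(Q_1+Q_2)$, which is an immediate instance of the $\max$-inequality. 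This completes the case analysis and hence the proposition.
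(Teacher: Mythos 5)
Your approach is genuinely different from the paper's and much of it works: the reduction to integer vectors $(n_j)$ with $\sum_j n_j\inf I_j<1$ via the bound $Q_i^{(X)}\le Q_j$ is clean and correct; the constraints $n_0,n_1\in\{0,1\}$, $n_0+n_1\le 1$ follow; and the LP relaxation combined with the inequality $\max(a,b)\ge\tfrac{2}{3}a+\tfrac{1}{3}b$ does close Cases $(0,0)$, $(1,0)$, and all sub-subcases of Case $(0,1)$ in which the LP maximizer $j^\star=\argmax_{j\ge2}Q_j/\inf I_j$ satisfies $j^\star\ge 3$. (By contrast, the paper works directly with $U=\langle\mathbf{k}^{(X)},\mathbf{Q}^{(X)}\rangle$, cases on $\sum_{i\in Z_1}k_i^{(X)}$, and a further trichotomy on the size of $Q_1$ relative to $U$.)

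However, your fix for the boundary sub-subcase $j^\star=2$ of Case $(0,1)$ has a genuine gap. Replacing the fractional bound $n_2<3/2$ by the integer bound $n_2\le 1$ does \emph{not} make the target $(2/3)(Q_1+Q_2)$: after placing one type-$1$ job and one type-$2$ job, roughly $1-1/2-1/3=1/6$ of the capacity remains, which can hold jobs of smaller types (e.g. one job of type $5$, since $\inf I_5=1/8<1/6$). Thus $S=\sum_{j\ge 2}n_j Q_j$ can strictly exceed $Q_2$, and your final comparison $\max(Q_1,2Q_2)\ge(2/3)(Q_1+Q_2)$ does not establish $\max_{\mathbf{k}\in\mathcal{K}_{RED}^{(J)}}\langle\mathbf{k},\mathbf{Q}\rangle\ge(2/3)(Q_1+S)$. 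Concretely, take $Q_1=16$, $Q_2=8$, $Q_5=2$, all other $Q_j=0$; then $Q_2/\inf I_2=24>16=Q_5/\inf I_5$, so $j^\star=2$, while $n_1=n_2=n_5=1$ is feasible (sizes just above $1/2+1/3+1/8=23/24<1$), giving $S=10$ and a true target of $(2/3)(16+10)=52/3\approx 17.33$. Your two candidates give $\max(Q_1+Q_3,\,2Q_2)=\max(16,16)=16<52/3$; only a configuration touching type $5$, e.g.\ $\mathbf{e}_1+2\mathbf{e}_5$ with value $20$, saves the day, and your case analysis never considers it. The argument can be repaired by splitting on $n_2\in\{0,1\}$ \emph{inside} Case $(0,1)$: write $S=n_2Q_2+S'$, bound $S'$ by the LP over $j\ge 3$ with residual capacity ($<1/2$ if $n_2=0$, $<1/6$ if $n_2=1$), and pair $2\mathbf{e}_2$ with the configuration $\mathbf{e}_1+\lfloor 2^m/3\rfloor\mathbf{e}_{2m}$ or $\mathbf{e}_1+2^{m-1}\mathbf{e}_{2m+1}$ for the secondary maximizer $j'\ge 3$ — which exists for all $j'\ge 3$. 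This is in spirit what the paper's Case 2.2 does when it introduces $U'=U-Q_1-Q_2$ after the sub-case $\sum_{i\in Z_2}k_i^{(X)}=1$.
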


\begin{proof}
	
	For simplicity of description, consider $X$ to be a partition of 
	$(1/2^J, 1]$ into $N$ subintervals $(\xi_{i-1}, \xi_{i}]$, $i=1,\cdots, N$.
	The proof arguments are applicable to any other types of subsets of
	$(1/2^J, 1]$ as long as each subset is contained 
	in an interval $I_j$ in \dref{eq:VQAintervals}.

	Given the proposition's assumption, we can define sets $Z_j$,
	$j = 0,\cdots, 2J-1$, such that 
	$i \in Z_{j}$ iff $\xi_{i} \in I_j$.
	Any job in $\mVQ^{(X)}_i$, $i \in Z_{j}$, under partition $X$, belongs to $\mVQ_j$ under partition $I$, therefore 
	\begin{equation}\label{eq:Z}
	\sum_{i \in Z_{j}} Q_i^{(X)} = Q_{j}.
	\end{equation}
	
	Let $\langle {\mathbf{k}}^{(X)}, \mathbf{Q}^{(X)} \rangle = U$. 
	Note that in any feasible configuration 
	${\mathbf{k}}^{(X)} \in {\mathcal{K}}^{(X)}$, $\sum_{i \in Z_{1}} {k}^{(X)}_{i}$ can be $0$ or $1$. 
	To show \dref{eq:23_opt}, we consider these two cases separately:
	
	\noindent \textbf{Case 1.} $\sum_{i \in Z_{1}} {k}^{(X)}_{i} = 0$:
	
	\noindent We claim at least one of the following inequalities is true
	\begin{equation}\label{eq:Q_U}
	\begin{aligned}
	Q_{2m} \ge 2U/3 \times 1/2^{m}, \quad m=0,\cdots, J-1 \\
	Q_{2m+1} \ge U/2 \times 1/2^{m}, \quad m=1,\cdots, J-1
	\end{aligned}
	\end{equation}
	
	If the claim is not true, we reach a contradiction because
	\ben
	\begin{aligned}
		&U = 
		\sum_{m=0}^{J-1} \sum_{i_0 \in Z_{2m}} {k}_{i_0}^{(X)} Q_{i_0}^{(X)} +
		\sum_{m=1}^{J-1} \sum_{i_1 \in Z_{2m+1}} {k}_{i_1}^{(X)} Q_{i_1}^{(X)} \stackrel{(a)}{<} \\
		&\Big( \sum_{m=0}^{J-1} \sum_{i_0 \in Z_{2m}} {k}_{i_0}^{(X)} \frac{2}{3}\frac{1}{2^m} +
		\sum_{m=1}^{J-1} \sum_{i_1 \in Z_{2m+1}} {k}_{i_1}^{(X)} \frac{1}{2} \frac{1}{2^{m}} 
		\Big) U \stackrel{(b)}{<}\\
		&\Big(\sum_{m=0}^{J-1} \sum_{i_0 \in Z_{2m}} {k}_{i_0}^{(X)} \xi_{i_0} + 
		\sum_{m=1}^{J-1} \sum_{i_1 \in Z_{2m+1}} {k}_{i_1}^{(X)} \xi_{i_1} 	\Big) U	\stackrel{(c)}{\le} 1 \times U \label{eq:contradiction1},
	\end{aligned}
	\een
	where $(a)$ is due to the assumption that none of inequalities in \dref{eq:Q_U} hold and using the fact that $Q_{i}^{(X)} \le Q_{j}$ if $i \in Z_{j}$, $(b)$ is due to
	the fact $\xi_{i} > \inf{I_j}$ if $i \in Z_j$, 
	and $(c)$ is due to the server's capacity constraint for feasible configuration ${\mathbf{k}}^{(X)}$.
	
	Hence, one of the inequalities in \dref{eq:Q_U} must be true. 
	If $Q_{2m} \ge 2U/3 \times 1/2^{m}$ for some $m =0, \cdots, J-1$,
	then~(\ref{eq:23_opt}) is true for configuration
	${\mathbf{k}} = 2^m \mathbf{e}_{2m}$,
	while if $Q_{2m+1} \ge U/2 \times 1/2^{m}$ for some $m = 1, \cdots, J-1$,
	then (\ref{eq:23_opt}) is true for configuration
	${\mathbf{k}} = 3 \cdot 2^{m-1} \mathbf{e}_{2m+1}$.

	\noindent \textbf{Case 2.} $\sum_{i \in Z_{1}} {k}^{(X)}_{i} = 1$:\\
	In this case $\sum_{i \in Z_{0}} {k}^{(X)}_i = 0$.
	We further distinguish three cases for $Q_{1}$ compared to $U$: 
	$Q_{1} \ge \frac{2U}{3}$, $\frac{2U}{3} > Q_{1} \ge \frac{U}{2}$, and $\frac{U}{2} > Q_{1}$.
	In the second case, we further consider two subcases depending on 
	$\sum_{i \in Z_{2}} k_{i}^{(X)}$ being $0$ or $1$.
	Here we present the analysis of the case $\frac{2U}{3} > Q_{1} \ge \frac{U}{2}$,  
	$\sum_{i \in Z_{2}} k_{i}^{(X)}=0$. 
	The rest of the cases are either trivial or follow a similar argument
	and can be found in Appendix~\ref{pr:prop_2/3}.
	
	Let $U^\prime := U - Q_{1}$, then one of the following inequalities has to be true
	\begin{equation}\label{eq:Q_Up}
	\begin{aligned}
	Q_{2m} \ge U^\prime/(3 \cdot 2^{m-2}),\ m = 2, \cdots J-1\\
	Q_{2m+1} \ge U^\prime/(3 \cdot 2^{m-1}),\ m = 1, \cdots J-1,\\
	\end{aligned}
	\end{equation}
	otherwise, we reach a contradiction, similar to Case 1, i.e.,
	\ben
	\label{eq:contradiction221}
	\begin{aligned}
		&U^\prime =
		\sum_{m=2}^{J-1} \sum_{i_0 \in Z_{2m}} {k}_{i_0}^{(X)} Q_{i_0}^{(X)} +
		\sum_{m=1}^{J-1} \sum_{i_1 \in Z_{2m+1}} {k}_{i_1}^{(X)} Q_{i_1}^{(X)} \stackrel{(a)}{<}\\
		& 2 U^\prime \Big( \sum_{m=2}^{J-1} \sum_{i_0 \in Z_{2m}}{k}_{i_0}^{(X)} \frac{2}{3} \frac{1}{2^m} +
		\sum_{m=1}^{J-1}\sum_{i_1 \in Z_{2m+1}} {k}_{i_1}^{(X)} \frac{1}{3} \frac{1}{2^m} \Big) \stackrel{(b)}{<} U^\prime
	\end{aligned}
	\een
	where $(a)$ is due to the assumption that none of inequalities in 
	\dref{eq:Q_Up} hold, and $(b)$ is due to the constraint that the jobs
	in the configuration $\mathbf{k}^{(X)}$, other than the job types in $Z_1$, 
	should fit in a space of at most $1/2$ (the rest is occupied by a job of size at least $1/2$). 
	It is then easy to verify that if $Q_{2m} \ge U^\prime/(3 \cdot 2^{m-2})$
	for some $m \in [2, \cdots J-1]$ then inequality \dref{eq:23_opt} is true
	for configuration $\mathbf{e}_{1} + \lfloor 2^m/3 \rfloor \mathbf{e}_{2m}$ as
	
	\begin{equation}
	\begin{aligned}
	\langle \mathbf{k}, \mathbf{Q} \rangle &=
	Q_{1} + \lfloor 2^m/3 \rfloor Q_{2m} \ge
	Q_{1} + 2^{m-2} Q_{2m} \\
	&\ge Q_{1} + U^\prime/3 \ge 2Q_{1}/3 + U/3 \ge 2U/3
	\end{aligned}
	\end{equation}
	
	Similarly if $Q_{2m+1} \ge U^\prime/(3 \cdot 2^{m-1})$
	for some $m \in [1, \cdots J-1]$ then inequality \dref{eq:23_opt} is true
	for configuration $\mathbf{e}_{1} + 2^{m-1} \mathbf{e}_{2m+1}$ as
	
	\begin{equation}
	\begin{aligned}
	\langle \mathbf{k}, \mathbf{Q} \rangle &= 
	Q_{1} + 2^{m-1} Q_{2m+1} \\
	& \ge Q_{1} + U^\prime/3 \ge 2Q_{1}/3 + U/3 \ge 2U/3
	\end{aligned}
	\end{equation}
\end{proof}

The following theorem states the result regarding throughput of \algB{}.
\begin{theorem}\label{thm:VQS1}
	\algB{} achieves at least $\frac{2}{3}$ of the optimal workload $\rho^\star$, 
    if arriving jobs have a minimum size of at least $1/2^J$.
\end{theorem}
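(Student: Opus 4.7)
The plan is to show positive recurrence of the Markov chain tracking the virtual-queue sizes (together with server contents) via a quadratic Lyapunov function, using Proposition~\ref{prop:2/3} to transfer the throughput guarantee from an optimal fine-partition scheme to the reduced configuration set actually used by \algB. Fix any workload $\rho<\tfrac{2}{3}\rho^\star$, and write $\rho=\tfrac{2}{3}\rho^\star-3\eta$ for some $\eta>0$. By Theorem~\ref{thm:OPT}, there is a partition $X$ of $(1/2^J,1]$ (which, by further refinement if necessary, I may take to refine $I$) such that the upper-rounded virtual-queue system associated with $X$ stabilizes the workload $\rho':=\rho^\star-\eta$. Equivalently, for each server $\ell$ there is a distribution $\{\alpha^\ell_{\mathbf{k}}:\mathbf{k}\in\mathcal{K}^{(X)}\}$ over feasible configurations such that
\begin{equation*}
\rho' P^{(X)}_i \;\le\; \mu \sum_{\ell\in\calL}\sum_{\mathbf{k}\in\mathcal{K}^{(X)}}\alpha^\ell_{\mathbf{k}}\, k_i,\quad\text{for all types } i.
\end{equation*}

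Using Definition~\ref{def:virtual} aggregation (equation \dref{eq:Z}), the same inequality carries over to the $\mVQ$-sizes under partition $I$ in the aggregated form. Now take the Lyapunov function $V(t)=\sum_{j=0}^{2J-1}Q_j(t)^2$ and consider its drift over a \emph{server-epoch}, namely the interval $[\tau^\ell_i,\tau^\ell_{i+1})$ between two consecutive empty times of server $\ell$. Within such an epoch the active configuration of server $\ell$ is frozen at $\mathbf{k}^\star(\tau^\ell_i)=\arg\max_{\mathbf{k}\in\mathcal{K}^{(J)}_{RED}}\langle\mathbf{k},\mathbf{Q}(\tau^\ell_i)\rangle$, which by Proposition~\ref{prop:2/3} satisfies
\begin{equation*}
\langle\mathbf{k}^\star(\tau^\ell_i),\mathbf{Q}(\tau^\ell_i)\rangle \;\ge\;\tfrac{2}{3}\max_{\mathbf{k}^{(X)}\in\mathcal{K}^{(X)}}\langle\mathbf{k}^{(X)},\mathbf{Q}^{(X)}(\tau^\ell_i)\rangle\;\ge\;\tfrac{2}{3}\sum_{\mathbf{k}}\alpha^\ell_{\mathbf{k}}\langle\mathbf{k},\mathbf{Q}^{(X)}(\tau^\ell_i)\rangle.
\end{equation*}
Combining across servers and using that \algB{} may actually schedule \emph{more} jobs per type than specified by $\mathbf{k}^\star$ (so the service rate vector dominates $\mu\mathbf{k}^\star$), the one-step conditional drift of $V$ is, up to bounded additive noise from arrivals and epoch boundary effects,
\begin{equation*}
\EE\bigl[V(t+1)-V(t)\mid\mathbf{Q}(t)\bigr]\;\le\;-\,2\bigl(\tfrac{2}{3}\rho'-\rho\bigr)\sum_{j}Q_j(t)+B\;=\;-4\eta\sum_{j}Q_j(t)+B,
\end{equation*}
which is negative outside a bounded set and yields positive recurrence by the Foster--Lyapunov criterion, and thus stability.

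The main technical obstacle is that the active configuration is \emph{frozen between empty times}, so the one-slot drift bound above is not literally valid at intermediate slots; I would address this with a multi-step Lyapunov argument in the style of~\cite{Siva14,Foss2004}, which is already being used in the proof of Theorem~\ref{thm:GF1}. Concretely, since the minimum job size is $1/2^J$ each server holds at most $2^J$ jobs, each departing independently with rate $\mu$, so the inter-empty-time $\tau^\ell_{i+1}-\tau^\ell_i$ has a geometric tail with bounded mean; a standard fluid-limit or $T$-step drift argument then smears the configuration-refresh drift over the epoch and produces the same negative drift up to constants, and the Polish-space recurrence framework from the proof of Theorem~\ref{thm:GF1} (extended to the product of VQs and server contents) carries through verbatim. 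A secondary but routine issue is that the VQS schedules from actual (not upper-rounded) job sizes, so jobs may fit even when the nominal configuration is saturated; this only \emph{helps} the drift and is handled by the same domination argument used to pass from $\mathbf{k}^\star$ to the true service rates. Full details follow the template of~\cite{Siva14,PG17}, with Proposition~\ref{prop:2/3} supplying the $2/3$-approximation that replaces the exact MaxWeight guarantee.
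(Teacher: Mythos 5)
Your high-level plan matches the paper's: a quadratic Lyapunov function, the Foss--Konstantopoulos multi-step drift criterion (the same one used in Theorem~\ref{thm:GF1}), and Proposition~\ref{prop:2/3} to replace exact MaxWeight over the full configuration set with the $4J-4$ reduced configurations at a cost of $\tfrac{2}{3}$.

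There is, however, a genuine gap at the sentence ``Using Definition~\ref{def:virtual} aggregation (equation~\dref{eq:Z}), the same inequality carries over to the $\mVQ$-sizes under partition $I$ in the aggregated form.'' Equation~\dref{eq:Z} aggregates queue \emph{counts}, $\sum_{i\in Z_j}Q_i^{(X)}=Q_j$, but this does not make $\langle \bm{P}^{(X)},\mathbf{Q}^{(X)}\rangle$ equal to $\langle \bm{P}^{(I)},\mathbf{Q}^{(I)}\rangle$: with, say, $Z_j=\{1,2\}$, $P^{(X)}=(0.1,0.2)$ and $Q^{(X)}=(10,5)$ one gets $2$ on the $X$-side but $0.3\cdot 15 = 4.5$ on the $I$-side. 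The drift you want to control lives at partition $I$ (that is the queue vector the scheduler sees and the arrival rate is $\lambda\bm{P}^{(I)}$), whereas your fine-partition stability certificate is for $\bm{P}^{(X)}$ and $\mathbf{Q}^{(X)}$. The paper bridges this in the lemma preceding the drift computation by constructing the \emph{concentrated} vectors $\underline{\bm P}^{(X)}$ and $\underline{\mathbf Q}^{(X)}$, which push all mass in each $I_j$ onto the bottom index $\underline{i}_j$ of the refinement, observing that shrinking job sizes can only preserve stability (so $\rho\underline{\bm P}^{(X)}$ is still supportable), and noting that $\underline{\bm P}^{(X)},\underline{\mathbf Q}^{(X)}$ coincide with $\bm{P}^{(I)},\mathbf{Q}^{(I)}$ up to zero padding. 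Without this device your chain of inequalities does not close. A second, softer gap is that the ``smearing'' of the frozen-configuration drift over a server epoch is not routine: the paper's Lemma (the one bounding $\mathds{P}(E_{S(t_0),N_1,N_2,\gamma})$) must simultaneously control the geometric tail of the time-to-empty and the Euclidean drift of $\mathbf{Q}(t)$ away from $\mathbf{Q}(\tau^\ell_i)$, so that the max-weight configuration chosen at the epoch start remains within a $\gamma$-factor of optimal throughout the epoch; a generic ``fluid-limit'' invocation does not supply the constants $B_\gamma, B_\alpha, B_\beta$ that actually make the drift negative. Finally, a minor slip: with your parametrization $\tfrac{2}{3}\rho'-\rho=\tfrac{7}{3}\eta$, not $2\eta$, so the stated coefficient $-4\eta$ is off (though this is cosmetic).
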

\begin{proof}
	The proof uses Proposition~\ref{prop:2/3} and multi-step Lyapunov technique (Theorem 1 of~\cite{Foss2004}), 
	The proof can be found in Appendix~\ref{pr:thm_VQS1}.
\end{proof}

Hence, given a minimum job's resource requirement $u>0$, $J$ has to be chosen larger than $\log_2 (1/u)$ in the VQS algorithm. Theorem~\ref{thm:VQS1} is not trivial as it implies that 
by scheduling under the configurations in ${\mathcal{K}}^{(J)}_{RED}$ (\ref{eq:configs}), 
on average 
at most $1/3$ of each server's capacity will be underutilized because of 
capacity fragmentations, \textit{irrespective} of the job size distribution $F_R$. Moreover, using ${\mathcal{K}}^{(J)}_{RED}$ reduces the search space from $O(\mathrm{Exp}(J))$ configurations to only $4J-4$ configurations, while still guaranteeing $2/3$ of the optimal workload $\rho^\star$.      

A natural and less dense partition could be to only consider the cuts at points 
$1/2^j$ for $j=0, \cdots, J$. This creates a partition consisting of  $J$ 
subintervals $\widetilde{I}_j=I_{2j} \cup I_{2j+1}$. 
The convex hull of only the first $J$ configurations of $\mathcal{K}^{(J)}_{RED}$
contains all feasible configurations of this partition.
Using arguments similar to proof of Theorem~\ref{thm:VQS1}, we can show that 
this partition can only achieve $1/2$ of the optimal workload $\rho^\star$.
One might conjecture that by refining partition $I$ \dref{eq:VQAintervals} 
or using different partitions, we can achieve a fraction larger than $2/3$ of the optimal workload $\rho^\star$; 
however, if the partition is agnostic to the job size distribution $F_R$, 
refining the partition or using other partitions \textit{does not} help. 
We state the result in the following Proposition.

\begin{proposition}\label{prop:2/3opt}
	Consider any partition $X$ consisting of a finite number of disjoint sets $X_j$, $\cup_{j=1}^N X_j = (0, 1]$. 
	Any scheduling algorithm that maps the sizes of jobs in $X_j$ to $r_j= \sup {X_j}$ (i.e., schedules based on upper-rounded $\mVQ$s) 
	cannot achieve more than $2/3$ of the optimal workload $\rho^\star$ for all $F_R$. 
\end{proposition}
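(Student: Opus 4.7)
The plan is to construct, for any given partition $X=\{X_1,\ldots,X_N\}$ of $(0,1]$, an adversarial point-mass distribution $F_R$ that exhibits at least a $1/3$ throughput gap between the true maximum supportable workload $\rho^\star(F_R)$ and the best workload any upper-rounded scheduler based on $X$ can stabilize.

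First I would identify a ``small-jobs'' subset inside the partition. Since $\bigcup_j X_j = (0,1]$ and the number of subsets is finite, a pigeonhole argument forces at least one of them, call it $X_\star$, to contain elements arbitrarily close to $0$, so that $\inf X_\star = 0$. Writing $r_\star := \sup X_\star$ and $K := \lfloor 1/r_\star\rfloor \ge 1$, I then select $u \in X_\star$ with $u \le 1/\lceil 3K/2\rceil$ (possible because $\inf X_\star = 0$), and take $F_R$ to be the point mass at $u$, so that every job has size exactly $u$.

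The comparison of workloads then boils down to two applications of \eqref{eq:region} to single-type instances. The true system admits packings of up to $\lfloor 1/u\rfloor$ jobs per server, giving $\rho^\star(F_R)=L\lfloor 1/u\rfloor$. The upper-rounded scheduler, on the other hand, treats every job as having size $r_\star$, so its feasible configurations fit at most $K$ jobs per server, yielding a maximum stabilizable workload of $LK$. The ratio becomes
\begin{equation*}
\frac{LK}{L\lfloor 1/u\rfloor} \;=\; \frac{K}{\lfloor 1/u\rfloor} \;\le\; \frac{K}{\lceil 3K/2\rceil} \;\le\; \frac{2}{3}
\end{equation*}
for every integer $K\ge 1$. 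The bound is tight for even $K$: when $r_\star$ sits just above $1/3$ (so $K=2$) and $u$ is slightly below $1/3$, the ratio is exactly $2/3$.

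The only step requiring any real care is establishing the existence of $X_\star$ with $\inf X_\star = 0$, which uses only finiteness of the partition and the covering property $\bigcup_j X_j = (0,1]$. Beyond that, everything reduces to the elementary arithmetic inequality $K/\lceil 3K/2\rceil\le 2/3$ and a single invocation of the finite-type workload region \eqref{eq:region}. The tightness at $K=2$ confirms that the guarantee of Theorem~\ref{thm:VQS1} cannot be improved by any distribution-oblivious upper-rounded scheme, no matter how one refines the partition of \eqref{eq:VQAintervals}.
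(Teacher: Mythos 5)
Your proposal is correct but takes a genuinely different route from the paper's. The paper uses a \emph{two-type} adversarial distribution: it picks $\epsilon\in(0,1/3)$ so that $1/2-\epsilon$ and $1/2+\epsilon$ avoid the finitely many partition boundary points, feeds both sizes in equal proportion, and observes that because both suprema exceed their respective probe points, the rounded scheduler can never colocate the two (which the true optimum does via the configuration $(1,1)$); the $2/3$ drops out of mixing the maximal rounded configurations $(2,0)$ and $(0,1)$ to cover the symmetric demand, yielding $\tfrac{4/3}{2}=\tfrac23$. You instead use a \emph{single point mass}: a finite partition of $(0,1]$ must, by pigeonhole, have some cell $X_\star$ with $\inf X_\star=0$, and feeding jobs of a single size $u\in X_\star$ with $u\le 1/\lceil 3K/2\rceil$ (where $K=\lfloor 1/\sup X_\star\rfloor$) forces the rounded scheduler's per-server capacity $K$ to trail the true capacity $\lfloor 1/u\rfloor\ge\lceil 3K/2\rceil$, giving $K/\lceil 3K/2\rceil\le 2/3$. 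Both arguments are sound; yours is shorter and sidesteps the paper's case splits (interior vs.\ boundary, same vs.\ different cell), at the cost of the slightly less transparent integer inequality. It is worth flagging, though, that the tightness remark in your last paragraph ("the ratio is exactly $2/3$" for even $K$) is only available for partitions in which $X_\star$ actually contains a point $u$ with $\lfloor 1/u\rfloor=\tfrac{3K}{2}$; for an arbitrary $X_\star$ with $\inf X_\star=0$ you can only guarantee the inequality $\le 2/3$, which is all the proposition requires, so this is a cosmetic rather than a substantive gap. The paper's two-type example is perhaps more informative about \emph{why} $2/3$ is the right constant for the VQS construction specifically (it mirrors the $2/3$ cut points of partition $I$ and directly exhibits the lost pairing $(1,1)$), whereas yours isolates the purely combinatorial source of the constant.
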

\begin{proof}
See Appendix~\ref{pr:2/3opt} for the proof.
\end{proof}  

Theorem~\ref{thm:VQS1} assumed that there is a minimum resource requirement of at least $1/2^J$. This assumption can be relaxed as stated in the following corollary.
\begin{corollary}\label{cor:VQS_J} 
	Consider any general distribution of job sizes $F_R$. Given any $\epsilon > 0$, choose $J$ to be the smallest integer such that $F_R(1/2^J) < \epsilon$, then the
    \algB{} algorithm achieves at least $(1-\epsilon)\frac{2}{3}$ of the optimal workload $\rho^\star$.
\end{corollary}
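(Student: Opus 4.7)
My plan is to reduce Corollary~\ref{cor:VQS_J} to Theorem~\ref{thm:VQS1} via a coupling with a rounded arrival distribution in which every job of size at most $1/2^J$ is replaced by a size-$1/2^J$ job. I would define $\tilde F_R$ on $[1/2^J,1]$ to be the distribution obtained from $F_R$ by transporting all mass in $(0,1/2^J]$ to the atom $\{1/2^J\}$, and couple the two arrival streams so that each arriving job has size $R_j\sim F_R$ in the original system and $\tilde R_j=R_j\mathds{1}\{R_j>1/2^J\}+(1/2^J)\mathds{1}\{R_j\le 1/2^J\}\ge R_j$ in the rounded system.

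The first observation is that \algB{} makes identical scheduling decisions in the two coupled systems: \algB{} assigns every job to a virtual queue using only the subinterval of partition $I$ containing its size, and it reserves exactly $1/2^J$ of capacity per job placed in $\mVQ_{2J-1}$ regardless of the actual size. Combined with $R_j\le\tilde R_j$, which ensures feasibility is preserved when moving from the rounded to the original packing, this implies that positive recurrence of \algB{} on the rounded system transfers to the original at the same workload. Since $\tilde F_R$ is supported in $[1/2^J,1]$, Theorem~\ref{thm:VQS1} applied to the rounded system gives stability at every workload $\rho<(2/3)\tilde\rho^\star$, where $\tilde\rho^\star$ denotes the maximum supportable workload for $\tilde F_R$.

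The remaining and main step is to show $\tilde\rho^\star\ge(1-\epsilon)\rho^\star$. I would invoke the partition characterization of Theorem~\ref{thm:OPT}: for every $\delta>0$ there is a partition $X^\star$ of $(0,1]$ with $\overline{\rho}^\star(X^\star;F_R)\ge \rho^\star-\delta$, and without loss of generality $X^\star$ refines $\{(0,1/2^J],(1/2^J,1]\}$. Let $\bar X$ be the coarsening of $X^\star$ obtained by merging all of its subsets lying in $(0,1/2^J]$ into the single subset $(0,1/2^J]$. Under $\bar X$ the upper-rounded finite-type systems for $F_R$ and for $\tilde F_R$ coincide---both carry a single type of size $1/2^J$ with mass $p=F_R(1/2^J)<\epsilon$ and agree on $(1/2^J,1]$---so $\overline{\rho}^\star(\bar X;F_R)=\overline{\rho}^\star(\bar X;\tilde F_R)\le\tilde\rho^\star$. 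To control the coarsening loss, I would take a fluid allocation $(\mathbf x^\ell)$ achieving $(\rho^\star-\delta)\mathbf P^{(X^\star)}<\sum_\ell \mathbf x^\ell$ in the finer partition and transfer it to $\bar X$ by inflating each small-job placement from size $\sup X^\star_j\le 1/2^J$ to $1/2^J$. The extra capacity demanded per server per unit time is at most $p/2^J$, and uniformly scaling the allocation by $(1-\epsilon)$ leaves just enough slack to absorb it because $p<\epsilon$; this produces a feasible allocation for $\bar X$ at workload $(1-\epsilon)(\rho^\star-\delta)$, and sending $\delta\to 0$ yields $\tilde\rho^\star\ge(1-\epsilon)\rho^\star$, which combined with the previous paragraph completes the proof.

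The hard part will be the fluid-allocation transfer in this last step: the small-job placements are dispersed across many configurations of the original allocation and must be replaced one-by-one by size-$1/2^J$ placements that may saturate a server's capacity. Verifying that the uniform scaling by $(1-\epsilon)$ always leaves the required slack is a delicate bookkeeping exercise that ultimately rests on $p<\epsilon$ together with the observation that each feasible server configuration contains at most $2^J$ small jobs.
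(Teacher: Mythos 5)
Your overall strategy---couple the original system to a rounded one where every small job is inflated to size $1/2^J$, invoke Theorem~\ref{thm:VQS1} on the rounded system, and then argue the rounding costs at most a factor $(1-\epsilon)$ in $\rho^\star$---matches the paper's. The first two paragraphs are fine and mirror the paper's chain (their ``system 3'').

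The gap is in your last step, the fluid-allocation transfer. You claim to transfer a near-optimal allocation for the refined partition $X^\star$ to the coarse partition $\bar X$ by inflating each small-job placement to size $1/2^J$, and that scaling the whole allocation by $(1-\epsilon)$ absorbs the extra capacity, ``resting on $p<\epsilon$ together with the observation that each feasible server configuration contains at most $2^J$ small jobs.'' That last observation is false for the refined partition: $X^\star$ can have subsets of $(0,1/2^J]$ with arbitrarily small $\sup$, so a configuration feasible under upper-rounded $X^\star$ can pack vastly more than $2^J$ small jobs (e.g., $2^{100}$ jobs of upper-rounded size $2^{-100}$). After inflating each of those to $1/2^J$, the configuration needs capacity $2^{100-J}\gg 1$ and is wildly infeasible; scaling the mixture weights by $(1-\epsilon)$ cannot fix a per-configuration capacity violation. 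So the bookkeeping you flag as ``delicate'' is in fact broken, not merely delicate. The paper avoids this entirely with a different comparison: it introduces a fourth system where small jobs are \emph{re-sampled} from $F_R$ conditioned on $R>1/2^J$ (so all jobs are genuinely large). Since re-sampled jobs are at least as big as rounded-up jobs, $\rho^\star_3\ge\rho^\star_4$ trivially; and the re-sampled system has the same size distribution as the ``discard small jobs'' system but arrival rate higher by a factor $1/(1-F_R(1/2^J))<1/(1-\epsilon)$, giving $\rho^\star_4\ge(1-\epsilon)\rho^\star_1\ge(1-\epsilon)\rho^\star$ by scaling, with no per-configuration surgery needed. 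To repair your argument, replace the inflation step with this discard/re-sample comparison (or, equivalently, drop the small jobs from each configuration to keep it feasible and separately account for the capacity consumed by size-$1/2^J$ jobs via the arrival-rate ratio).
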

\begin{proof}
See Appendix~\ref{pr:VQS_J} for the proof.
\end{proof}

Since the complexity of \algB{} algorithm is linear on $J$, it is worth increasing
it if that improves maximum throughput. An implication of Corollary~\ref{cor:VQS_J} 
is that this can be done adaptively as estimate of $F_R$ becomes available.

\section{\algB{}-{\algA}: Incorporating Best-Fit in \algB}
While the {\algB} algorithm achieves in theory a larger fraction of the optimal workload 
than {\algA}-J/S, 
it is quite inflexible compared to {\algA}-J/S, as it can only schedule according to certain
job configurations and the time until configuration changes may be long, hence might cause excessive queueing delay.
We introduce a hybrid \algB{}-{\algA} algorithm that achieves the same fraction of the optimal workload as {\algB}, but in practice has the flexibility of {\algA}. The  algorithm has two steps similar to {\algB}: Setting the active configuration is exactly the same as the first step in {\algB}, but it differs in the way that jobs are scheduled in the second step. Suppose the active configuration of server $\ell$ at time $t$ is $\mathbf{k} \in {\mathcal{K}}^{(J)}_{RED}$, then:
\begin{itemize}
	\item [(i)] 
	If $k_{1}=1$, the server will try to schedule the \textit{largest-size job} 
	from $\mVQ_1$ that can fit in it. This may not be possible because of jobs 
	already in the server from previous time slots. Unlike {\algB}, when jobs from $\mVQ_1$ are scheduled, they reserve exactly the amount of resource that they require, and no amount of resource is reserved if no job from $\mVQ_1$ is scheduled.

	\item [(ii)] 
	Any configuration $\mathbf{k} \in {\mathcal{K}}^{(J)}_{RED}$ 
	has at most one $k_{j}>0$ other than $k_{1}$. 
	Server attempts to schedule jobs from the corresponding $\mVQ_j$, 
	starting from the \textit{largest-size job} that can fit in it. 
	Depending on prior jobs in server, this procedure will stop when either 
	the number of jobs from $\mVQ_j$ in the server 
	is at least $k_{j}$, or $\mVQ_j$ becomes empty, or no more jobs 
	from $\mVQ_j$ can fit in the server. 
	\item [(iii)] Server uses \textrm{BF-S} to possibly schedule more jobs in its remaining capacity from the remaining jobs in the queue.		
\end{itemize}

The performance guarantee of \algB{}-{\algA} is the same as that of
\algB{}, as stated by the following theorem. 
\begin{theorem}\label{thm:VQS2}
	If jobs have a minimum size of at least $1/2^J$,
	\algB{}-{\algA} achieves at least $\frac{2}{3} \rho^\star$. Further, for a general job-size distribution $F_R$, if $J$ is chosen such that $F_R(1/{2^J}) < \epsilon$, then \algB{}-{\algA} achieves at least
	$(1-\epsilon)\frac{2}{3} \rho^\star$.
\end{theorem}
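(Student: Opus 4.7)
The plan is to follow the Lyapunov framework used for Theorem~\ref{thm:VQS1}, because \algB{}-{\algA} differs from \algB{} only in how jobs are actually packed into a server once the active configuration is fixed; the rule for choosing $\mathbf{k}^\ell(t)\in\mathcal{K}^{(J)}_{RED}$ (max-weight over $\mathbf{Q}(t)$, renewed only when server $\ell$ becomes empty) is unchanged. Consequently Proposition~\ref{prop:2/3} applies verbatim and continues to certify that the selected configuration carries weight at least $\frac{2}{3}$ of the max-weight configuration under any refinement $X$ of partition $I$.

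The first step is to establish the ``commitment'' invariant that powers the drift argument of Theorem~\ref{thm:VQS1}: at every time slot, for each server $\ell$ with active configuration $\mathbf{k}=(k_0,\dots,k_{2J-1})$, the number of type-$j$ jobs residing in server $\ell$ is at least $k_j$ for every $j$ with $k_j>0$, provided the corresponding $\mVQ_j$ has sufficient backlog. Step~(ii) of \algB{}-{\algA} is explicitly written to ensure this for the unique non-$\mVQ_1$ type in the configuration, and the largest-first rule in step~(i) secures the single $\mVQ_1$ slot whenever it is free. Step~(iii) is pure excess service: it only schedules additional jobs drawn from the queue, so it can only decrease queue sizes compared with \algB{} and thereby only makes the Lyapunov drift more negative.

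With this invariant in place, I would plug into the same multi-step Foster--Lyapunov argument used for Theorem~\ref{thm:VQS1} (via Theorem~1 of~\cite{Foss2004}) applied to a quadratic function of $\mathbf{Q}(t)$. Because the active configurations coincide with those chosen by \algB{}, Proposition~\ref{prop:2/3} gives that the aggregate weighted service rate across servers is at least $\frac{2}{3}$ of the service rate achievable under any refinement $X$ of $I$; choosing $X$ that witnesses a workload $\rho<\frac{2}{3}\rho^\star$ yields negative drift outside a bounded set and hence positive recurrence of the Markov chain $(\calQ(t),\calH(t))$ on the associated Polish space. The second half of the theorem, removing the minimum-size assumption, then follows verbatim the argument of Corollary~\ref{cor:VQS_J}: taking $J$ so that $F_R(1/2^J)<\epsilon$ makes the jobs lumped into $\mVQ_{2J-1}$ carry at most an $\epsilon$ fraction of the offered workload, which costs a $(1-\epsilon)$ factor in the bound.

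The main obstacle is the absence in step~(i) of the explicit $2/3$-capacity reservation used by \algB{} to protect the $\mVQ_1$ slot: without it, step~(iii)'s BF-S pass can occupy the slot intended for the next $\mVQ_1$ job with smaller jobs, delaying the placement of a $\mVQ_1$ job until those smaller jobs complete. To handle this I would argue that since each job has size at least $1/2^J$ the number of jobs per server is uniformly bounded, and under geometric services at rate $\mu$ the slot is reclaimed in $O(1)$ expected time; the largest-first rule in step~(i) then immediately hands that reclaimed space to a $\mVQ_1$ job. Thus the long-run effective service rate of every active $\mVQ_1$ slot is still $\mu$, up to an $O(1)$ additive term that is absorbed by a sufficiently long $T$-step drift computation. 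Combined with the fact that the bonus BF-S service in step~(iii) can only deepen the drift, this closes the argument and preserves the $\frac{2}{3}\rho^\star$ guarantee.
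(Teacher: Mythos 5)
Your proposal correctly identifies the overall structure (the same multi-step Foster--Lyapunov framework, Proposition~\ref{prop:2/3} applying verbatim since the active-configuration selection rule is unchanged, and the observation that the argument must establish a ``commitment'' property of the scheduled configuration). However, the treatment diverges from the paper's proof in ways that leave genuine gaps.

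The paper's actual proof re-proves the key event-probability lemma (the analogue of the lemma used in the proof of Theorem~\ref{thm:VQS1}) for \algB{}-\algA{} with a three-way case analysis on $Q_1$ relative to the weight $U$ of the max-weight configuration in ${\mathcal{K}}^{(J)}_{RED}$: (a) no $\mVQ_1$ in the configuration, (b) $U/2\le Q_1 < (\gamma+1)U/2$, and (c) $Q_1 \ge (\gamma+1)U/2$. In cases (a) and (b) the argument is that the non-$\mVQ_1$ slots stay filled because the largest-first discipline guarantees that any replacement from the same virtual queue fits, provided the backlog $\left\Vert\mathbf{Q}(t_0)\right\Vert$ is large enough. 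Case (c) is the crucial and subtle one: there the paper does \emph{not} claim that the non-$\mVQ_1$ slots stay filled --- precisely because step~(iii)'s BF-S pass may have packed the server differently --- and instead observes that a single $\mVQ_1$ job alone carries weight at least $(1+\gamma)U/2$, which is then used to derive a modified constant $B_{\gamma1}$ (replacing $\mathbf{k}(t_{e_\ell(i)})\succeq \mathbf{k}$ by $\mathbf{k}(t_{e_\ell(i)})\succeq \frac{1+\gamma}{2}\mathbf{k}$ in the drift inequality). Your sketch omits this weight-accounting entirely. Your substitute --- that the $\mVQ_1$ slot is ``reclaimed in $O(1)$ expected time'' via geometric departures --- is a genuinely different argument and does not plug directly into the lemma's structure, which requires that the scheduled weight be below $\gamma$ of the max weight for at most $N_1$ of the $N_2$ slots; the reclaim argument would have to control the \emph{cumulative} bad time over the window, including the possibility that the slot is blocked repeatedly, which your ``absorbed by a sufficiently long $T$-step drift'' phrase does not resolve.

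Relatedly, the claim that step~(iii) is ``pure excess service'' that ``can only decrease queue sizes compared with \algB{}'' is not justified. BF-S jobs occupy residual capacity that can later prevent a configuration slot from being refilled (the very phenomenon step~(i) acknowledges with ``This may not be possible because of jobs already in the server''), so \algB{}-\algA{} does \emph{not} pointwise dominate \algB{}. This is exactly why case~(c) in the paper's proof must fall back to the weaker $(1+\gamma)/2$ weight lower bound rather than asserting the scheduled configuration dominates the active one.
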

\begin{proof}
	The proof is similar to that of Theorem~\ref{thm:VQS1}. However, the difference 
	is that the configuration of a server (jobs residing in a server) is not 
	predictable, unless it empties, at which point we can ensure that it will schedule
	at least the jobs in the max weight configuration assigned to it, for
	a number of time slots proportional to the total queue length.
	The fact that the scheduling starts from the largest job in a virtual queue
	is important for this assertion, similarly to the importance of Best Fit in 
	the proof of Theorem~\ref{thm:GF1}.
	
	In case $J$ is chosen such that $F_R(1/2^J) < \epsilon$, the arguments
	in Corollary~\ref{cor:VQS_J} are applicable here as well.
	
	The full proof is provided in Appendix~\ref{pr:thm_VQS2}.
\end{proof}
\section{Evaluation Results}\label{sec:simulations}
\subsection{Synthetic Simulations}

\subsubsection{Instability of {\algB} and tightness of $2/3$ bound.}
We first present an example that shows the tightness of the $2/3$ bound 
on the achievable throughput of \algB{}. Consider a single server where 
jobs have two discrete sizes $0.4$ and $0.6$. The jobs arrive 
according to a Poisson process with average rate $0.014$ jobs per time slot 
and with each job size being equally likely. Each job completes its 
service after a geometric number of time slots with mean $100$.
Observe that by using configuration $(1,1)$ (i.e., 1 spot per 
job type) any arrival rate below $0.02$ jobs per time slot
is supportable. This is not the case though for \algB{} that
schedules based on configurations ${\mathcal{K}}^{(J)}_{RED}$, so it can either schedule two jobs
of size $0.4$ or one job of size $0.6$. This results in 
\algB{} to be unstable for any arrival rate greater than $2/3 \times 0.02 \approx 0.013$.
Both of the other proposed algorithms, \textrm{BF-J/S} and \textrm{VQS-BF}, 
circumvent this problem.
The evolution of the total queue size is depicted in Figure~\ref{fig:simvqunstable}

\begin{figure}
	\centering
	\begin{subfigure}{.5\linewidth}
		\centering
		\includegraphics[width=\linewidth]
		{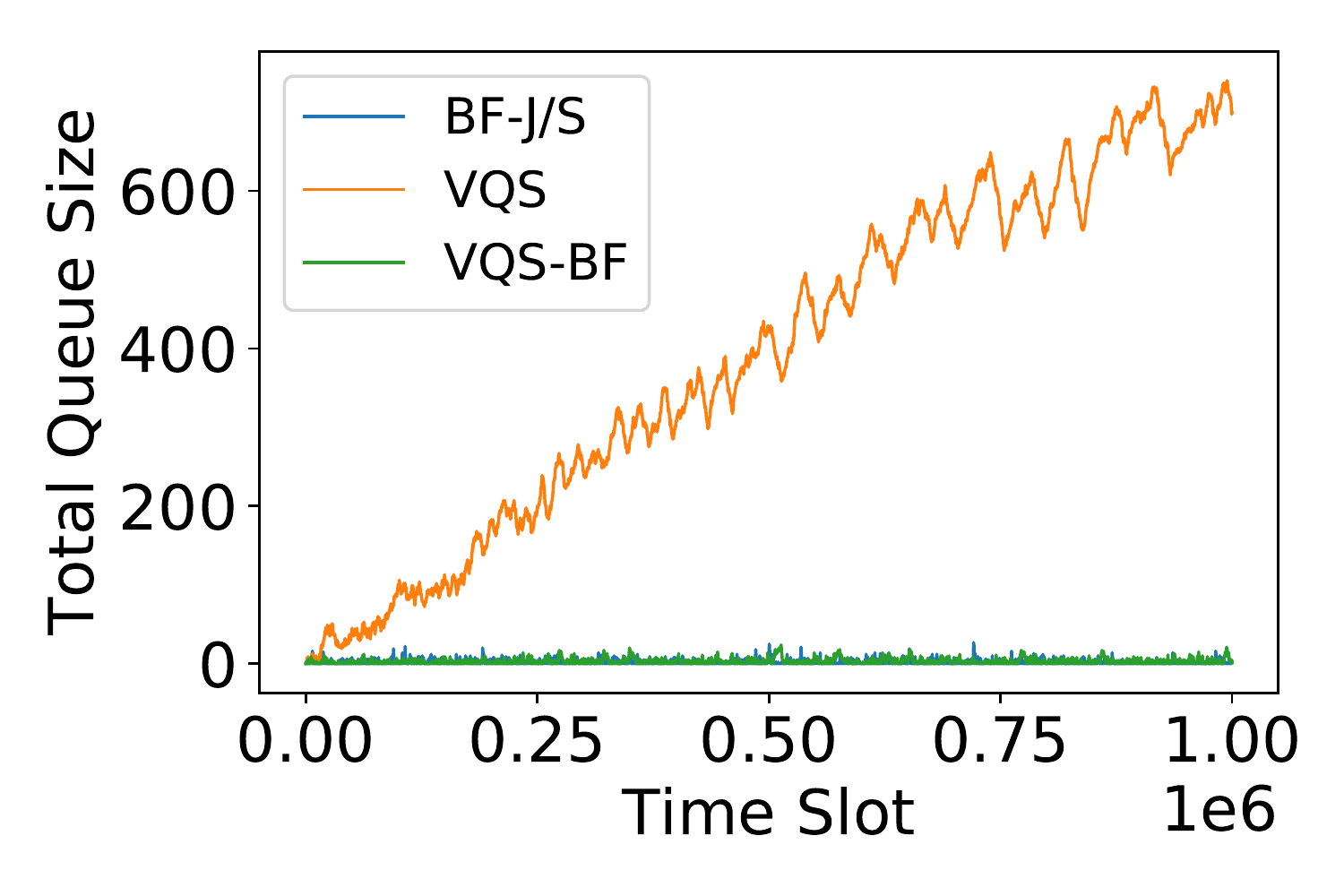}
			\vspace{- 0.3 in}
		\caption{}
		\label{fig:simvqunstable}
	\end{subfigure}%
	\begin{subfigure}{.5\linewidth}
		\centering
		\includegraphics[width=\linewidth]
		{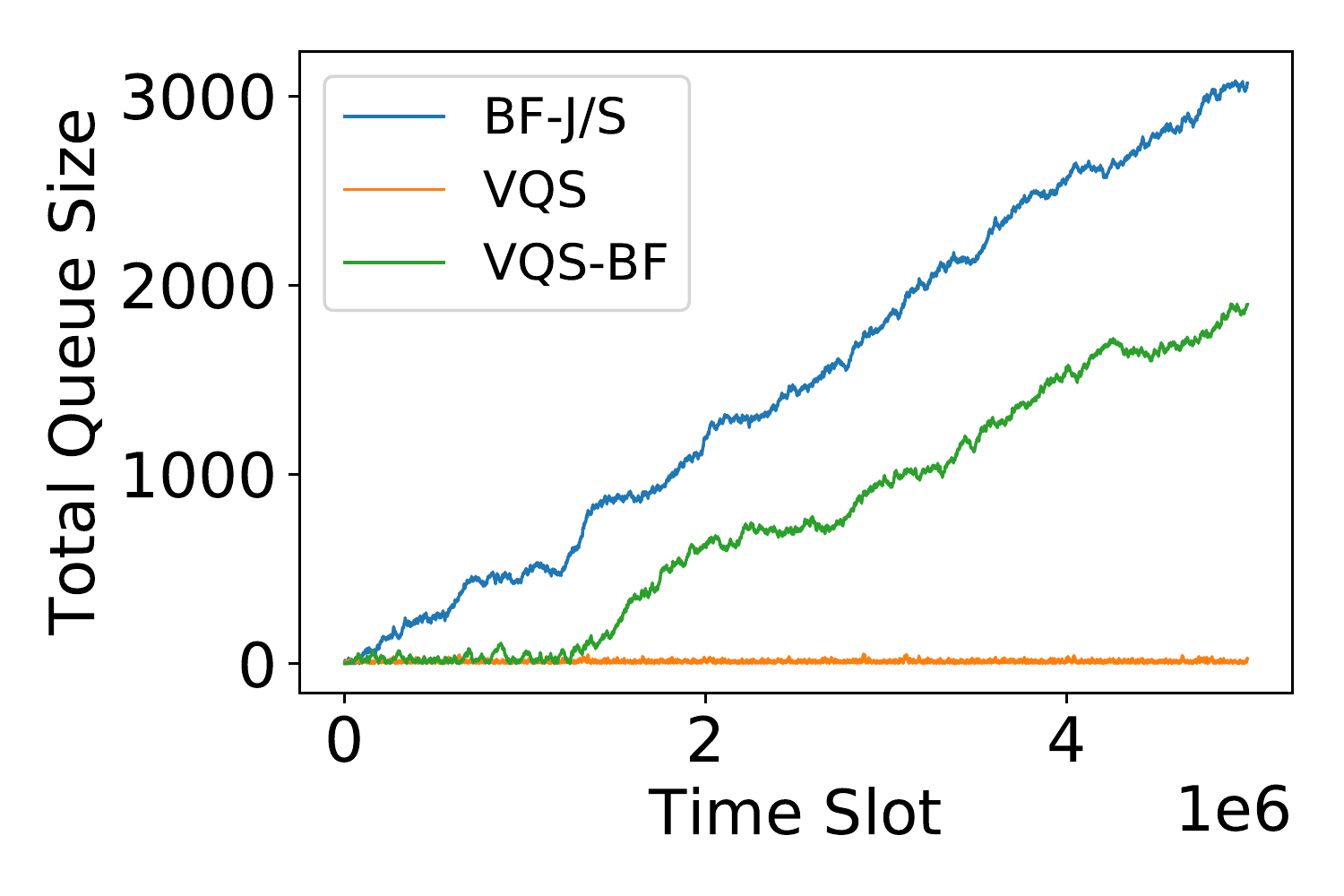}
		\vspace{- 0.3 in}
		\caption{}
		\label{fig:simbfunstable}
	\end{subfigure}%
	\caption{(a) A setting where \textrm{VQS} is unstable, but \textrm{BF} variants are stable. (b) A setting where \textrm{VQS} is stable but \textrm{BF} variants are unstable.}
\end{figure}

\begin{figure*}[t]
	\begin{minipage}{.65\textwidth}
		\begin{subfigure}{0.48\textwidth}
			\includegraphics[width=\textwidth]{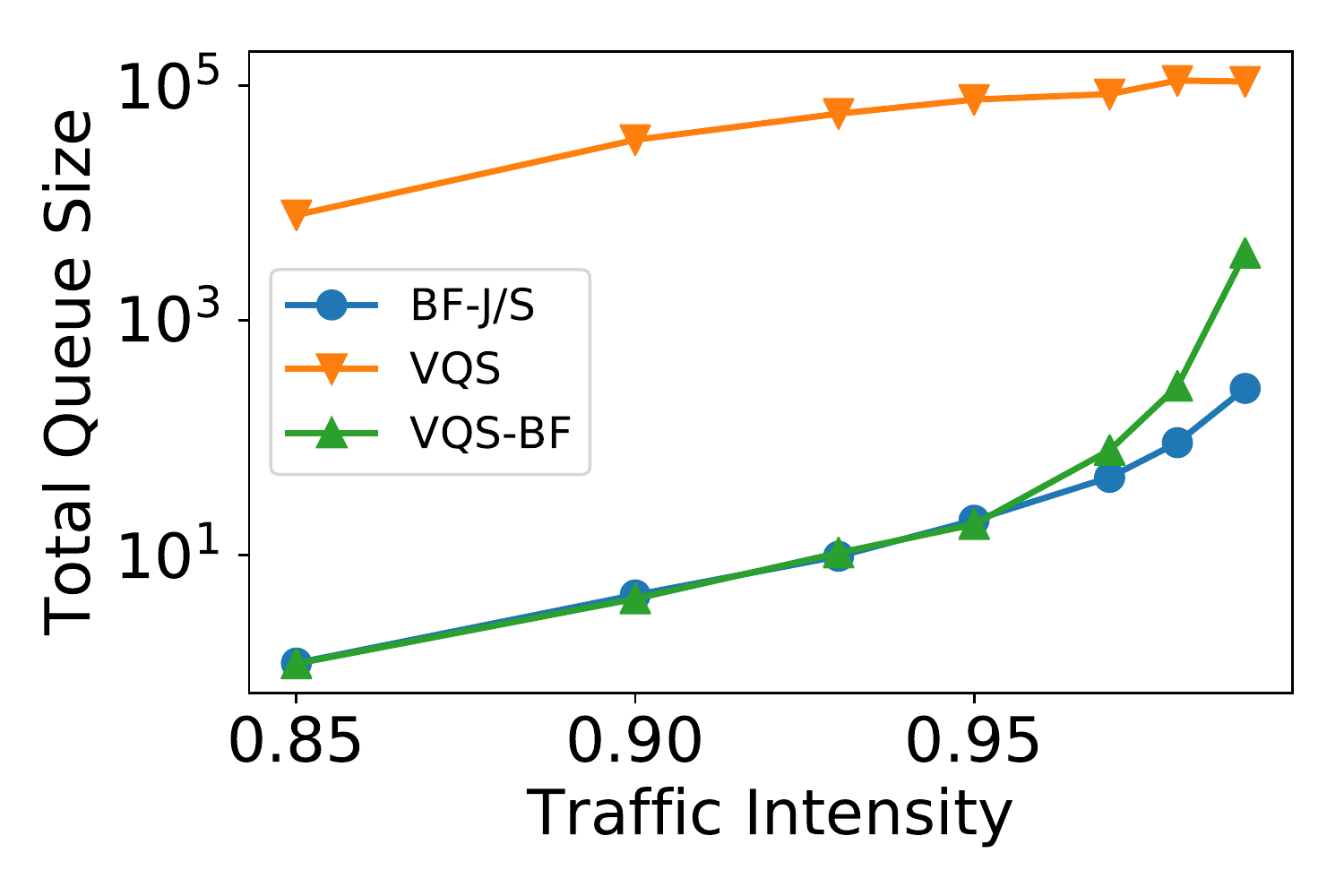}
			\caption{Job sizes$\sim$Unif $[0.01, 0.19]$}
			\label{fig:simunif_1}
		\end{subfigure}%
		\begin{subfigure}{0.48\textwidth}
			\includegraphics[width=\textwidth]{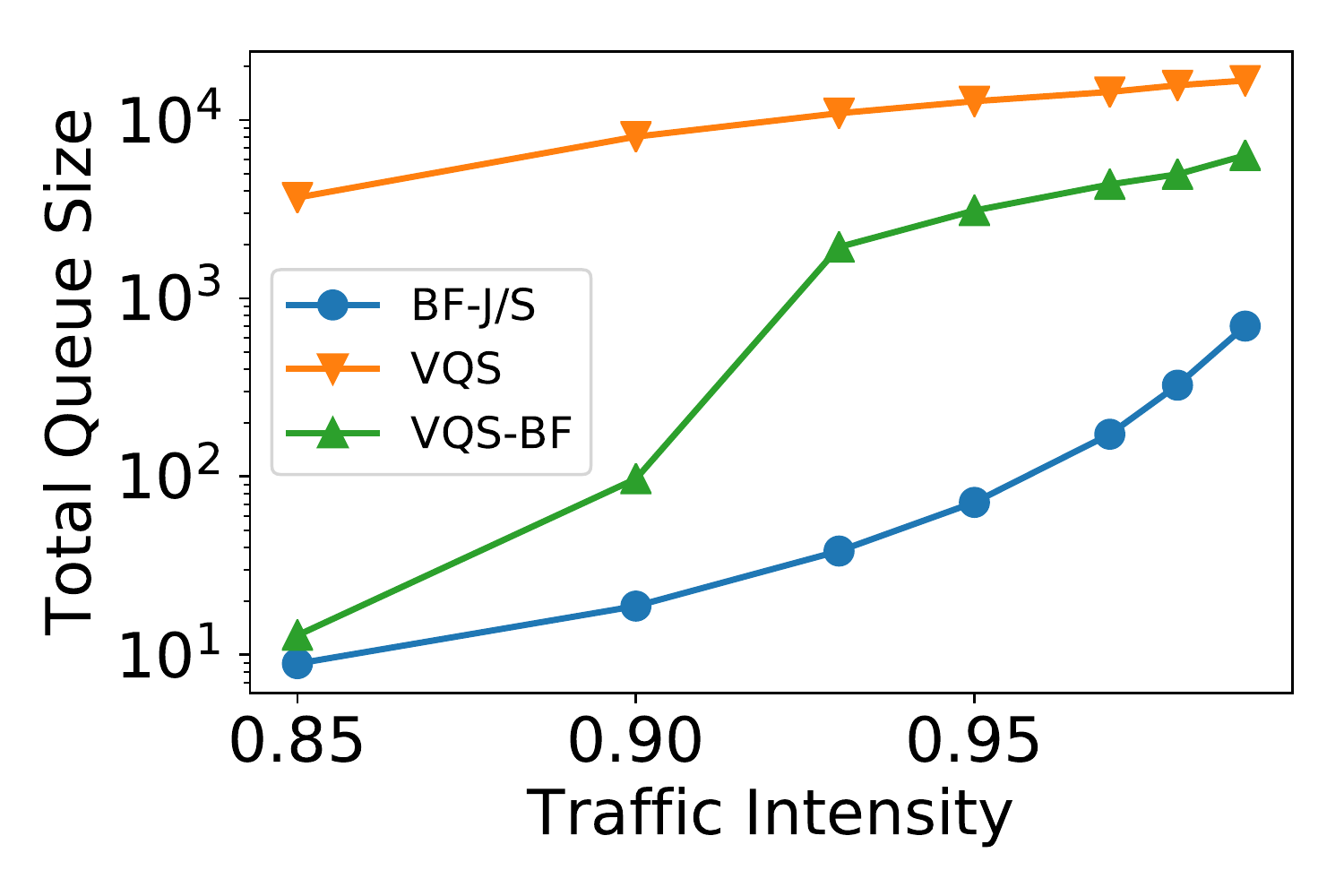}
			\caption{Job sizes$\sim$Unif $[0.1, 0.9]$}
			\label{fig:simunif_2}
		\end{subfigure}
		\caption{Comparison of the average queue size of different algorithms, for various traffic 
			intensities, when job sizes are uniformly distributed in 
			(a) $[0.01, 0.19]$ and (b) $[0.1, 0.9]$, in a system of 
			$5$ servers of capacity 1. }
		\label{fig:simunif}
	\end{minipage}\hfill
	\begin{minipage}{.33\textwidth}
		\includegraphics[width=\textwidth]{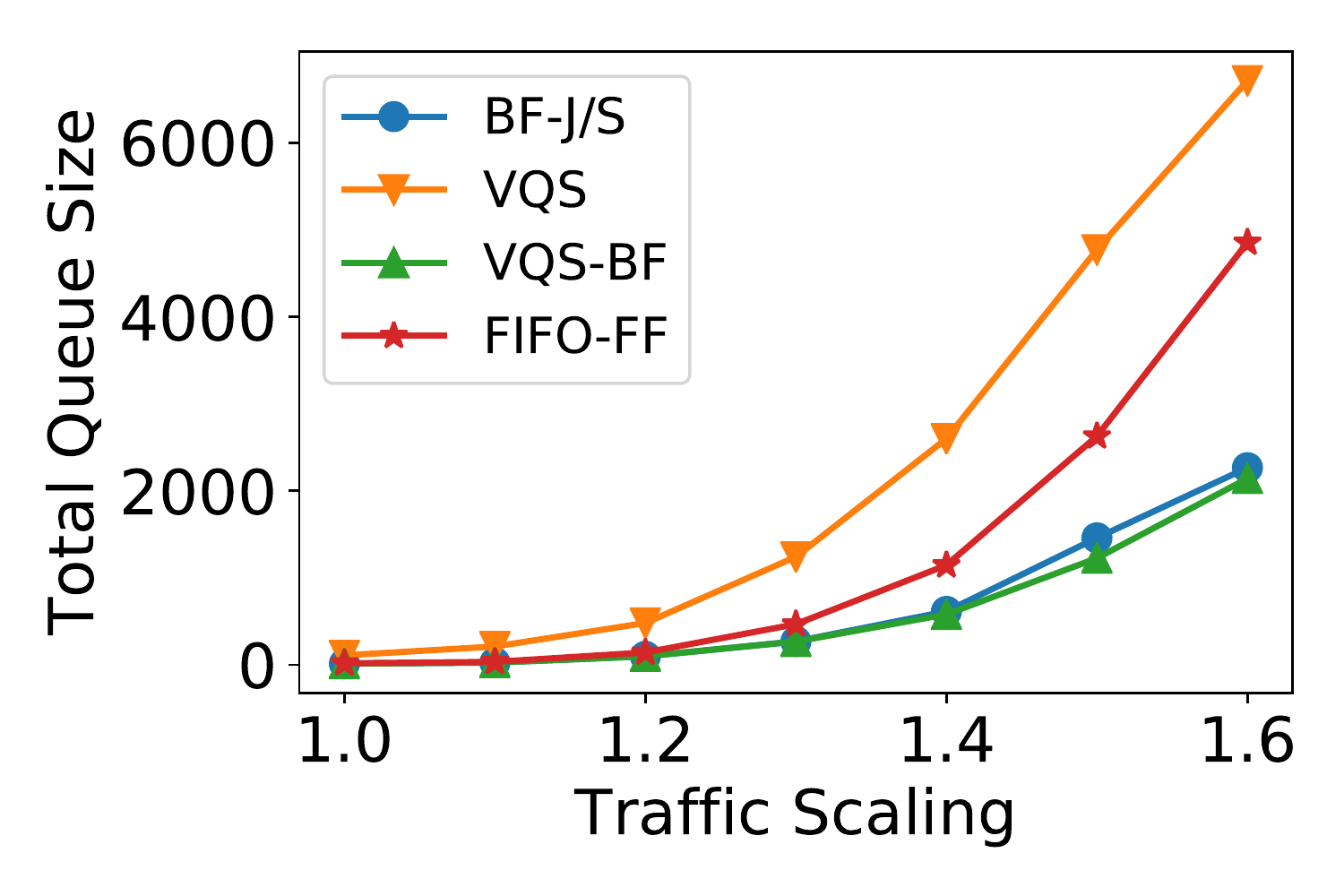}%
		\caption{Comparison of algorithms using Google trace for approximately 
			1,000,000 tasks. Traffic scaling varies from $1$ to $1.6$ and
			number of servers is fixed at $1000$.} 
		\label{fig:simgtracescale}%
	\end{minipage}
\end{figure*}
\subsubsection{Instability of \algA{}-J/S}
We present an example that shows BF-J/S is not stable while VQS can stabilize the queues. Consider a single server of capacity $10$
and that job sizes are sampled from two discrete values
$2$ and $5$. The jobs arrive according to a Poisson process with 
average rate $0.0306$ jobs per time slot, and job of size
$2$ are twice as likely to appear than jobs of size $5$.
Each job completes its service after a fixed number of $100$ time slots .
The evolution of the queue size is depicted in Figure \ref{fig:simbfunstable}.
This shows an example where \algB{} is stable, while both 
\algA{}-J/S and \algB{}-\algA{} are not.

To justify the behavior of the latter two algorithms, we notice that 
under both the server is likely to schedule according to the 
configuration $(2,1)$ that uses two jobs of size $2$ and one of size $5$. 
Because of fixed service times, jobs that are 
scheduled at different time slots, will also depart at different time slots. Hence, it is possible that the scheduling algorithm will not allow 
the configuration $(2,1)$ to change, unless one of the queues empties. However, there is a positive probability that the queues will never get empty since the
expected arrival rate is more than the departure rate for both 
types. The arrival rate vector is $\bm{\lambda} = (0.0204, 0.0102)$ while the departure rate vector $\bm{\mu} = (0.02, 0.01)$.

\algB{} on the other hand will always schedule either five jobs of size $2$
or two of size $5$. The average departure rate in the first configuration is 
$\bm{\mu}_1 = (0.05, 0)$, and in the second configuration  $\bm{\mu}_2 = (0, 0.02)$. The arrival vector is in convex hull of these two vectors as
$\bm{\lambda} < 4/9 \bm{\mu}_1 + 5/9 \bm{\mu}_2$ and therefore is 
supportable. 

\subsubsection{Comparison using Uniform distributions}
To better understand how the algorithms operate under a non-discrete 
distribution of job sizes, we test them using a uniform distribution. We choose $L=5$ servers, each with capacity $1$. We perform two experiments: the job sizes are distributed uniformly over
$[0.01, 0.19]$ in the first experiment and uniformly over $[0.1, 0.9]$ in the second one. Hence $\bar{R}$ is $0.1$ in the first experiment and $0.5$ in the second one.

The service time  of each job is geometrically distributed with mean  $1/\mu=100$ time slots so departure rate is $\mu=0.01$. The job arrivals follow a Poisson process with rate $\mu  L/\bar{R} \times $ jobs per time slot (and thus $\rho=\alpha L/\bar{R} $), where $\alpha$ is a constant which we refer to as ``traffic intensity'' and $L=5$ is the number of servers in these experiments. 
A value of $\alpha=1$ is a bound on what is theoretically supportable 
by any algorithm. 
In each experiment, we change the value of $\alpha$ in the 
interval $[0.85, 0.99]$. The results are depicted in Figure \ref{fig:simunif}.

Overall we can see that \algB{} is worse than other two algorithms in terms of 
average queue size. Algorithms \algA{}-\textrm{J/S} and \algB{}-\algA{} 
look comparable in the first experiment for traffic intensities up to $0.95$,
otherwise \algA{}-\textrm{J/S} has a clear advantage.
An interpretation of results is that
\algB{} and \algB{}-\algA{} have particularly worse delays when the average job size 
is large, since large jobs cannot be scheduled most of the time, unless they 
are part of the active configuration of a server. That makes these 
algorithm less flexible compared to \algA{}-\textrm{J/S} for 
scheduling such jobs.

\subsection{Google Trace Simulations}
We test the algorithms using a traffic trace from a Google cluster dataset~\cite{ClusterData}. We performed the following preprocessing on the dataset:
\begin{itemize}[leftmargin=3mm]
\item We filtered the tasks and kept those that were completed without interruptions/errors. 
\item All tasks had two resources, CPU and memory. To convert them to a
	single resource, we used the maximum of the two requirements which were 
    already normalized in $[0,1]$ scale. 
\item The servers had two resources, 
CPU and memory, and change over time as they a updated or replaced. For simplicity, we consider a fixed number of servers, each with a single resource capacity normalized to $1$.
\item Trace events are in microsec accuracy. In our algorithms,
we make scheduling decisions every $100$ msec.
\item We used a part of the trace corresponding to about a million task arrivals 
spanning over approximately 1.5 days.
\end{itemize}

We compare the algorithms proposed in this work and a baseline based on Hadoop's default 
FIFO scheduler~\cite{hadoop}. While the original FIFO scheduler is slot-based~\cite{usama2017job}, 
the FIFO scheduler considered here schedules jobs in a FIFO manner, by attempting to 
pack the first job in the queue to the first server that has sufficient capacity to 
accommodate the job. We refer to this scheme as \textrm{FIFO-FF} which should perform better than the slot-based FIFO, since it packs jobs in servers (using First-Fit) instead of using predetermined slots.

We scale the job arrival rate by multiplying the arrival times of tasks 
by a factor $\beta$. We refer to $1/\beta$ as ``traffic scaling'' 
because larger $1/\beta$ implies that more jobs arrive in a time unit. 
The number of servers was fixed to $1000$, while traffic scaling varied from $1$ to $1.6$.
The average queue sizes are depicted in Figure~\ref{fig:simgtracescale}.
As traffic scaling increases, \algA{}-\textrm{J/S} and \algB{}-\algA{} have a clear advantage over the other schemes, with \algB{}-\algA{} also yielding a small improvement in the queue size compared to \algA{}-\textrm{J/S}.
It is interesting that \algB{}-\algA{} has a consistent advantage over \algA{}-\textrm{J/S} at higher
traffic, albeit small, although both algorithms are greedy in the way that they pack jobs in  servers.

\section{Discussion and Open Problems}\label{sec:dis}
In this work, we designed three scheduling algorithms for jobs whose 
sizes come from a general unknown distribution. 
Our algorithms achieved two goals:  keeping the complexity low, and providing throughput guarantees for any 
distribution of job sizes, \textit{without} actually knowing the prior distribution.

Our results, however, are lower bounds on the performance of the algorithms and 
simulation results show that the algorithms {\algA}-\textrm{J/S} and {\algB}-{\algA} may support workloads that go beyond 
their theoretical lower bounds. It remains as an open problem to tighten the 
lower bounds or construct upper bounds that approach the lower bounds. 

In addition, we made some simplifying assumptions in our model but results indeed hold under more general models. 
One of the assumptions was that the servers are homogeneous.
{\algA}-\textrm{J/S} and our analysis can indeed be easily applied without this assumption.
For \algB{}
and \algB{}-\algA{}, the scheduling can be also applied without changes
when servers have resources that differ by a power of $2$ which is a
common case. As a different approach, we can maintain different sets 
of virtual queues, one set for each type of servers.

Another assumption was that service durations follow geometric distribution. This assumption was made to simplify the proofs, as it justifies that a server will empty in a finite expected time by chance. Since this may not happen under general service time distributions (e.g. one may construct adversarial service durations that prevent server from becoming empty), in all our algorithms we can incorporate  a stalling technique proposed in~\cite{PG17} that actively forces a server to become empty by preventing it from scheduling new jobs. 
The decision to stall a server is made whenever server operates in an ``inefficient'' configuration. For \algA{}-\textrm{J/S} that condition is when the server is less
than half full, while for \algB{} and \algB{}-\algA{}, is when the weight of 
configuration of a server is far from the maximum weight over ${\mathcal{K}}^{(J)}_{RED}$.

Finally we based our scheduling decisions on a single resource. Depending
on workload, this may cause different levels of fragmentation, but resource 
requirements will not be violated if resources of jobs are mapped to the maximum 
resource (e.g. like our preprocessing on Google trace data). 
A more efficient approach is to extend {\algA}-\textrm{J/S} to multi-resource setting, 
by considering a Best-Fit score as a linear combination of per-resource occupancies. It has been empirically 
shown in ~\cite{grandl2015multi} that the inner product of the vector of the job's 
resource requirements and the vector of server's occupied resources is a good candidate. 
We leave the theoretical study of scheduling jobs with multi-resource distribution 
as a future research.

\begingroup
\setlength{\emergencystretch}{8em}
\bibliographystyle{IEEEtran}
\bibliography{javad}
\endgroup

\appendix[Proofs]
\subsection{Proof of Theorem~\ref{thm:OPT}}\label{pr:thm_OPT}
We first prove the theorem for continuous probability distributions, then show how to handle discontinuities in general distributions.  

Define partition ${X}^{(n)}$ to be the 
collection of $m_n=2^{n+1}$ intervals $X_i^{(n)}:(\xi_{i-1}^{(n)}, \xi_{i}^{(n)}]$, 
such that $\xi_0^{(n)}=0$, $\xi_{2^{n+1}}^{(n)}=1$, and $F_R(\xi_i^{(n)})=\frac{i}{2^{n+1}}$, for $i=1, \cdots, 2^{n+1}-1$. This construction is possible since $F_R$ is an increasing continuous function, 
hence $F_R(x)=c$ always has a unique solution $x \in [0,1]$ if $c \in [0,1]$. Subsequently, 
$$
\pi_i:=\mathds{P}\left(R \in I_i^{(n)}\right) = \frac{1}{2^{n+1}},\ i=1, \cdots, m_n.
$$

In the rest of the proof, we use the following notations.
$\mathbf{1}_{N}$ is a vector of all ones of length $N$.
$\mathbf{e}_{i}$ is a basis vector with value $1$ in its $i$th entry and
$0$ elsewhere. $\rho^\star$ is the
maximum workload that can be supported by any algorithm with the given 
distribution of job sizes $F_R$. Also $\overline{\rho}^\star({X}^{(n)})$ is the
maximum supportable workload when 
upper-rounded queues are used under partition ${X}^{(n)}$ and 
$\underline{\rho}^\star({X}^{(n)})$ the respective maximum workload,
when lower-rounded queues are used.

Under upper-rounded or lower-rounded virtual queues, 
job sizes have $2^{n+1}$ discrete values, 
which makes the problem equivalent with scheduling $2^{n+1}$ job types.
For notational purpose, we define the \textit{workload vector} $\bm{\rho}=\rho \bm{\pi}$ 
where $\bm{\pi}=\left(\pi_i, i = 1, \cdots, m_n \right)$ is the vector of probabilities of the types and $\rho$ is the workload of the system. Hence, 
under upper-rounded queues, the workload vector is 
$\bm{\rho}_1 = \frac{\overline{\rho}({X}^{(n)})}{2^{n+1}} \mathbf{1}_{m_n}$.

Using lower-rounded virtual queues is equivalent to using upper-rounded
virtual queues, but the workload vector is instead
$\bm{\rho}_2 = \frac{\underline{\rho}({X}^{(n)})}{2^{n+1}}
(\mathbf{1}_{m_n} - \mathbf{e}_{m_n})$. This is because we can 
essentially ignore the jobs whose sizes are rounded to 0 and no job size can 
be rounded to $1$. 

With discrete job types whose sizes are $\xi^{(n)}_{i}$ for $i=1, \ldots m_n$, 
we can extend the notion of \textit{feasible configuration} in 
Definition~\ref{def:conf} to jobs of a continuous distribution. 
In this case, configuration $\mathbf{k}$ is a $m_n$-dimensional vector
and the set of feasible configurations is denoted by $\mathcal{\bar{K}}$. 
The workload is supportable if it is in the convex hull of set of 
feasible configurations, as in \dref{eq:region}.
With the upper-rounded queues, and given all $L$ servers are the same
and have the same set of feasible configurations, there should exist $p_{\mathbf{k}}\geq 0$, $\mathbf{k} \in \overline{\mathcal{K}}$, such that  
\begin{equation} \label{eq:rho1}
L\sum_{\mathbf{k} \in \overline{\mathcal{K}}} p_{\mathbf{k}}
\mathbf{k} > \bm{\rho}_1, \ \ 
\sum_{\mathbf{k} \in \overline{\mathcal{K}}} p_{\mathbf{k}} = 1.
\end{equation}
Similarly, with lower-rounded virtual queues, there should exist $q_{\mathbf{k}}\geq 0$, $\mathbf{k} \in \overline{\mathcal{K}}$, such that  
\begin{equation} \label{eq:rho2}
L\sum_{\mathbf{k} \in \overline{\mathcal{K}}\setminus
	\{\mathbf{e}_{m_n}\}} q_{\mathbf{k}}
\mathbf{k} > \bm{\rho}_2, \ \ 
\sum_{\mathbf{k} \in \overline{\mathcal{K}}\setminus
	\{\mathbf{e}_{m_n}\}} q_{\mathbf{k}} = 1.
\end{equation}

Jobs of size $1$ can be served only by configuration $\mathbf{e}_{m_n}$,
i.e., server is filled with a single job of size $1$. Hence we can split the 
first equation of (\ref{eq:rho1}) into
\begin{equation}\label{eq:rhoover}
\begin{aligned}
&L \sum_{\mathbf{k} \in \overline{\mathcal{K}}\setminus
	\{\mathbf{e}_{m_n}\}} p_{\mathbf{k}} \mathbf{k} >
\frac{\overline{\rho}({X}^{(n)})}{2^{n+1}} \left(\mathbf{1}_{m_n} -
\mathbf{e}_{m_n}\right), \\
&L p_{\mathbf{e}_{m_n}} \mathbf{e}_{m_n} >
\frac{\overline{\rho}({X}^{(n)})}{2^{n+1}} \mathbf{e}_{m_n}. \\
\end{aligned}
\end{equation}
Also given that with lower-rounded virtual queues there are no jobs of 
size $1$, the first equation of (\ref{eq:rho2}) becomes:
\begin{equation}\label{eq:rhounder}
\begin{aligned}
L \sum_{\mathbf{k} \in \overline{\mathcal{K}}\setminus
	\{\mathbf{e}_{m_n}\}} q_{\mathbf{k}}
\mathbf{k} > \frac{\underline{\rho}({X}^{(n)})}{2^{n+1}}
(\mathbf{1}_{m_n} - \mathbf{e}_{m_n}). \\
\end{aligned}
\end{equation}

Now if we replace $\overline{\rho}({X}^{(n)})$ with $\overline{\rho}^\star({X}^{(n)})$
and $\underline{\rho}({X}^{(n)})$ with $\underline{\rho}^\star({X}^{(n)})$,
inequalities in (\ref{eq:rhoover}) and (\ref{eq:rhounder})
must hold with equality by definition, i.e., 
\begin{equation}\label{eq:help2}
\begin{aligned}
&L \sum_{\mathbf{k} \in \overline{\mathcal{K}} \setminus
	\{\mathbf{e}_{m_n}\}} p_{\mathbf{k}} \mathbf{k} =
\frac{\overline{\rho}^\star({X}^{(n)})}{2^{n+1}} \left(\mathbf{1}_{m_n} -
\mathbf{e}_{m_n}\right) \\
& L p_{\mathbf{e}_{m_n}} \mathbf{e}_{m_n} =
\frac{\overline{\rho}^\star({X}^{(n)})}{2^{n+1}} \mathbf{e}_{m_n}. \\
&L \sum_{\mathbf{k} \in \overline{\mathcal{K}} \setminus
	\{\mathbf{e}_{m_n}\}} q_{\mathbf{k}}
\mathbf{k} = \frac{\underline{\rho}^\star({X}^{(n)})}{2^{n+1}}
(\mathbf{1}_{m_n} - \mathbf{e}_{m_n}).
\end{aligned}
\end{equation}
Notice that the direction of 
vectors $\sum_{\mathbf{k} \in \overline{\mathcal{K}}\setminus
	\{\mathbf{e}_{m_n}\}} p_{\mathbf{k}}\mathbf{k}$ 
and $\sum_{\mathbf{k} \in \overline{\mathcal{K}}\setminus
	\{\mathbf{e}_{m_n}\}} q_{\mathbf{k}}\mathbf{k}$
is the same. Given a solution $p_{\mathbf{k}}$, $\mathbf{k} \in \overline{\mathcal{K}}$, to \dref{eq:help2}, it is sufficient to choose $q_{\mathbf{k}}$ to be proportional to $p_{\mathbf{k}}$.
Assuming $p_{\mathbf{k}}$ and $q_{\mathbf{k}}$ are proportional, and noting that by definition,  
\begin{equation}
\begin{aligned}
 \sum_{\mathbf{k} \in \overline{\mathcal{K}}\setminus
	\{\mathbf{e}_{m_n}\}} p_{\mathbf{k}} = 1 - p_{\mathbf{e}_{m_n}}, \ 
\sum_{\mathbf{k} \in \overline{\mathcal{K}}\setminus
	\{\mathbf{e}_{m_n}\}} q_{\mathbf{k}} = 1,
\end{aligned}
\end{equation}
it should hold that $p_{\mathbf{k}} = \left(1 - p_{\mathbf{e}_{m_n}} \right) q_{\mathbf{k}}$
and hence $\overline{\rho}^\star({X}^{(n)}) = \left(1-p_{\mathbf{e}_{m_n}} \right)
\underline{\rho}^\star({X}^{(n)})$. From the second equation of (\ref{eq:rhoover})
we get $\overline{\rho}^\star({X}^{(n)}) = 2^{n+1} L p_{\mathbf{e}_{m_n}}$.
Using these two equations, we can write $\underline{\rho}^\star({X}^{(n)})$
as a function of $\overline{\rho}^\star({X}^{(n)})$, 
i.e.,
\begin{equation}
\underline{\rho}^\star({X}^{(n)}) = 
\frac{\overline{\rho}^\star({X}^{(n)})}{1 - 
	\frac{\overline{\rho}^\star({X}^{(n)})}{L 2^{n+1}}} 
\end{equation}
which implies
\begin{equation}\label{eq:overunder}
\underline{\rho}^\star({X}^{(n)}) - 
\overline{\rho}^\star({X}^{(n)}) =
\frac{\underline{\rho}^\star({X}^{(n)})^2}{L 2^{n+1}-
	\underline{\rho}^\star({X}^{(n)})}.
\end{equation}

By construction, $\underline{\rho}^\star({X}^{(n)})$ is a decreasing
sequence in $i$, so it is bounded from above by $\underline{\rho}^\star({X}^{(0)})$ 
and from below by $0$. Similarly $\overline{\rho}^\star({X}^{(n)})$ is an 
increasing sequence with the same bounds. By the monotone convergence theorem, 
the limits of both exist and by construction 
$\underline{\rho}^\star({X}^{(n)}) - 
\overline{\rho}^\star({X}^{(n)}) > 0$.
Then assuming $n$ is large enough so that
$2^{n+1} L > \underline{\rho}^\star({X}^{(0)})$, 
\begin{equation*}
0 \le \lim_{n \to \infty} \underline{\rho}^\star({X}^{(n)}) -
\overline{\rho}^\star({X}^{(n)})  \le \lim_{i \to \infty}
\frac{\underline{\rho}^\star({X}^{(0)})^2}{L 2^{n+1}-
	\underline{\rho}^\star({X}^{(0)})} = 0
\end{equation*}
and
$\lim_{n \to \infty} \overline{\rho}^\star({X}^{(n)}) = 
\overline{\rho}^\star=
\lim_{n \to \infty} \underline{\rho}^\star({X}^{(n)}) = 
\underline{\rho}^\star = \rho^\star.$

\textit{General Probability Distribution:} 
The proof for a general probability distribution follows 
similar arguments but the sequence of partitions has to
change to include points of discontinuity. Specifically, let the points of discontinuity be $x_j$ and their probability 
$P_j \equiv \mathds{P}(r=x_j)$, $0 < x_j \le 1$, $ j \in \mathds{N}$ 
(since $F_R$ is monotone, we know that the number of discontinuities is countable).
Define the partial sum of probabilities $ S_N \equiv \sum_{j=0}^{N} P_j$.
Sequence of partial sums is certainly convergent, as it is bounded above by $1$ 
and is increasing, so let its limit be
$
\lim_{N \to \infty} S_N = P.
$
By the convergence property, there exists $M_n$ such that
\begin{equation}
| S_k - P | < \frac{1}{2^{n+1}} \quad \forall k \ge M_n.
\end{equation}
Following this, we can define the continuous part of $F_R$ to be
\begin{equation}
F_R^{(c)}(x) \equiv \mathds{P}(r \le x) - 
\sum_{j \in \mathds{N}} P_j \mathds{1}(x_j \le x).
\end{equation}
By definition we have $F_R^{(c)}(1) = 1 - P$. As a result, we can define, similarly to the proof of 
continuous case,  
the $2^{n+1}$ intervals $X_i^{(n)} : (\xi_{i-1}^{(n)}, \xi_{i}^{(n)}]$ with 
$\xi_0^{(n)}=0$, $\xi_{2^{n+1}}^{(n)}=1$, and 
$F_R^{(c)}(\xi_i^{(n)})=\frac{i (1 - P)}{2^{n+1}}$ for $i=1, \ldots, 2^{n+1}-1$.
Compared to the proof of continuous case though, we need to change the 
partition ${X}^{(n)}$ to be the following $2^{n+1} + M_n + 2$ 
sets
\begin{equation}
\begin{aligned}
&\left\{x_i\right\}, \quad i = 0, \cdots M_n, \\
&\left\{x_i: i > M_n\right\}, \\
&(\xi_{i-1}^{(n)}, \xi_{i}^{(n)}]\setminus \left\{x_k: k \in \mathds{N}\right\},
\quad i = 1, \cdots 2^{n+1}, \\
\end{aligned}
\end{equation}

The virtual queue corresponding to set $\left\{x_i: i > M_n\right\}$ is
different from the rest in the way that rounding is done when working with upper-rounded or lower-rounded 
virtual queues. 
In the former case, we round up its jobs to $1$, and in the latter we round down its jobs to $0$. While this diverges from Definition~\ref{def:rounded}, it is convenient to round the job sizes in this special queue to $1$ and $0$ rather than to $\sup$ and $\inf$.

Next we use symbol $\Vert$ to describe concatenation of
two vectors, e.g. if $\mathbf{x} = (x_1, \cdots x_M)$ and 
$\mathbf{y} = (y_1, \cdots y_N)$ then $(\mathbf{x} \Vert \mathbf{y}) = 
(x_1, \cdots x_M, y_1, \cdots y_N)$.

The configurations will now have $m_n = 2^{n+1} + M_n + 1$ types.
When upper-rounded virtual queues are used, the workload vector is
\begin{equation*}
\begin{aligned}
&\bm{\rho}_1 = \overline{\rho}\left({X}^{(n)}\right) \times \\
&\left(
(P_0, \cdots, P_{M_n}) \Vert
(1 - P)/2^{n+1} \mathbf{1}_{2^{n+1}} + | S_{M_n} - P | \mathbf{e}_{2^{n+1}}
\right)
\end{aligned}
\end{equation*}
and when lower-rounded virtual queues are used, 
\begin{equation*}
\bm{\rho}_2 = \underline{\rho}\left({X}^{(n)}\right) \left(
(P_0, \cdots, P_{M_n}) \Vert
(1 - P)/2^{n+1} \left(\mathbf{1}_{2^{n+1}} - \mathbf{e}_{2^{n+1}}
\right)\right)
\end{equation*}
The vectors $\bm{\rho}_1$ and $\bm{\rho}_2$ have the same direction if one ignores the last index that
corresponds to job types of size $1$. If $m_n$ is that index, with similar 
arguments as in the proof of continuous case, we can conclude 
 $\overline{\rho}\left({X}^{(n)}\right) = 
(1 - p_{\mathbf{e}_{m_n}}) \underline{\rho}\left({X}^{(n)}\right)$
and
$\overline{\rho}\left({X}^{(n)}\right) = 
\frac{L p_{\mathbf{e}_{m_n}}}{\frac{1-P}{2^{n+1}} + (P-S_{M_n})} \ge
2^n L p_{\mathbf{e}_{m_n}}$, and, equivalently to~\dref{eq:overunder},
\begin{equation}
\begin{aligned}
&\underline{\rho}^\star({X}^{(n)}) - 
\overline{\rho}^\star({X}^{(n)}) \le
\frac{\underline{\rho}^\star({X}^{(n)})^2}{L 2^n-
	\underline{\rho}^\star({X}^{(n)})}.
\end{aligned}
\end{equation}

The rest of the arguments is the same as in
the continuous distribution case.

\subsection{Proof of Theorem~\ref{thm:GF1}}\label{pr:thm_GF1}
The \textit{state} of our system at time slot $t$ is
\begin{equation}
\state{S}(t) = (\mathcal{Q}(t), \bm{\mathcal{H}}(t)).
\end{equation}

Recall that $\mathcal{Q}(t)$ is the set of jobs in queue and its cardinality
is $|\mathcal{Q}(t)| = Q(t)$, and
$\bm{\mathcal{H}}(t)=(\mathcal{H}_\ell(t), \ell \in \mathcal{L})$ is
the set of scheduled jobs in servers $\mathcal{L}$.
We will denote the set of all feasible states as $\mathcal{S}$.

An equivalent description of the state assuming all job sizes are in
$(0, 1]$ is through a cumulative function of job sizes.
For a set of jobs (job sizes) $\mathcal{A}$, define a function 
$f_{\mathcal{A}}: [0, 1] \to \mathds{N}$ 
as
\begin{equation}
f_A(s) = |{x \in A: R_x < s}|.
\end{equation}
If we know $f_{\mathcal{A}}(s)$ for any $s \in (0, 1]$ 
then we also know $\mathcal{A}$. 
To describe $\state{S}(t)$ in our case,  we can use its equivalent 
representation using functions
$f_{\mathcal{Q}(t)}(s)$ and $f_{\mathcal{H}_\ell(t)}(s)$ for $\ell \in \mathcal{L}$.
The space of those functions is a Skorokhod space \cite{Billingsley1999}, 
for which, under the appropriate topology, we can show it is a Polish 
space \cite{ethier2009markov}. 
Our space is the product of $L+1$ of those Polish spaces and under the 
product topology, is also a Polish space.

The evolution of states over time defines a time homogeneous Markov chain,
for which we can prove its stability, by applying Theorem 1 of
\cite{Foss2004}, which we repeat next for convenience.

\begin{theorem*}\label{thm:stab_lyap}
	Let $\mathcal{X}$ be a Polish space and 	$V: \mathcal{X} \to \mathds{R}_+$ be a measurable function 
 with 
	$\sup_{x \in \mathcal{X}} V(x) = \infty$, 
	which we will refer to as Lyapunov function. Suppose there are two more 
	measurable functions $g: \mathcal{X} \to \mathds{N}$
	and $h: \mathcal{X} \to \mathds{R}$ with the following 
	properties:
	
	\begin{equation}
	\begin{aligned}
	&\inf_{x \in \mathcal{X}} h(x) > -\infty \\
	&\liminf_{V(x)\to \infty} h(x) > 0 \\
	&\sup_{V(x) \le N} g(x) < \infty, \quad \forall N > 0 \\
	&\limsup_{V(x)\to \infty} g(x)/h(x) < \infty
	\end{aligned}
	\end{equation}
	Suppose the drift of $V$ satisfies the following
	property in which $\mathds{E}_x[\cdot]$ is the conditional expectation, given  $X(t)=x$,
	\begin{equation}
	\mathds{E}_x\left[V(X(t+g(x)))-V(X(t))\right] \le -h(x).
	\end{equation}
	We define the return time to set 
	$\mathcal{X}_N = \{x \in \mathcal{X}: V(x) < N\}$ to be 
	\begin{equation}
	\tau_N = \inf\{n>0: V(X(t + n)) \le N\}
	\end{equation}
	
	Given the above, it follows that there is $N_0 > 0$, such that 
	for any $N > N_0$ and $x \in \mathcal{X}$, we have that 
	$\mathds{E}_{x} [\tau_N] < \infty$
\end{theorem*}

The theorem states that under certain conditions, the chain is 
positive recurrent to a certain subset of states for which Lyapunov 
function is bounded. From this, it can be inferred that the 
expected value of that function as time goes to $\infty$ is bounded \cite{Foss2004}.

In our proof, we pick as Lyapunov function the sum of 
sizes of jobs in the system divided by $\mu$. 
Proving that the size of jobs in system is bounded implies
that the number of jobs is also bounded under the theorem's 
assumption that job sizes have a lower bound.
If $R_i$ is the size of a job $i$, 
then the Lyapunov function is defined as
\begin{equation}
V(\state{S}(t)) \equiv V(t)= \sum_{i \in \mathcal{Q}(t)\bigcup \mathcal{H}(t)} R_i/\mu.
\end{equation}

Consider a time interval $[t_0, t_0+g(\state{S}(t_0))]$ and that
the state $\state{S}(t_0)$ is known. We want the drift to be 
negative over this time interval, for $V(t_0)$ large enough. 
Next we specify a function $g(\state{S}(t_0))$ 
and a function $h(\state{S}(t_0))$, that ensure conditions in 
Subtheorem~\ref{thm:stab_lyap} hold.
For $g(\state{S}(t_0))$ it is sufficient to assume its value is constant,
so in what follows we will have to specify this value which we will
denote by $N_2$.

We start by defining an event based on which we will differentiate 
the states with negative expected drift over $N_2$ time slots.

\begin{definition}\label{def:E_event}
	$E_{\state{S}(t_0), N_1, N_2}$ is the event
	that in time interval $[t_0, t_0 + N_2]$, every server will become
	less than half full for at most $N_1$ time units, for some 
	$N_1, N_2 \in \mathds{N}$, given the initial state $\state{S}(t_0)$.
\end{definition}

Next, we will pick the values $N_1, N_2$ such that the event
$E_{\state{S}(t_0),N_1,N_2}$ is almost certain when the 
total size of jobs in queue is large enough.

Let $t_{a, \ell}$ be the first time slot after $t_0$ that the server 
$\ell$ is less than half full and $t_{e,\ell}$ be the time after $t_0$ 
that the server empties.
Also define $Z_{\le 1/2}(t) = \sum_{j \in \mathcal{Q}(t) | R_j \le 1/2} R_j$ 
to be the sum of resource requirements of jobs in queue, whose
resource is not larger than $1/2$ and respectively 
$Z_{> 1/2}(t)$ is defined as the sum of resource requirements of the rest 
of the jobs in the queue.

The probability of $E_{\state{S}(t_0), N_1, N_2}$ can be bounded as
\begin{equation}\label{eq:help}
\begin{aligned}
&\mathds{P}\left( E_{\state{S}(t_0), N_1, N_2} \right) \ge \\
&\max\Big(\mathds{1}(Z_{\le 1/2}(t_0) > L N_2), \\
&\prod_{\ell \in \mathcal{L}} \mathds{P}\left(t_{e,\ell}-t_{a,\ell}
< N_1\right) \mathds{1}(Z_{> 1/2}(t_0) > L N_2)\Big).
\end{aligned}
\end{equation}

The logic behind the bound is that $E_{\state{S}(t_0),N_1,N_2}$ 
will be certainly true in one of the following two cases:
\begin{enumerate}
	\item If $(Z_{\le 1/2}(t_0) > LN_2)$ then the server will be more 
	than half full in next $N_2$ time slots. This is because if a server
	is less than half full, there will be at least one job whose resource
	will be less that $1/2$ that can fit in the server and those jobs are 
	enough so that there will be at least one job available in next $N_2$ time 
	slots.
	\item Once a server gets empty,	it will start serving all jobs of size 
	more than $1/2$ that are in queue at that time. While there is a job of 
	that kind that fits, the server will be more than half full.
	That will always be true in a time window of $N_2$ time slots after time
	$t_0$ if $Z_{> 1/2}(t_0) > L N_2$. That guarantees that all servers will
	have access to at least $L(N_2-T)$ jobs of size greater than $1/2$ 
	at time slot $t_0+T$. If they schedule the largest one at that time 
	slot, after getting empty, they will be able to schedule at least the 
	largest one out of the jobs that remain, in the next time slot. 
	In this case we only need to bound 
	$\mathds{P}\left(t_{e,\ell}-t_{a,\ell} < N_1\right)$ 
	which is the probability that the server will become empty in $N_1$ 
	time slots after being half empty.
\end{enumerate}
Note that if 
$\sum_{j \in \mathcal{Q}(t)} R_j > 2LN_2$,
then, by \dref{eq:help},
$\mathds{P}\left( E_{\state{S}(t_0), N_1, N_2} \right) \ge 
\prod_{\ell \in \mathcal{L}} \mathds{P}\left(t_{e,\ell}-t_{a,\ell}
< N_1\right)$. Next, we compute a lower bound on
$\mathds{P}\left(t_{e,\ell}-t_{a,\ell} < N_1\right)$.

Given that each job in service may depart during a time slot with probability
$\mu$ and there are at most $\lfloor 1/u \rfloor = K_{max}$ in a server,
we have the following bound:
\begin{equation}
\begin{aligned}
\mathds{P}\left(t_{e,\ell}-t_{a,\ell} < N_1 \right) \ge
1 - \left(1 - \mu^{K_{max}}\right)^{N_1}.
\end{aligned}
\end{equation}

If we want $\mathds{P}\left( E_{\state{S}(t_0), N_1, N_2} \right) > 1 - \epsilon_1$
it suffices to choose $N_1$ such that
\begin{equation}
\left(1 - \left(1 - \mu^{K_{max}}\right)^{N_1}\right)^L > 1 - \epsilon_1.
\end{equation}
Using the inequality $(1-x)^n > 1-nx$ for $n>0$ and $x<1$, we therefore need 
\begin{equation}\label{eq:N1}
N_1 > \frac{\log(\epsilon_1/L)}{\log(1 - \mu^{K_{max}})}.
\end{equation}

Next we give an upper bound on the maximum supportable workload, 
which we will use for comparison to the maximum workload supported
by \algA{}.
\begin{lemma*}
The maximum value of $\rho^\star$ is at most $\frac{L}{\bar R}$.
\end{lemma*}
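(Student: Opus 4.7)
The plan is a short work-conservation (capacity) argument. Over long horizons the $L$ servers can physically hold at most $L$ units of resource at any instant, while each arriving job injects $R_j$ units of resource that persist for a service duration of mean $1/\mu$; equating these two rates forces $\rho\bar{R}\le L$.

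Concretely, I would introduce the total resource occupancy $\Phi(t) = \sum_{\ell \in \calL}\sum_{j \in \calH_\ell(t)} R_j$. The packing constraint \dref{eq:packing} yields $\Phi(t)\le L$ for every $t$, hence $\sum_{t=0}^{T-1}\Phi(t)\le LT$. Exchanging the order of summation, this sum also equals $\sum_j R_j\,S_j^{(T)}$, where $S_j^{(T)}$ is the number of time slots that job $j$ occupied a server during $[0,T]$; for every job that both starts and finishes service inside $[0,T]$, $S_j^{(T)}$ coincides with its total service duration $S_j$, which is geometric with mean $1/\mu$ and independent of $R_j$.

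I would then pass to rates using stability. If some policy stabilizes workload $\rho$, then $\limsup_t\EE[Q(t)]<\infty$ and the number of in-service jobs is bounded (at most $L/u$, since all jobs have size at least $u$), so the long-run departure rate equals the long-run arrival rate $\lambda$. Letting $D(T)$ denote the number of job departures during $[0,T]$, a Wald-type identity applied to the i.i.d.\ sizes and the independent geometric service durations of the completed jobs gives
\[
\EE\!\left[\sum_{t=0}^{T-1}\Phi(t)\right] \;\ge\; \EE\!\left[\sum_{j\text{ departs in }[0,T]} R_j S_j\right] \;=\; \EE[D(T)]\cdot\frac{\bar{R}}{\mu}.
\]
Combining this lower bound with the capacity bound $\sum_{t=0}^{T-1}\Phi(t)\le LT$, dividing by $T$, and sending $T\to\infty$ yields $\lambda\bar{R}/\mu=\rho\bar{R}\le L$. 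Taking the supremum over stabilizable $\rho$ then delivers $\rho^\star\le L/\bar{R}$.

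The only delicate point is handling boundary contributions in the pathwise identity $\sum_t\Phi(t)=\sum_j R_j S_j^{(T)}$: jobs already present at $t=0$, and jobs whose service straddles time $T$. Both are dominated deterministically by the uniform bound $L/u$ on the number of in-service jobs and contribute at most a term of order $L/(u\mu)$ in expectation, which vanishes after division by $T$. A slightly cleaner alternative is to directly invoke the fact that $\EE[\Phi(t)]$ has a limit under any stabilizing policy and to equate the time-averaged $\EE[\Phi]$ with the mean work per job $\bar{R}/\mu$ times the arrival rate $\lambda$, bypassing Wald entirely and giving the same bound.
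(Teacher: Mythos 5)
Your argument is correct, but it takes a genuinely different route from the paper's. The paper works with the total resource in the whole system (queue plus servers), $U(t)=\sum_{j\in\calQ(t)\cup\calH(t)}R_j$, computes the one-step conditional drift $\EE[U(t+1)-U(t)\mid\calQ(t),\calH(t)]\ge\lambda\bar R-\mu L$ directly from the capacity constraint, and then invokes a transience criterion (Theorem~11.3 of~\cite{tweedie1976criteria}) to conclude that $U(t)\to\infty$ a.s.\ whenever $\rho>L/\bar R$; since $Q(t)$ dominates $U(t)$ up to a bounded server term, instability follows. You instead fix the server occupancy $\Phi(t)\le L$, sum over a horizon $T$, rewrite $\sum_t\Phi(t)$ as $\sum_j R_j S_j^{(T)}$, and apply Wald together with the fact that under stability the long-run departure rate equals $\lambda$, arriving at $\rho\bar R\le L$ in the contrapositive. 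Both are sound; the paper's one-step drift computation is shorter and needs no bookkeeping about which jobs complete inside $[0,T]$ or about Wald's hypotheses, whereas your work-conservation identity is perhaps more physically transparent. One small asymmetry worth noting: you invoke the bound $L/u$ on the number of in-service jobs (hence the minimum-size assumption $u>0$) to control boundary terms, while the paper's drift argument does not need any minimum-size assumption for this lemma; this is harmless here because the enclosing Theorem~\ref{thm:GF1} already assumes $u>0$, but the paper's proof is slightly more general. Your closing remark --- that one could equate the stationary mean of $\Phi$ with $\lambda\bar R/\mu$ directly via Little's-law-type reasoning --- is essentially a compressed version of the same idea and would also be fine, though it still presupposes the existence of a suitable limiting regime.
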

\begin{proof}  
	Let $U(t)$ to be the total sum of the job sizes (job resource requirements) 
	in the system at time $t$, i.e.,
	\begin{equation}
	U(t) = \sum_{j \in \calQ(t) \cup \calH(t)} R_j.
	\end{equation}

	It is easy to check that $\mathbb E[U(t+1)-U(t)|\calQ(t), \calH(t)] \geq \lambda \bar{R}- \mu L$, 
	where we have used the fact that the total sum of the job sizes in all the servers in the cluster 
	is at most $L$. The system will certainly be unstable in the sense that $U(t) \to \infty$, with 
	probability one, if $\lambda \bar{R}-\mu L>0 $ or equivalently $\rho > \frac{L}{\bar R}$ (see e.g. 
	Theorem~11.3 in \cite{tweedie1976criteria}). This in turn implies that $Q(t) \to \infty$ as 
	$Q(t) \ge U(t)$ and that the system is unstable for any $\rho > \frac{L}{\bar R}$ .
\end{proof} 

Next we will show that for any 
$\epsilon > 0$, the workload $\rho$ will be supportable by our algorithm if 
$\rho < (1 - \epsilon)\frac{L}{2\bar R}$. This essentially proves that 
the best supportable workload is at least half of the optimal.

The drift of the Lyapunov function
over $N_2$ time slots, given the initial state $\state{S}(t_0)$, 
is the difference between the sizes of jobs that arrive and 
the size of jobs that depart normalized by a factor $1/\mu$.
The expected average size of arrivals in one step is 
$
\lambda \bar{R},
$
where $\lambda$ is the average number of arrivals and $\bar{R}$ 
the average job size. 
Further, the expected size of departures at time $t$ given initial state
$\state{S}(t_0)$ is
$
\mathds{E}\Big[\sum_{\ell \in \mathcal{L}}
\sum_{j \in \mathcal{H}_\ell(t)} R_j | \state{S}(t_0) \Big] \mu.
$
Hence, we can compute the drift as
\begin{equation}
\begin{aligned}
&\mathds{E}[V(t_0+N_2) - V(t_0) | \state{S}(t_0)] = \\
&N_2 \rho \bar{R}
- \mathds{E}\Big[\sum_{t=t_0}^{t_0+N_2-1}\sum_{\ell \in \mathcal{L}}
\sum_{j \in \mathcal{H}_\ell(t)} R_j | \state{S}(t_0) \Big].
\end{aligned}
\end{equation}

According to Subtheorem~\ref{thm:stab_lyap} 
we want to find $h(\state{S}(t_0))$ such that
$-h(\state{S}(t_0)) \ge \mathds{E}[V(t_0+N_2) - V(t_0) | \state{S}(t_0)]$.
Obviously we can choose this function $h(\state{S}(t_0))$ such that
$\inf_{\state{S}(t_0)} h(\state{S}(t_0)) = \inf_{\state{S}(t_0)} \mathds{E}[V(t_0+N_2) - V(t_0) | \state{S}(t_0)] \ge 
-N_2 \rho \bar{R} > -\infty$.
Now given $V(t_0) > \frac{2L N_2}{\mu}$ we have

\begin{equation}
\begin{aligned}
&- N_2 \rho \bar{R}
+ \\
&\mathds{E}\Big[\sum_{t=t_0}^{t_0+N_2-1}\sum_{\ell \in \mathcal{L}}
\sum_{j \in \mathcal{H}_\ell(t)} R_j | \state{S}(t_0), 
V(t_0) > \frac{2L N_2}{\mu}  \Big]
\ge \\
&- N_2 \rho \bar{R} + \mathds{P}\left( E_{\state{S}(t_0),N_1,N_2} 
| V(t_0) > \frac{2L N_2}{\mu} \right) \\
&\mathds{E}\left[\sum_{t=t_0}^{t_0+N_2-1}
\sum_{\ell \in \mathcal{L}} \sum_{j \in \mathcal{H}_\ell(t)} R_j
| E_{\state{S}(t_0),N_1,N_2}\right] 
\ge^{(a)} \\
&- N_2 \rho \bar{R} + (1 - \epsilon_1) 
(N_2-L N_1) \sum_{\ell \in \mathcal{L}} 1/2,
\end{aligned}
\end{equation}
where (a) is due to the fact that for a duration of at least
$(N_2-LN_1)$ time slots, all servers will be at least half full,
which is a consequence of Definition~\ref{def:E_event}.
Therefore, for $V(t_0) > \frac{2L N_2}{\mu}$, $h(\state{S}(t_0))$ is given by
\begin{equation}
h(\state{S}(t_0)) = - N_2 \rho \bar{R} + (1 - \epsilon_1) 
(N_2-L N_1) \sum_{\ell \in \mathcal{L}} 1/2.
\end{equation}

Now if we need $h(\state{S}(t_0)) > \delta$ when $V(t_0) > \frac{2L N_2}{\mu}$,
it suffices that 
\begin{equation}
\begin{aligned}
&- N_2 \rho \bar{R} + (1 - \epsilon_1) 
(N_2-L N_1) \sum_{\ell \in \mathcal{L}} 1/2 > \delta,
\end{aligned}
\end{equation}
from which it follows
\begin{equation}\label{eq:rho_c}
\begin{aligned}
&\rho  < \frac{(1-\epsilon_1)(N_2-LN_1)L/2 - \delta}{N_2 \bar{R}}.
\end{aligned}
\end{equation}

Earlier we required that $\rho < (1 - \epsilon)\frac{L}{2\bar R}$, so from
Equation~(\ref{eq:rho_c}) we get the following sufficient 
condition for the drift to be negative:
\begin{equation}\label{eq:eps_cond}
(1 - \epsilon) <
{(1 - \epsilon_1)\left(1 - L N_1/N_2 \right)}
- \frac{2\delta}{L N_2}.
\end{equation}

We can choose parameters $\epsilon_1, N_2, \delta$ so that \dref{eq:eps_cond} is true.
The choice is not unique but the following are sufficient:

\begin{equation}\label{eq:param_cond}
\epsilon_1 = \epsilon/3, \ N_2 = \lceil 3 L N_1/\epsilon \rceil, \  \delta = L N_2 \epsilon/3.
\end{equation}

This gives  the following expressions for $g(\state{S}(t_0))$ and $h(\state{S}(t_0))$.
\begin{equation}\label{eq:bf-gh}
\begin{aligned}
& h(\state{S}(t_0)) = \left\{
\begin{array}{ll}
L N_2 \epsilon/3 & V(t_0) > \frac{2L N_2}{\mu} \\
-N_2 \rho \bar{R} & \mathrm{otherwise} \\
\end{array}
\right. \\
& g(\state{S}(t_0)) = N_2 = \lceil 3 L N_1/\epsilon \rceil \\
& N_1 > \frac{\log(\epsilon/(3L))}{\log(1 - \mu^{K_{max}})}. 
\end{aligned}
\end{equation}

\subsection{All Subcases of Proof of Proposition~\ref{prop:2/3}}\label{pr:prop_2/3}
We will analyze the remaining subcases that were not analyzed in the main proof.
They all fall under the assumption that $\sum_{i \in Z_{1}} {k}^{(X)}_{i} = 1$.
We also notice that in this case $\sum_{i \in Z_{0}} {k}^{(X)}_i = 0$,
otherwise capacity constraints are not satisfied.

We further distinguish three cases for the relative size of
$Q_{1}$ compared to $U$: $Q_{1} \ge 2U/3$, $2U/3 > Q_{1} \ge U/2$
and $U/2 > Q_{1}$

\textbf{Case 2.1.} $Q_{1} \ge 2U/3$:
Consider any $\mathbf{k} \in {\mathcal{K}}_{RED}^{(J)}$ such that $k_{1} = 1$.
For that $\mathbf{k}$, it follows that
\begin{equation}
\langle \mathbf{k}, \mathbf{Q} \rangle \ge k_{1} Q_{1} \ge
2U/3 = 2/3 \langle {\mathbf{k}}^{(X)}, \mathbf{Q}^{(X)}\rangle.
\end{equation}
This means that (\ref{eq:23_opt}) is satisfied for such a choice of 
$\mathbf{k}$.

\textbf{Case 2.2.} $2U/3 > Q_{1} \ge U/2$:

For this case we need to further consider two different outcomes for
the value of $\sum_{i \in Z_{2}} k_{i}^{(X)}$ which can be either 
$0$ or $1$. The first case was analyzed in the main proof so the analysis 
for $\sum_{i \in Z_{2}} k_{i}^{(X)} = 1$ follows here.

If $Q_{2} \ge U/3$ then configuration $2 \mathbf{e}_{2}$
has weight more than $2U/3$ and is the configuration we are looking for.
If not let $U^\prime = U - Q_{1} - Q_{2}$.
Then at least one of the following has to be true
\begin{equation}\label{eq:Q_up2}
\begin{aligned}
&Q_{2m} \ge U^\prime/2^{m-2}, \quad m = 2, \cdots, J-1 \\
&Q_{2m+1} \ge U^\prime/2^{m-1}, \quad m = 1, \cdots, J-1 \\
\end{aligned}
\end{equation}
If this is not the case, then we reach a contradiction as follows
\begin{equation}\label{eq:contradiction_222}
\begin{aligned}
&U^\prime = \langle {\mathbf{k}}^{(X)}, \mathbf{Q}^{(X)}\rangle
- Q_{1} - Q_{2} = \\
&\sum_{m=2}^{J-1} \sum_{i_0 \in Z_{2m}} {k}_{i_0}^{(X)} Q_{i_0}^{(X)} +
\sum_{m=1}^{J-1} \sum_{i_1 \in Z_{2m+1}} {k}_{i_1}^{(X)} Q_{i_1}^{(X)} < \\
&\sum_{m=2}^{J-1} {k}_{i_0}^{(X)} U^\prime/2^{m-2} +
\sum_{m=1}^{J-1} {k}_{i_1}^{(X)} U^\prime/2^{m-1} \le \\
& 6 \left(\sum_{m=2}^{J-1} \sum_{i_0 \in Z_{2m}} {k}_{i_0}^{(X)} 
2/3 \times 1/2^m + \right.\\
&\left. \sum_{m=1}^{J-1} \sum_{i_1 \in Z_{2m+1}} {k}_{i_1}^{(X)}
1/2 \times 1/2^m \right) < U^\prime
\end{aligned}
\end{equation}

In the last inequality, we applied the capacity constraint that the jobs
in configuration  other than the type-$1$ and type-$2$ should fit
in a space of at most $1/6$ (as the rest is covered by the
aforementioned jobs that we know they appear once in configuration).
In other words:
\begin{equation}
\begin{aligned}
& \sum_{m=2}^{J-1} \sum_{i_0 \in Z_{2m}} {k}_{i_0}^{(X)} 
2/3 \times 1/2^m +
\sum_{m=1}^{J-1} \sum_{i_1 \in Z_{2m+1}} {k}_{i_1}^{(X)}
1/2 \times 1/2^m < \\
& \sum_{m=2}^{J-1} \sum_{i_0 \in Z_{2m}} {k}_{i_0}^{(X)} \xi_{i_0} +
\sum_{m=1}^{J-1} \sum_{i_1 \in Z_{2m+1}} {k}_{i_1}^{(X)}
\xi_{i_1} \le 1/6 
\end{aligned}
\end{equation}

The configurations that satisfy inequality~(\ref{eq:23_opt})
depend on which of the inequalities in \dref{eq:Q_up2} is true.

If $Q_{2m} \ge U^\prime/2^{m-2}$ for some $m \in [2, \cdots, J-1]$ then 
inequality \dref{eq:23_opt} is true either for configuration $2 \mathbf{e}_{2}$
if $Q_2 \ge U/3$ or for configuration 
$\mathbf{e}_{1} + \lfloor 2^m/3 \rfloor \mathbf{e}_{2m}$ otherwise as
\begin{equation}
\begin{aligned}
\langle \mathbf{k}, \mathbf{Q} \rangle &=
Q_{1} + \lfloor 2^m/3 \rfloor Q_{2m} \ge
Q_{1} + 2^{m-2} Q_{2m} \\
&\ge Q_{1} + U^\prime = U - Q_{2} > 2U/3.
\end{aligned}
\end{equation}

Similarly if $Q_{2m+1} \ge U^\prime/(3 \cdot 2^{m-1})$
for some $m \in [1, \cdots J-1]$ then inequality \dref{eq:23_opt} is true
either for configuration $2 \mathbf{e}_{2}$
if $Q_2 \ge U/3$ or for configuration 
$\mathbf{e}_{1} + 2^{m-1} \mathbf{e}_{2m+1}$ as

\begin{equation}
\begin{aligned}
\langle \mathbf{k}, \mathbf{Q} \rangle &= 
Q_{1} + 2^{m-1} Q_{2m+1} \\
& \ge Q_{1} + U^\prime U \ge U - Q_{2} > 2U/3.
\end{aligned}
\end{equation}

\begin{align}
&Q_{1} + \lfloor 2^m/3 \rfloor Q_{2m}
\ge Q_{1} + 2^{m-2} Q_{2m} \ge \\
&Q_{1} + U^\prime \ge U - Q_{2} > 2U/3.
\end{align}

\textbf{Case 2.3.} $Q_{1} < U/2$:

At least one of the following inequalities is true:
\begin{equation}
\begin{aligned}
&Q_{2m} \ge 2U/3 \times 1/2^{m}, \quad m=1,\cdots, J-1 \\
&Q_{2m+1} \ge U/2 \times 1/2^{m}, \quad m=1,\cdots, J-1
\end{aligned}
\end{equation}

The conditions are the same as those of (\ref{eq:Q_U}) except that
$Q_{0}$ is not included now. We can again use proof by contradiction
as in (\ref{eq:contradiction1}) and get
\begin{equation}\label{eq:contradiction23}
\begin{aligned}
&U = \langle {\mathbf{k}}^{(X)}, \mathbf{Q}^{(X)} \rangle = \\
&\sum_{m=1}^{J-1} \sum_{i_0 \in Z_{2m}} {k}_{i_0}^{(X)} Q_{i_0}^{(X)} +
\sum_{m=0}^{J-1} \sum_{i_1 \in Z_{2m+1}} {k}_{i_1}^{(X)} Q_{i_1}^{(X)} < \\
&\left(\sum_{m=1}^{J-1} \sum_{i_0 \in Z_{2m}} {k}_{i_0}^{(X)} 2/3
\times 1/2^{m} + \right. \\
&\left. \sum_{m=0}^{J-1} \sum_{i_1 \in Z_{2m+1}} {k}_{i_1}^{(X)} 1/2 \times 1/2^{m} 
\right) U < U.
\end{aligned}
\end{equation}
Again the last inequality is due to the capacity
constraint of the server under the assumption that
$\sum_{i \in Z_{0}} {k}_{i}^{(X)} = 0$ and 
$\sum_{i \in Z_{1}} {k}_{i}^{(X)} = 1$.

Now if $Q_{2m} \ge 2U/3 \times 1/2^{m}$ for some $m=1,\cdots, J-1$
then configuration $2^m \mathbf{e}_{2m}$ will satisfy \dref{eq:23_opt}
while if $Q_{2m+1} \ge U/2 \times 1/2^{m}$ for some $m=1,\cdots, J-1$
then configuration $3 \cdot 2^{m-1} \mathbf{e}_{2m}$ 
will satisfy \dref{eq:23_opt}.

\subsection{Proof of Theorem~\ref{thm:VQS1}}\label{pr:thm_VQS1}

We need to show that for any $\rho$ such that 
$\rho < 2/3 \rho^\star$ the system is stable. 
For this we will first prove the following Lemma 
which is a consequence of Proposition~\ref{prop:2/3}.
\begin{lemma*}\label{prop:2/3rho}
	If $\rho < 2/3 \rho^\star$, then there is 
	$\mathbf{x} \in \mathrm{Conv}({\mathcal{K}}_{RED}^{(J)})$ and $\epsilon>0$, 
	such that $\bm{\rho} < (1-\epsilon) L \mathbf{x}$, where 
	$\bm{\rho} = \rho \bm{P}^{(I)}$, $I$ is the partition (\ref{eq:VQAintervals})
	and $L$ is the number of servers.
\end{lemma*}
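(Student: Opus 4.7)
The plan is to reduce the lemma to the single quantitative estimate $\rho_I^\star \ge \tfrac{2}{3}\rho^\star$, where
\[
\rho_I^\star := \sup\bigl\{\tilde\rho : \tilde\rho \bm{P}^{(I)} \le L\mathbf{x},\ \mathbf{x}\in\mathrm{Conv}(\mathcal{K}_{RED}^{(J)})\bigr\}
\]
is the throughput that a static randomization over the reduced configuration set can sustain on the $I$-partition workload vector. Once $\rho_I^\star \ge \tfrac{2}{3}\rho^\star$ is in hand, any $\rho < \tfrac{2}{3}\rho^\star$ admits some $\mathbf{x}\in\mathrm{Conv}(\mathcal{K}_{RED}^{(J)})$ with $\bm{\rho} \le L\mathbf{x}$; after perturbing $\mathbf{x}$ by a tiny convex combination with $\tfrac{1}{4J-4}\sum_{\mathbf{k}\in\mathcal{K}_{RED}^{(J)}}\mathbf{k}$ so that every coordinate of $L\mathbf{x}$ becomes strictly positive (feasible since each type $j=0,\ldots,2J-1$ appears in at least one configuration of $\mathcal{K}_{RED}^{(J)}$) and opening a small uniform slack, one obtains $\bm{\rho} < (1-\epsilon)L\mathbf{x}$ for some $\epsilon>0$.

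By LP duality,
\[
\rho_I^\star \;=\; L\,\inf_{\bm{Q}\ge 0,\ \langle\bm{P}^{(I)},\bm{Q}\rangle>0}\ \frac{\max_{\mathbf{k}\in\mathcal{K}_{RED}^{(J)}}\langle\mathbf{k},\bm{Q}\rangle}{\langle\bm{P}^{(I)},\bm{Q}\rangle},
\]
so it suffices to show that for each $\bm{Q}\ge 0$ and each $\rho' < \rho^\star$,
$L\max_{\mathbf{k}\in\mathcal{K}_{RED}^{(J)}}\langle\mathbf{k},\bm{Q}\rangle \ge \tfrac{2}{3}\rho'\langle\bm{P}^{(I)},\bm{Q}\rangle$, and then let $\rho'\uparrow\rho^\star$.

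To prove this bound, I would invoke Theorem~\ref{thm:OPT} at workload $\rho'$ to obtain a partition $X$ and $\mathbf{x}^{(X)}\in\mathrm{Conv}(\mathcal{K}^{(X)})$ with $\rho'\bm{P}^{(X)} < L\mathbf{x}^{(X)}$; replacing $X$ by its common refinement with $I$ (which only enlarges $\overline{\rho}^\star(X)$, hence preserves the strict inequality with possibly a new $\mathbf{x}^{(X)}$) I may assume $X$ refines $I$. Then construct a lifted dual vector $\bm{Q}^{(X)}$ from $\bm{Q}$ by the \emph{replication rule} $Q_i^{(X)} := Q_j$ whenever $i\in Z_j$. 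Using $\sum_{i\in Z_j}P_i^{(X)} = P_j^{(I)}$, this rule yields $\langle\bm{P}^{(X)},\bm{Q}^{(X)}\rangle = \langle\bm{P}^{(I)},\bm{Q}\rangle$. Moreover, since $Q_i^{(X)}\le Q_j$ for all $i\in Z_j$, Proposition~\ref{prop:2/3} gives $\max_{\mathbf{k}^{(X)}\in\mathcal{K}^{(X)}}\langle\mathbf{k}^{(X)},\bm{Q}^{(X)}\rangle \le \tfrac{3}{2}\max_{\mathbf{k}\in\mathcal{K}_{RED}^{(J)}}\langle\mathbf{k},\bm{Q}\rangle$. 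Inner-producting the Theorem~\ref{thm:OPT} inequality with $\bm{Q}^{(X)}\ge 0$ and chaining through these two ingredients produces the required estimate.

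The main obstacle is a technical point at the last step: Proposition~\ref{prop:2/3} is stated for $\bm{Q}^{(X)}$ arising from a genuine collection of jobs, in which case $\sum_{i\in Z_j}Q_i^{(X)} = Q_j$; my replicated $\bm{Q}^{(X)}$ violates that sum identity when $|Z_j|>1$. However, an inspection of the proof of Proposition~\ref{prop:2/3} shows that the sum identity is never actually invoked; the only property used of the coupling between $\bm{Q}$ and $\bm{Q}^{(X)}$ is the pointwise bound $Q_i^{(X)}\le Q_j$, which my construction satisfies (with equality). Hence the conclusion of Proposition~\ref{prop:2/3} applies to my $\bm{Q}^{(X)}$ verbatim, and a brief parenthetical remark (or, equivalently, a mild strengthening of the proposition's statement) suffices to close the argument.
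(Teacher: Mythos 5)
Your proof is correct and takes a genuinely different route from the paper's, though both hinge on Proposition~\ref{prop:2/3}. The paper's proof of the lemma does not pass through an explicit LP-dual formulation: it constructs the specific refining partitions $X^{+(n)}$ (common refinement of the $X^{(n)}$ from Theorem~\ref{thm:OPT} with $I$), then defines \emph{concentrated} lifted vectors $\bm{\underline P}^{(X^{+(n)})}$ and $\mathbf{\underline Q}^{(X^{+(n)})}$ that place all mass in each block $Z_j$ onto the single index $\underline{i}_j$ with the smallest rounded size. That concentration forces a side argument showing $\rho\bm{\underline P}^{(X^{+(n)})}$ is still supportable in the upper-rounded $X^{+(n)}$ system (it is, because it corresponds to shrinking job sizes), after which Proposition~\ref{prop:2/3} is applied to the pair $(\mathbf{Q}^{(I)},\mathbf{\underline Q}^{(X^{+(n)})})$, which satisfies the proposition's ``genuine jobs'' hypothesis (both the pointwise bound and the sum identity) by construction. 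You instead formulate the problem dually as an infimum of a ratio of linear functionals, \emph{replicate} the queue vector rather than concentrate it, and observe that only the pointwise bound $Q_i^{(X)}\le Q_j$ is ever used in the proof of Proposition~\ref{prop:2/3} — an observation that is correct (the one apparent exception, the identity $U'=U-Q_1=\sum(\ldots)$ in the Case 2 calculations, becomes ``$\le$'' under the pointwise-bound-only hypothesis, which still yields the intended contradiction; in fact the paper's own proof is already silently using only ``$\le$'' there, since only $k_{i^*}^{(X)}=1$ within $Z_1$ contributes). Your replication makes $\langle\bm{P}^{(X)},\bm{Q}^{(X)}\rangle=\langle\bm{P}^{(I)},\bm{Q}\rangle$ hold exactly, which eliminates the need for the paper's separate coupling/supportability argument; this is a real simplification. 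Your closing perturbation step (taking a small convex combination with the uniform average of $\mathcal{K}_{RED}^{(J)}$ to make $L\mathbf{x}$ strictly positive, then opening $\epsilon$-slack) correctly converts the weak inequality delivered by the supremum into the strict inequality the lemma asserts, using that every type $j\in\{0,\ldots,2J-1\}$ occurs in at least one configuration of $\mathcal{K}_{RED}^{(J)}$. The only caveat, which you already flag, is that Proposition~\ref{prop:2/3} as stated demands $\sum_{i\in Z_j}Q^{(X)}_i=Q_j$, so your argument technically requires the mild restatement you propose; once that remark is made the proof is complete.
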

\begin{proof}
	In the proof of Theorem~\ref{thm:OPT} we constructed a sequence of partitions
	$X^{(n)}$, $n \in \mathds{N}$, whose maximum supportable workload approaches $\rho^\star$. In this proof we will consider the partitions $X^{+(n)}$ 
	generated from partitions $X^{(n)}$ the way described next.
	To simplify description we provide the definition of $X^{+(n)}$ only for
	the case of continuous probability distribution function although definition
	and result can be generalized. 
	As a reminder, for the continuous probability distribution case 
	of Theorem~\ref{thm:OPT}, $X^{(n)}$ is a collection of intervals
	$X^{(n)}_i: (\xi_{i-1}^{(n)}, \xi_{i}^{(n)}]$ for $i=1, \ldots, 2^{n+1}$. 
	Partition $X^{+(n)}$, $n \in \mathds{N}$ will be a collection of $c_n$
	intervals $X_i^{+(n)}:(\xi_{i-1}^{+(n)}, \xi_{i}^{+(n)}]$ where
	$\xi_{0}^{+(n)} = 0$, $\xi_{c_n}^{+(n)} = 1$ and $\xi_{i}^{+(n)}$
	is the $i$-th largest element in the set 
	$\{\xi_{i^\prime}^{(n)}, i^\prime=1, \ldots, 2^{n+1}-1\} \cup 
	\{1/2^m, m=0,\cdots,J-1 \} \cup \{2/3 \times 1/2^m, m=0, \cdots, J-1 \}$
	for $i=1,\ldots, c_n-1$.

	The partition $X^{+(n)}$ is finer than $X^{(n)}$ so
	$\overline{\rho}^\star({X^{(n)}}) \le \overline{\rho}^\star({X^{+(n)}})$ 
	and 
	\begin{equation}\label{eq:limit}
	\lim_{n \to \infty} \overline{\rho}^\star({X^{+(n)}}) = \rho^\star.
	\end{equation}
	
	Consider now any $\rho < \overline{\rho}^\star({X^{+(n)}})$ so 
	the workload vector $\rho\bm{P}^{(X^{+(n)})}$ is supportable under 
	the assumption of upper-rounded virtual queues.

	Next we define sets similar to sets $Z_j$ of Proposition~\ref{prop:2/3}
	which we denote by $Z_j^{+(n)}$ for $j=1, \ldots, 2J$, $n \in \mathds{N}$.
	We will have that for $i=1, \ldots, c_n$, 
	$i \in Z_j^{+(n)}$ iff $\xi^{+(n)}_i \in I_j$ where
	$I_j$ are the intervals defined in (\ref{eq:VQAintervals}).
	We also define 
	$\underline{i}_j = \arg\min_{i \in Z_j^{+(n)}} \xi^{(n)}_i$
	and the probability vector $\bm{\underline P}^{(X^{+(n)})}$ as
	\begin{equation}\label{eq:P}
	\underline{P}^{(X^{+(n)})}_{i} = \left\{ 
	\begin{array}{ll}
	\sum_{i^\prime \in Z_j^{+(n)}} {P}^{(X^{+(n)})}_{i^\prime} & 
	\text{if} \quad i = \underline{i}_j \\
	0 & \text{otherwise}
	\end{array}
	\right.
	\end{equation}
	
	We see that the workload vector $\rho \bm{\underline P}^{(X^{+(n)})}$ 
	is supportable if $\rho \bm{P}^{(X^{+(n)})}$ is supportable. 
	This is because the original workload $\rho \bm{P}^{(X^{+(n)})}$
	is equivalent to $\rho \bm{\underline P}^{(X^{+(n)})}$ 
	if job sizes of all arriving job are modified the following way.
	Jobs that join the $\rm{VQ}_i$, where $i \in Z_j^{+(n)}$ and
	$i \neq \underline{i}_j$, reduce their size such that they join 
	$\rm{VQ}_{\underline{i}_j}$ instead, while if $i = \underline{i}_j$ 
	job sizes remain unchanged. Since all job sizes are reduced or remain 
	the same and system is stable without this change, then system should
	also be stable with this modification.
	
	Similarly to (\ref{eq:P}), let $\bm{\underline Q}^{(X^{+(i)})}$ be defined as
	\begin{equation}
	{\underline{Q}}^{(X^{+(n)})}_{i} = \left\{ 
	\begin{array}{ll}
	\sum_{i^\prime \in Z_j^{+(n)}} {Q}^{(X^{+(n)})}_{i^\prime} & 
	\text{if} \quad i = \underline{i}_j \\
	0 & \text{otherwise}
	\end{array}
	\right.
	\end{equation}
	
	If $\overline{\mathcal{K}}^{+(n)}$ is the set of feasible
	configurations under upper-rounded virtual queues assumption
	for partition $X^{+(n)}$, we should also have 
	\begin{equation}\label{eq:conv}
	\overline{\rho}^\star({X^{+(n)}}) \bm{P}^{(X^{+(n)})} \le L \bm{x}, \ 
	\bm{x} \in \mathrm{Conv}(\overline{\mathcal{K}}^{+(n)}).
	\end{equation}

	If now $\rho < 2/3 \rho^\star$, it means 
	\begin{equation}\label{eq:rho}
	\exists i \in \mathds{N}: \rho < 2/3\overline{\rho}^\star({X^{+(i)}}),
	\end{equation}
	because of (\ref{eq:limit}). 
	Eventually we have
	\begin{equation}\label{eq:rhoP1}
	\begin{aligned}
	&\langle \rho \bm{P}^{(I)}, \mathbf{Q}^{(I)} \rangle =^{(a)}
	\langle \rho \bm{\underline P}^{(X^{+(n)})}, 
	\mathbf{\underline Q}^{(X^{+(n)})} \rangle <^{(b)} \\
	&2/3\langle \overline{\rho}^\star({X^{+(n)}}) \bm{\underline P}^{(X^{+(n)})}, 
	\mathbf{\underline Q}^{(X^{+(n)})} \rangle \le^{(c)}  \\
	&2/3 \langle L \bm{x}, \mathbf{\underline Q}^{(X^{+(n)})} \rangle.
	\end{aligned}
	\end{equation}
	Equality (a) follows from the fact that vectors $\bm{P}^{(I)}$
	and $\bm{\underline P}^{(X^{+(n)})}$ are identical, if $0$ entries
	are ignored in the latter vector, while the same property is true 
	for vectors $\mathbf{Q}^{(I)}$ and $\mathbf{\underline Q}^{(X^{+(n)})}$. 
	Then (b) follows from (\ref{eq:rho}) 
	and in (c) we used $\bm{x}$ from (\ref{eq:conv}). 
	Given that $\mathrm{Conv}(\overline{\mathcal{K}}^{+(n)})$ is a convex set, 
	there should be a $\mathbf{k}^{(X^{+(n)})} \in \overline{\mathcal{K}}^{+(n)}$ such that
	\begin{equation}\label{eq:rhoP2}
	2/3\langle L \bm{x}, \mathbf{\underline Q}^{(X^{+(n)})} \rangle \le
	2/3\langle L \mathbf{k}^{(X^{+(n)})}, \mathbf{\underline Q}^{(X^{+(n)})} \rangle,
	\end{equation}
	and eventually because of Proposition~\ref{prop:2/3}, there is a 
	${\mathbf{k}} \in {\mathcal{K}}_{RED}^{(J)}$ such that
	\begin{equation}\label{eq:rhoP3}
		2/3\langle L \mathbf{k}^{(X^{+(n)})}, \mathbf{\underline Q}^{(X^{+(n)})} \rangle
		\le L\langle \mathbf{k}, \mathbf{Q}^{(I)} \rangle.
	\end{equation}
	
	Using (\ref{eq:rhoP1}), (\ref{eq:rhoP2}) and (\ref{eq:rhoP3}), it follows  
	that for any virtual queue size vector $\mathbf{Q}^{(I)}$ under partition 
	$I$, there exists $\mathbf{k} \in {\mathcal{K}}_{RED}^{(J)}$ such that
	\begin{equation}
		\langle \rho \bm{P}^{(I)}, \mathbf{Q}^{(I)} \rangle <
		L\langle \mathbf{k}, \mathbf{Q}^{(I)} \rangle.
	\end{equation} 
	As a result, $\rho \bm{P}^{(I)}$ is in the interior of 
	$\mathrm{Conv}({\mathcal{K}}_{RED}^{(J)})$ so there is 
	$\mathbf{x} \in \mathrm{Conv}({\mathcal{K}}_{RED}^{(J)})$ and $\epsilon > 0$
	such that
	\begin{equation}
		\bm{\rho} = \rho \bm{P}^{(I)} < (1 - \epsilon) L \mathbf{x}.
	\end{equation}
\end{proof}

The \textit{state} of our system at time slot $t$ is
\begin{equation}\label{eq:stateVQA}
\state{S}(t) = (\bm{\mathcal{Q}}(t), \bm{\mathcal{H}}(t)).
\end{equation}
$\bm{\mathcal{Q}}(t)$ is a vector of sets of jobs in queues
which is equal to $(\mathcal{Q}_{j}(t), j=0, \cdots, 2J-1)$.
The cardinality of each set of jobs is equal to the corresponding queue
size, i.e. $|\mathcal{Q}_{j}(t)| = Q_{j}(t)$.
$\bm{\mathcal{H}}(t)=(\mathcal{H}_\ell(t) \quad \ell \in \mathcal{L})$ is
the vector of sets of scheduled jobs in servers.

We will structure our proof again around the result of 
Subtheorem~\ref{thm:stab_lyap}. The Lyapunov function that we use is
\begin{equation}\label{eq:lyap-VQS}
V(\state{S}(t))\equiv V(t) = \sum_{j=0}^{2J-1} \frac{Q_{j}(t)^2}{2\mu}.
\end{equation}
Given state $\state{S}(t_0)$ at a reference time $t_0$
is known, we want to describe functions 
$g(\state{S}(t_0))$ and $h(\state{S}(t_0))$ that satisfy the conditions of 
Subtheorem~\ref{thm:stab_lyap}.
Function $g(\state{S}(t_0))$ will be fixed and equal to $N_2$. 
The value of $N_2$ as well as the function $h(\state{S}(t_0))$ 
will be specified later.

Given that $h(\state{S}(t_0))$ has to be eventually negative we will 
differentiate the
initial states for which this will happen based on the following event.
Given the state of the system $\state{S}(t_0)$,
we define the event $E_{\state{S}(t_0),N_1,N_2,\gamma}$ to be the event
that in time interval $[t_0, t_0 + N_2]$, every server will be
scheduling according to a configuration whose weight is at most $\gamma$ fraction
of that of maximum weight configuration in $\mathcal{K}^{(J)}_{RED}$,
for at most $N_1$ time units, for some $N_1, N_2 \in \mathds{N}$.
In all the following results we assume $\gamma \in (0, 1)$ so it 
can be arbitrarily close to $1$, but strictly less than that.
The next lemma states the conditions under which event 
$E_{\state{S}(t_0),N_1,N_2,\gamma}$ is almost certain. 

\begin{lemma*}\label{lem:E}
	We can ensure $\mathds{P}(E_{\state{S}(t_0),N_1,N_2,\gamma}) > 1 - \epsilon$,
	if $N_1 >\frac{\log{(\epsilon/(2L))}}{\log{\left(1 - \mu^{K_{max}}\right)}}$
	and $\left\Vert \mathbf{Q}(t_0) \right\Vert > B_{\gamma} \frac{N_2}{\epsilon}$
	where $B_{\gamma}$ some constant and $\left\Vert \cdot \right\Vert$ the 
	Euclidean norm.
\end{lemma*}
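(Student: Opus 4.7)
The plan is to decompose the failure of $E_{\state{S}(t_0),N_1,N_2,\gamma}$ into two events and bound each by $\epsilon/2$, so that a union bound yields the claim. The first event is that some server fails to become empty during the first $N_1$ slots after $t_0$; the second is that, once a server renews its configuration to the current max-weight element of $\mathcal{K}^{(J)}_{RED}$, the queue vector drifts enough during the remaining window to spoil $\gamma$-optimality. The first event is controlled by $N_1$ and the second by $\|\mathbf{Q}(t_0)\|$.

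For the first event, each server holds at most $K_{max}=2^J$ jobs because of the minimum job size $1/2^J$. Each in-service job departs independently with probability $\mu$ per slot, so the probability a given server fails to empty in one particular slot is at most $1-\mu^{K_{max}}$, and hence it fails to empty during $N_1$ consecutive slots with probability at most $(1-\mu^{K_{max}})^{N_1}$. Choosing $N_1>\log(\epsilon/(2L))/\log(1-\mu^{K_{max}})$ and taking a union bound over the $L$ servers, every server has some renewal time $\tau^\ell\in[t_0,t_0+N_1]$ with probability at least $1-\epsilon/2$. On this event each server spends at most $N_1$ slots with a possibly sub-optimal configuration before being reset to $\mathbf{k}^\star(\tau^\ell)=\argmax_{\mathbf{k}\in\mathcal{K}^{(J)}_{RED}}\langle\mathbf{k},\mathbf{Q}(\tau^\ell)\rangle$.

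For the second event, set $K_\infty:=\max_{\mathbf{k}\in\mathcal{K}^{(J)}_{RED}}\|\mathbf{k}\|$ and $W(t):=\max_{\mathbf{k}\in\mathcal{K}^{(J)}_{RED}}\langle\mathbf{k},\mathbf{Q}(t)\rangle$. For any $t\in[\tau^\ell,t_0+N_2]$, applying the Cauchy--Schwarz inequality to the Lipschitz map $\mathbf{k}\mapsto\langle\mathbf{k},\mathbf{Q}(\cdot)\rangle$ gives
\begin{equation*}
W(t)-\langle\mathbf{k}^\star(\tau^\ell),\mathbf{Q}(t)\rangle\le 2K_\infty\|\mathbf{Q}(t)-\mathbf{Q}(\tau^\ell)\|.
\end{equation*}
Since the singleton-type configurations $2^m\mathbf{e}_{2m}$, $3\cdot 2^{m-1}\mathbf{e}_{2m+1}$, and $\mathbf{e}_1+2^{m-1}\mathbf{e}_{2m+1}$ all lie in $\mathcal{K}^{(J)}_{RED}$, we have $W(t)\ge\|\mathbf{Q}(t)\|_\infty\ge\|\mathbf{Q}(t)\|/\sqrt{2J}$, so $\langle\mathbf{k}^\star(\tau^\ell),\mathbf{Q}(t)\rangle\ge\gamma W(t)$ whenever $\|\mathbf{Q}(t)-\mathbf{Q}(\tau^\ell)\|\le (1-\gamma)\|\mathbf{Q}(t)\|/(2K_\infty\sqrt{2J})$. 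Since $\tau^\ell,t\in[t_0,t_0+N_2]$, it suffices to bound $\max_{s\in[t_0,t_0+N_2]}\|\mathbf{Q}(s)-\mathbf{Q}(t_0)\|$ uniformly. Departures contribute at most $LK_{max}N_2$ to any coordinate deterministically, while the total arrivals $A_{\mathrm{tot}}$ over $N_2$ slots satisfy $\mathds{E}[A_{\mathrm{tot}}]=\lambda N_2$, and Markov's inequality gives $A_{\mathrm{tot}}\le 2\lambda N_2/\epsilon$ with probability at least $1-\epsilon/2$. Hence, on this high-probability event, the drift is at most $C N_2/\epsilon$ for a constant $C=C(\lambda,L,K_{max})$, and choosing $B_\gamma:=8K_\infty\sqrt{2J}\,C/(1-\gamma)$ with $\|\mathbf{Q}(t_0)\|>B_\gamma N_2/\epsilon$ ensures both $\|\mathbf{Q}(t)\|\ge\|\mathbf{Q}(t_0)\|/2$ and $\|\mathbf{Q}(t)-\mathbf{Q}(\tau^\ell)\|\le 2CN_2/\epsilon\le(1-\gamma)\|\mathbf{Q}(t)\|/(2K_\infty\sqrt{2J})$, so the configuration remains $\gamma$-optimal throughout $[\tau^\ell,t_0+N_2]$.

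Combining the two failure probabilities yields $\mathds{P}(E_{\state{S}(t_0),N_1,N_2,\gamma})\ge 1-\epsilon$ as claimed. The main obstacle is the uniform-in-time control of the drift in the second step, since $\tau^\ell$ is random and the near-optimality must hold simultaneously for all $t\in[\tau^\ell,t_0+N_2]$; this is handled by replacing pointwise comparisons with a worst-case bound on $\max_s\|\mathbf{Q}(s)-\mathbf{Q}(t_0)\|$, which in turn follows from deterministic monotone bounds on departures combined with Markov's inequality on the total arrivals.
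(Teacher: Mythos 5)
Your proof is correct and follows the same two-part skeleton as the paper: (i) each server empties within $N_1$ slots with high probability (via the per-slot lower bound $\mu^{K_{max}}$), and (ii) once a server renews, the queue vector cannot drift far enough in the remaining window to spoil $\gamma$-optimality, controlled through $\|\mathbf{Q}(t_0)\|$. The union bound over $L$ servers in your step (i) matches the paper's $1-\epsilon/(2L)$-per-factor product bound, and your Markov bound on arrivals (with a deterministic cap of $LK_{max}$ departures per slot) is equivalent to the paper's Markov bound on arrivals and departures combined.

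Where you genuinely diverge is the geometric argument in step (ii). The paper transfers the condition $\langle \mathbf{k}(\tau), \mathbf{Q}(t)\rangle < \gamma\langle\mathbf{k},\mathbf{Q}(t)\rangle$ into a lower bound on $\|\mathbf{Q}(t)-\mathbf{Q}(\tau)\|$ via an angle-type constant $B_{\gamma1}$ between the half-spaces $\{\langle\mathbf{k}(\tau)-\mathbf{k},\cdot\rangle\geq 0\}$ and $\{\langle\mathbf{k}(\tau)-\gamma\mathbf{k},\cdot\rangle<0\}$, defined by a minimum over pairs of configurations. You instead use a plain Lipschitz/Cauchy--Schwarz estimate, $W(t)-\langle\mathbf{k}^\star(\tau^\ell),\mathbf{Q}(t)\rangle\leq 2K_\infty\|\mathbf{Q}(t)-\mathbf{Q}(\tau^\ell)\|$, combined with the observation that $\mathcal{K}^{(J)}_{RED}$ ``covers'' every coordinate so that $W(t)\geq\|\mathbf{Q}(t)\|_\infty\geq\|\mathbf{Q}(t)\|/\sqrt{2J}$. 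This is cleaner: it makes the dependence $B_\gamma=\Theta\bigl(1/(1-\gamma)\bigr)$ explicit and avoids the paper's somewhat opaque projection constant (whose nonvanishing as $\gamma\to 1$ requires a separate check), at the small cost of a possibly larger constant in $\|\mathbf{Q}(t_0)\|$ than the paper's tighter angular bound. One minor note: your claim that $\|\mathbf{Q}(t)-\mathbf{Q}(\tau^\ell)\|\leq 2CN_2/\epsilon$ can actually be taken with no factor of two, since $[\tau^\ell,t]\subseteq[t_0,t_0+N_2]$, so the arrivals-plus-departures bound applies directly to that subinterval; this only sharpens your constant slightly.
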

\begin{subproof}
	Let $t_{e_\ell(i)}$ denote the $i$th time that server $\ell$ gets empty
	between time slots $t_0$ and $t_0 + N_2$.
	We notice that we can bound event $E_{\state{S}(t_0), N_1, N_2,\gamma}$, by
	the event that $t_{e_\ell(1)} - t_0 < N_1$ and for the next 
	$N_2 - t_{e_\ell(1)}$ time slots, the weight of configuration will always 
	be greater that $\gamma$ fraction of the optimal.
	\begin{equation}
	\begin{aligned}
	&\mathds{P}\left( E_{\state{S}(t_0), N_1, N_2,\gamma} \right) \ge
	\prod_{\ell \in \mathcal{L}} \mathds{P}\left(t_{e_\ell(1)}-t_0
	< N_1\right) \\
	&\mathds{P}\left(\langle\mathbf{k}(t_{e_\ell(i_n)}), Q(t) \rangle -
	\langle \gamma\mathbf{k}^\star(t), Q(t) \rangle > 0, \quad 
	t > t_{e_\ell(1)}\right)
	\end{aligned}
	\end{equation}
	where $t \in [t_0, t_0 + N_2]$, $i_n$ the latest time slot less than $t$
	that the server got empty and $\mathbf{k}^\star(t)$ is the max-weight
	configuration at time slot $t$.
	
	A bound for the first term is
	\begin{equation}\label{eq:P_dn}
	\begin{aligned}
	\mathds{P}\left(t_{e_\ell(1)}-t_0 < N_1 \right) \ge
	1 - \left(1 - \mu^{K_{max}}\right)^{N_1}.
	\end{aligned}
	\end{equation}
	When a server becomes empty for the $i$th time,
	the following inequality holds 
	\begin{equation}\label{eq:emptyA1}
	\langle\mathbf{k}(t_{e_\ell(i)}), \mathbf{Q}(t_{e_\ell(i)})\rangle \ge
	\langle\mathbf{k}, \mathbf{Q}(t_{e_\ell(i)}) \rangle 
	\quad \forall \mathbf{k} \in {\mathcal{K}}_{RED}^{(J)}.
	\end{equation}
	The condition $\langle \mathbf{k}(t_{e_\ell(i_n)}), Q(t) \rangle -
	\langle \gamma\mathbf{k}^\star(t), Q(t) \rangle > 0$
	will be violated if	for at least one $\mathbf{k} \in {\mathcal{K}}_{RED}^{(J)}$,
	\begin{equation}\label{eq:emptyA2}
	\langle\mathbf{k}(t_{e_\ell(i)}), \mathbf{Q}(t) \rangle <
	\gamma \langle\mathbf{k}, \mathbf{Q}(t) \rangle.
	\end{equation}
	
	As a result, for this particular $\mathbf{k} \in {\mathcal{K}}_{RED}^{(J)}$,
	we have,
	\begin{equation}\label{eq:DQ}
	\begin{aligned}
	&\left \Vert \mathbf{Q}(t_{e_\ell(i)}) - \mathbf{Q}(t)\right \Vert > \\
	&\left \Vert \mathbf{Q}(t_{e_\ell(i)}) \right \Vert
	\frac{\left | \langle \mathbf{k}(t_{e_\ell(i)}) - \mathbf{k},
		\mathbf{k}(t_{e_\ell(i)}) - \gamma \mathbf{k} \rangle \right |}
	{\left \Vert \mathbf{k}(t_{e_\ell(i)}) - \mathbf{k} \right \Vert
		\left \Vert \mathbf{k}(t_{e_\ell(i)}) - \gamma \mathbf{k} \right \Vert}.
	\end{aligned}
	\end{equation}
	Given that $\mathbf{A}[t_1, t_2]$ is the vector of
	arrivals per type in time interval $[t_1, t_2)$, $A[t_1, t_2]$
	the absolute number of arrivals in the same interval and
	$\mathbf{D}[t_1, t_2]$, ${D}[t_1, t_2]$ the respective
	values for departures, then
	\begin{equation}\label{eq:AD}
	\begin{aligned}
	&\left \Vert \mathbf{Q}(t_{e_\ell(i)}) - \mathbf{Q}(t) \right \Vert =
	\left \Vert \mathbf{A}[t_{e_\ell(i)}, t] - \mathbf{D}[t_{e_\ell(i)}, t]
	\right \Vert \le \\
	&\left \Vert \mathbf{A}[t_{e_\ell(i)}, t] \right \Vert +
	\left \Vert \mathbf{D}[t_{e_\ell(i)}, t] \right \Vert \le \\
	&{A}[t_{e_\ell(i)}, t] + {D}[t_{e_\ell(i)}, t]
	\end{aligned}
	\end{equation}
	
	Setting
	\begin{equation}\label{eq:Cb}
	B_{\gamma1} = \min_{\mathbf{k^\prime}, \mathbf{k}}
	\frac{\left | \langle \mathbf{k^\prime} - \mathbf{k},
		\mathbf{k^\prime} - \gamma \mathbf{k} \rangle \right |}
	{\left \Vert \mathbf{k^\prime} - \mathbf{k} \right \Vert
		\left \Vert \mathbf{k^\prime} - \gamma \mathbf{k} \right \Vert}
	\end{equation}
	we can as a result claim that
	\begin{equation}\label{eq:P_AD}
	\begin{aligned}
	&\mathds{P}\left(
	\langle\mathbf{k}(t_{e_\ell(i_n)}), Q(t)\rangle -
	\langle \gamma\mathbf{k}^\star(t), Q(t) \rangle > 0, t > t_{e_\ell(1)}\right) > \\
	&\mathds{P}\left( {A}[t_{e_\ell(1)}, t] + {D}[t_{e_\ell(1)}, t] <
	B_{\gamma1} \left \Vert \mathbf{Q} (t_{e_\ell(1)}) \right \Vert \right) > \\
	&\mathds{P}\left( {A}[t_0, t_0 + N_2] + {D}[t_0, t_0 + N_2] <
	B_{\gamma1} \left \Vert \mathbf{Q} (t_0) \right \Vert \right) >  \\
	&1 - \mathds{P}\left( {A}[t_0, t_0 + N_2] >
	B_{\gamma1}/2 \left \Vert \mathbf{Q}(t_0) \right \Vert \right) \\
	&\mathds{P}\left( {D}[t_0, t_0 + N_2] >
	B_{\gamma1}/2 \left \Vert \mathbf{Q}(t_0) \right \Vert \right) \ge \\
	&1 - \frac{(\lambda + \mu K_{max} L)N_2}
	{B_{\gamma1}/2 \left \Vert \mathbf{Q}(t_0)\right \Vert}.
	\end{aligned}
	\end{equation}
	
	Combining Equations (\ref{eq:P_dn}) and (\ref{eq:P_AD}), if we need
	$\mathds{P}\left( E_{\state{S}(t_0),N_1,N_2,\gamma} \right) > 1 - \epsilon$,
	it suffices to have
	\begin{equation}
	\left(1 - \left(1 - \mu^{K_{max}}\right)^{N_1}\right)
	\left(
	1 - \frac{(\lambda + \mu K_{max} L)N_2}
	{B_{\gamma1}/2 \left \Vert \mathbf{Q}(t_0)\right \Vert}
	\right) > 1 - \epsilon/L
	\end{equation}
	or
	\begin{equation}
	1 - \left(1 - \mu^{K_{max}}\right)^{N_1} > 1 - \epsilon/(2L)
	\end{equation}
	which  implies
	\begin{equation}
	N_1 > \frac{\log{(\epsilon/(2L))}}
	{\log{\left(1 - \mu^{K_{max}}\right)}}
	\end{equation}
	and
	\begin{equation}
	\begin{aligned}
	& 1 - \frac{(\lambda + \mu K_{max} L)N_2}{B_{\gamma1}/2 \left \Vert \mathbf{Q}(t_0)
		\right \Vert} > 1 - \epsilon/(2L)
	\end{aligned}
	\end{equation}
	which implies
	\begin{equation}\label{eq:Qt_bound1}
	\begin{aligned}
	&\left \Vert \mathbf{Q}(t_0) \right \Vert >
	\frac{(\lambda + \mu K_{max} L)N_2}{B_{\gamma1} \epsilon /(4L)}
	\end{aligned}
	\end{equation}
	
	The Lemma is true for $B_\gamma = \frac{4L(\lambda + \mu K_{max} L)}{B_{\gamma1}}$
\end{subproof}


The change of $V(t)$ in one time slot, will be
\begin{equation}
\begin{aligned}
&V(t+1) - V(t) = \sum_{j=0}^{2J-1} 
\frac{Q_{j}(t)(A_{j}[t+1, t] - D_{j}[t+1, t])}{\mu} \\
& + \sum_{j=0}^{2J-1} \frac{(A_{j}[t+1, t] - D_{j}[t+1, t])^2}{2\mu}
\end{aligned}
\end{equation}

In one time slot we also have
$\mathds{E}(A_{j}[t+1, t]) = \lambda_{j} = \lambda p_{j}$
and
$\mathds{E}(D_{j}[t+1, t]) = \mu \sum_{\ell \in \mathcal{L}} k^{(\ell)}_{j}(t)$
where $k^{(\ell)}_{j}(t)$ is the number of jobs in server $\ell$ at time slot
$t$, that come from $\mVQ_j$. 
Setting $\lambda_{j}/\mu = \rho_{j}$, the following holds when it comes
to computing the expected change in the Lyapunov function,
\begin{equation}\label{eq:deltaV}
\mathds{E}[V(t+1) - V(t) | \state{S}(t)] \le \sum_{j=0}^{2J-1}
Q_{j}(t)\left(\rho_{j} - \sum_{\ell \in \mathcal{L}} k^{(\ell)}_{j}
\right) + B_\beta.
\end{equation}
where
\begin{equation}\label{eq:B_b}
\begin{aligned}
&B_\beta = \mu \sum_{j=0}^{2J-1} 
\left( \rho_{j}^2 + \textrm{Var}(A_{j}[t+1, t]) \right) + 2\mu J L K_{max}^2
\end{aligned}
\end{equation}
and $\textrm{Var}(\cdot)$ is the variance.
As a first step, it is obvious from (\ref{eq:deltaV}) that the
expected change is bounded, when all queue sizes at time slot $t$ are bounded.
Specifically if we ignore the effect of departures we can use the following 
bound when drift is considered over a number of $N_2$ times slots:
\begin{equation}\label{eq:bound_dV}
\mathds{E}[V(t+N_2) - V(t) | \state{S}(t)] \le N_2(\langle \mathbf{Q}(t) + N_2 \bm{\lambda}, \bm{\rho}\rangle + B_\beta).
\end{equation}

When $\left \Vert \mathbf{Q}(t) \right \Vert$ is large enough we can use the 
following lemma to derive a stricter bound.
\begin{lemma*}\label{lem:deltaV}
	If at a given time slot $t$, the weight of all configurations 
	$\mathbf{k}^\ell$, $\ell \in \mathcal{L}$ is at least $\gamma$ times
	the weight of all configurations of $\mathcal{K}^{(J)}_{RED}$ and 
	{workload} $\rho$ satisfies $\rho < 2/3 \rho^\star$ 
	then the following is true:
	\begin{equation}
	\sum_{j=0}^{2J-1}
	Q_{j}(t)\left(\rho_{j} - \sum_{\ell \in \mathcal{L}} k^{(\ell)}_{j}
	\right) < - B_\alpha \left \Vert \mathbf{Q}(t) \right \Vert 
	\end{equation}
	for some constant $B_\alpha > 0$. 
\end{lemma*}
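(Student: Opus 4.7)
My plan is to chain together three ingredients: the polyhedral decomposition furnished by the preceding sublemma, the $\gamma$-near-optimality hypothesis on the active configurations, and a norm-equivalence-style bound that lower-bounds the max weight by $\|\mathbf{Q}(t)\|$.

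First, since $\rho < \tfrac{2}{3}\rho^\star$, the preceding sublemma supplies an $\epsilon > 0$ and an $\mathbf{x} \in \mathrm{Conv}(\mathcal{K}_{RED}^{(J)})$, both depending only on $\rho$ and $F_R$ (not on $\mathbf{Q}(t)$), satisfying $\bm{\rho} < (1-\epsilon) L \mathbf{x}$ componentwise. Taking the inner product with the non-negative vector $\mathbf{Q}(t)$ and observing that a linear functional on a convex hull is bounded above by its value at an extreme point, I obtain $\langle \bm{\rho}, \mathbf{Q}(t)\rangle \le (1-\epsilon) L \, M(t)$, where $M(t) := \max_{\mathbf{k} \in \mathcal{K}_{RED}^{(J)}} \langle \mathbf{k}, \mathbf{Q}(t)\rangle$.

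Next, the hypothesis that $\langle \mathbf{k}^\ell, \mathbf{Q}(t)\rangle \ge \gamma M(t)$ for every server $\ell$ aggregates to $\sum_{\ell \in \mathcal{L}} \langle \mathbf{k}^\ell, \mathbf{Q}(t)\rangle \ge \gamma L M(t)$. Subtracting from the previous estimate yields $\sum_{j=0}^{2J-1} Q_j(t)\bigl(\rho_j - \sum_{\ell} k^{(\ell)}_j\bigr) \le \bigl((1-\epsilon) - \gamma\bigr) L M(t)$. Because the sublemma's gap $\epsilon$ is fixed and independent of $t$, I will choose $\gamma \in (1-\epsilon, 1)$ (permissible since $\gamma$ is a free parameter that may be made arbitrarily close to $1$), so that the prefactor $((1-\epsilon)-\gamma)L$ is strictly negative.

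The final step is to convert $M(t)$ into $\|\mathbf{Q}(t)\|$. Inspection of the enumeration in \dref{eq:configs} shows that each coordinate $j \in \{0, 1, \ldots, 2J-1\}$ appears with strictly positive multiplicity in at least one configuration of $\mathcal{K}_{RED}^{(J)}$; consequently $M(t) \ge c_1 \max_j Q_j(t) \ge c_1 \|\mathbf{Q}(t)\|/\sqrt{2J}$ for an explicit $c_1 = c_1(J) > 0$. Setting $B_\alpha := (\gamma - (1-\epsilon)) L c_1 / \sqrt{2J}$ closes the argument. The main obstacle is conceptual rather than calculational: it is the uniformity of $\epsilon$ in $\mathbf{Q}(t)$ that permits a single $\gamma$ to work at every time slot, and this uniformity is precisely what the preceding sublemma delivers; the remainder is bookkeeping plus a routine verification that $\mathcal{K}_{RED}^{(J)}$ covers every coordinate direction.
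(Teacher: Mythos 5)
Your proof is correct and takes essentially the same approach as the paper's: extract a uniform slack $\epsilon>0$ from the preceding sublemma, choose $\gamma$ close enough to $1$ to beat it, and lower-bound the max weight by $\|\mathbf{Q}(t)\|/\sqrt{2J}$ via the observation that every coordinate direction appears in some configuration of $\mathcal{K}^{(J)}_{RED}$. The paper differs only in a cosmetic rearrangement---choosing $\gamma$ with $\rho/\gamma<\tfrac{2}{3}\rho^\star$ before applying the sublemma, which yields $\bm{\rho}<(1-\epsilon)\gamma L\mathbf{x}$---and in making explicit the algorithmic fact that $k^{(\ell)}_j(t)\ge\tilde{k}^{(\ell)}_j(t)$ whenever $Q_j(t)>0$, which bridges the hypothesis (stated for the \emph{active} configuration $\tilde{\mathbf{k}}^\ell$) to the conclusion (stated for the \emph{actually scheduled} counts $k^{(\ell)}_j$), a step you implicitly assume by identifying the two.
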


\begin{subproof}
	
	If $\rho < 2/3 \rho^\star$ then there is $\gamma < 1$ such that 
	$\rho < 2/3 \gamma \rho^\star$.
	Then because of Lemma~\ref{prop:2/3rho}, 
	there is an  $\mathbf{x} \in Conv({\mathcal{K}}_{RED}^{(J)})$, such that
	$\bm{\rho} < (1-\epsilon)\gamma L \mathbf{x}$, where factor $L$ is due to $L$ identical servers.
	
	Let $\tilde{\mathbf{k}}^\ell(t)$ be the active configuration of server $\ell$
	at time slot $t$. Under the claim that 
	$k^{(\ell)}_{j}(t) \ge \tilde{k}^{(\ell)}_{j}(t)$ when $Q_{j}(t) > 0$,
	which is true because of the way algorithm works, we should also have
	\begin{equation}
	\begin{aligned}
	&\langle \bm{\rho}, \mathbf{Q}(t)\rangle \le (1-\epsilon)\gamma L
	\langle \mathbf{x}, \mathbf{Q}(t)\rangle \le
	(1-\epsilon) \gamma L \langle \mathbf{k}^\star(t), \mathbf{Q}(t)\rangle \\
	&\le (1-\epsilon) \sum_{\ell \in \mathcal{L}}
	\langle \mathbf{k}^{(\ell)}(t), \mathbf{Q}(t)\rangle
	\end{aligned}
	\end{equation}
	
	Using this result
	\begin{equation}
	\begin{aligned}
	&\sum_{j=0}^{2J-1} Q_{j}(t)\left(\rho_{j} -
	\sum_{\ell \in \mathcal{L}} k^{(\ell)}_{j} \right) =
	\left\langle \bm{\rho} - \sum_{\ell \in \mathcal{L}} \mathbf{k}^{(\ell)}(t),
	\mathbf{Q}(t)\right\rangle \\
	& < -\epsilon \left \langle \sum_{\ell \in \mathcal{L}}
	\mathbf{k}^{(\ell)}(t), \mathbf{Q}(t) \right \rangle
	\le - B_\alpha \left \Vert \mathbf{Q}(t) \right \Vert
	\end{aligned}
	\end{equation}
	where
	\begin{equation}\label{eq:B2}
	\begin{aligned}
	& B_\alpha = 
	\epsilon\left \Vert \sum_{\ell \in \mathcal{L}} \mathbf{k}^{(\ell)}(t)\right \Vert
	\cos\left(\sum_{\ell \in \mathcal{L}} \mathbf{k}^{(\ell)}(t), \mathbf{Q}(t)
	\right) \ge \\
	& \epsilon \gamma L \left \Vert \mathbf{k}^\star(t) \right \Vert
	\cos\left( \mathbf{k}^\star(t), \mathbf{Q}(t) \right)	\ge
	\epsilon \frac{\gamma L}{\sqrt{2J}} > 0.
	\end{aligned}
	\end{equation}
	In the above derivation we show that $B_\alpha$ is strictly positive and that can be 
	done using that the expression
	$\left \Vert \mathbf{k}^\star(t) \right \Vert
	\cos\left( \mathbf{k}^\star(t), \mathbf{Q}(t) \right)$
	is at least $\frac{1}{\sqrt{2J}}$. That can be shown by noticing
	that maximum of $\left \Vert \mathbf{k} \right \Vert
	\cos\left( \mathbf{k}, \mathbf{Q}(t) \right)$ over all
	configurations $\mathbf{k} \in {\mathcal{K}}_{RED}^{(J)}$ is 
	at least the maximum of $\cos\left( \mathbf{k}, \mathbf{Q}(t) \right)$
	over the same set of configurations, since 
	$\left \Vert \mathbf{k} \right \Vert \ge 1$ for all 
	$\mathbf{k} \in {\mathcal{K}}_{RED}^{(J)}$.
	
	Lastly one can think of 
	the expression $\cos\left( \mathbf{k}, \mathbf{Q}(t) \right)$ 
	as the projection of a unit queue vector
	onto a configuration $\mathbf{k} \in {\mathcal{K}}_{RED}^{(J)}$ 
	Given that the set of configurations ${\mathcal{K}}_{RED}^{(J)}$ spans a
	space of $2J$ dimensions, the largest cosine will have a value of at least 
	$\frac{1}{\sqrt{2J}}$.
\end{subproof}

In the last part of our proof, we will give the conditions under which
the drift over $N_2$ time slots is negative or equivalently 
$h(\state{S}(t_0))$ can be chosen to be positive.

\begin{lemma*}\label{lem:DV}
	We will have that $\mathds{E}[V(t_0 + N_2) - V(t_0) | 
	\state{S}(t_0)] < -\delta < 0$, if 
	\begin{equation}\label{eq:N2_bound}
	\begin{aligned}
	& N_2 >  \frac{L N_1(1-\epsilon)(B_\alpha + \left \Vert \bm{\rho} \right \Vert)}
	{B_\alpha - \epsilon (B_\alpha + \left \Vert \bm{\rho} \right \Vert)} \\
	\end{aligned}
	\end{equation}
	\begin{equation}\label{eq:Qt_bound2}
	\begin{aligned}
	& \left\Vert \mathbf{Q}(t_0)\right\Vert > \\
	& \frac{-\delta - N_2 B_\beta}
	{N_2 \left( -B_\alpha + \epsilon(B_\alpha + \left \Vert \bm{\rho} \right \Vert) \right)
		+ L N_1(1 - \epsilon)(B_\alpha + \left \Vert \bm{\rho} \right \Vert)}.
	\end{aligned}
	\end{equation}
\end{lemma*}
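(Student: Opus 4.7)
The plan is to condition on the event $E := E_{\state{S}(t_0),N_1,N_2,\gamma}$ introduced in Lemma~\ref{lem:E}. The hypothesis on $N_1$ already in force, together with the lower bound \dref{eq:Qt_bound2} on $\|\mathbf{Q}(t_0)\|$ (which, for $\delta$ and $B_\beta$ of the right magnitude, dominates the $B_\gamma N_2/\epsilon$ threshold required by Lemma~\ref{lem:E}), delivers $\mathds{P}(E)\ge 1-\epsilon$. I would then split
\begin{equation*}
\mathds{E}\big[V(t_0+N_2)-V(t_0)\,\big|\,\state{S}(t_0)\big]
=\mathds{P}(E)\,\mathds{E}[\Delta V\mid\state{S}(t_0),E]
+\mathds{P}(E^c)\,\mathds{E}[\Delta V\mid\state{S}(t_0),E^c].
\end{equation*}

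On $E$, a union bound across the $L$ servers yields at most $LN_1$ ``bad'' slots in $[t_0,t_0+N_2)$ in which some server is below the $\gamma$-weight threshold, and at least $N_2-LN_1$ ``good'' slots in which every server is above it. For each good slot Lemma~\ref{lem:deltaV} combined with \dref{eq:deltaV} yields per-slot expected drift $\le -B_\alpha\|\mathbf{Q}(t)\|+B_\beta$; for each bad slot, discarding the nonnegative departure term in \dref{eq:deltaV} and applying Cauchy--Schwarz yields per-slot drift $\le \|\bm{\rho}\|\|\mathbf{Q}(t)\|+B_\beta$. On the tail event $E^c$ I would use the coarse bound \dref{eq:bound_dV} throughout the whole window, which similarly scales like $\|\bm{\rho}\|\|\mathbf{Q}(t)\|+B_\beta$ per slot.

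Next I would replace $\|\mathbf{Q}(t)\|$ by $\|\mathbf{Q}(t_0)\|$ at the cost of an $O(N_2)$ additive error (justified by the per-slot uniform bounds on arrivals and departures), absorbing these lower-order corrections into the $B_\beta$ and $\delta$ terms. Summing the per-slot contributions, the telescoped bound takes the form
\begin{equation*}
\mathds{E}[\Delta V\mid\state{S}(t_0)] \le \Big[-(1-\epsilon)(N_2-LN_1)B_\alpha+(1-\epsilon)LN_1\|\bm{\rho}\|+\epsilon N_2\|\bm{\rho}\|\Big]\|\mathbf{Q}(t_0)\|+N_2 B_\beta+O(N_2^2).
\end{equation*}
A straightforward algebraic rearrangement (rewriting $\epsilon N_2\|\bm{\rho}\|=\epsilon N_2(B_\alpha+\|\bm{\rho}\|)-\epsilon N_2 B_\alpha$, then collecting the $B_\alpha$ terms) shows that the bracketed coefficient is strictly negative precisely when \dref{eq:N2_bound} holds; once this is the case, the requirement that the right-hand side be at most $-\delta$ reduces to the quantitative threshold \dref{eq:Qt_bound2} on $\|\mathbf{Q}(t_0)\|$.

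The main obstacle is handling the ``bad'' slots \emph{inside} the favorable event $E$: on those slots Lemma~\ref{lem:deltaV} does not apply and only the trivial $+\|\bm{\rho}\|\|\mathbf{Q}(t)\|$ upper bound is available, so the savings from the $\ge N_2-LN_1$ good slots must simultaneously dominate both the $LN_1\|\bm{\rho}\|$ per-unit loss from these bad slots and the $\epsilon N_2\|\bm{\rho}\|$ expected loss on $E^c$; it is precisely this three-way balance that forces the slightly awkward quantitative form of \dref{eq:N2_bound}. A secondary technical point is cleanly absorbing the $O(N_2)$ fluctuation of $\|\mathbf{Q}(t)\|$ around $\|\mathbf{Q}(t_0)\|$ into the additive constants of \dref{eq:Qt_bound2} without degrading the leading coefficient of $\|\mathbf{Q}(t_0)\|$, which is what makes the second displayed inequality take its particular form.
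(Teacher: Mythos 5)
Your proposal follows essentially the same route as the paper's proof: condition on $E_{\state{S}(t_0),N_1,N_2,\gamma}$, use Lemma~\ref{lem:deltaV} on the at-least-$(N_2-LN_1)$ good slots, fall back on the crude $\|\bm{\rho}\|\|\mathbf{Q}(t)\|$ bound on the bad slots and on $E^c$, combine with $\mathds{P}(E)\ge 1-\epsilon$, and then rearrange the bracketed coefficient of $\|\mathbf{Q}(t_0)\|$ so that its negativity is equivalent to~\dref{eq:N2_bound}, leaving~\dref{eq:Qt_bound2} as the residual quantitative threshold. One small point where you are actually more careful than the paper: you explicitly flag the $O(N_2)$ per-slot drift of $\|\mathbf{Q}(t)\|$ away from $\|\mathbf{Q}(t_0)\|$ and absorb the resulting $O(N_2^2)$ correction, whereas the paper's displayed chain replaces $\|\mathbf{Q}(t)\|$ by $\|\mathbf{Q}(t_0)\|$ without comment; your treatment is the rigorous version of what is implicit there, and both lead to the same final conditions since these lower-order terms can be swallowed by $B_\beta$ and $\delta$.
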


\begin{subproof}
	First we provide a bound for 
	$\mathds{E}[V(t_0 + N_2) - V(t_0) | \state{S}(t_0)]$,
	based on Lemmas \ref{lem:E} and \ref{lem:deltaV},
	\begin{equation}
	\begin{aligned}
	&\mathds{E}[V(t_0 + N_2) - V(t_0) | \state{S}(t_0)] < \\
	&\mathds{P}(E_{{S}(t_0),N_1,N_2,\gamma})
	\mathds{E}[V(t_0 + N_2) - V(t_0) | \state{S}(t_0), 
	E_{\state{S}(t_0),N_1,N_2,\gamma}] \\
	& + (1 - \mathds{P}(E_{\state{S}(t_0),N_1,N_2,\gamma}))
	N_2 \sum_{j=0}^{2J-1} Q_{j}(t_0)\rho_{j} + N_2 B_\beta < \\
	&(1-\epsilon) (N_2 - L N_1)(- B_\alpha \left \Vert \mathbf{Q}(t_0) \right \Vert)
	+ (1-\epsilon) L N_1 \left \Vert \bm{\rho} \right \Vert
	\left \Vert \mathbf{Q}(t_0) \right \Vert \\
	& + \epsilon N_2 \left \Vert \bm{\rho} \right \Vert
	\left \Vert \mathbf{Q}(t_0) \right \Vert + N_2 B_\beta = \\
	& \left( N_2(-B_\alpha + \epsilon (B_\alpha + \left \Vert \bm{\rho} \right \Vert))
	+L N_1(1 - \epsilon)(B_\alpha + \left \Vert \bm{\rho} \right \Vert) \right)
	\left \Vert \mathbf{Q}(t_0) \right \Vert\\ 
	&+ N_2 B_\beta
	\end{aligned}
	\end{equation}
	
	To ensure $\mathds{E}[V(t_0 + N_2) - V(t_0) | \state{S}(t_0)] < -\delta < 0$, 
	it suffices that (\ref{eq:Qt_bound2}) holds.
	The inequality is true provided the denominator is negative, so
	a sufficient choice of parameters 
	is $\epsilon < \frac{B_\alpha}{B_\alpha + \left \Vert \bm{\rho} \right \Vert}$ 
	and $N_2$ given by (\ref{eq:N2_bound}).
\end{subproof}

We have eventually proven that conditions of Theorem~\ref{thm:stab_lyap}
are true for the state space of our problem, when Lyapunov function is the one
in Equation~(\ref{eq:lyap-VQS}) 
and $h(\state{S}(t_0))$, $g(\state{S}t_0)$ are given by

\begin{equation}
\begin{aligned}
& h(\state{S}(t_0)) = \left\{
\begin{array}{ll}
\delta & \left \Vert \mathbf{Q}(t_0) \right \Vert > Q_{t} \\
-N_2((Q_{t} + N_2 \left\Vert\bm{\lambda}\right\Vert) 
\left\Vert\bm{\rho}\right\Vert + B_\beta) & \left \Vert \mathbf{Q}(t_0) \right \Vert \le Q_{t}
\end{array}
\right. , \\
& g(\state{S}(t_0)) = N_2 = \left\lceil 
\frac{L N_1(1-\epsilon)(B_\alpha + \left \Vert \bm{\rho} \right \Vert)}
{B_\alpha - \epsilon (B_\alpha + \left \Vert \bm{\rho} \right \Vert)} \right \rceil, \\
& N_1 > \frac{\log{(\epsilon/2)}}{\log{\left(1 - \mu^{K_{max}}\right)}}, \\
& \epsilon < \frac{B_\alpha}{B_\alpha + \left \Vert \bm{\rho} \right \Vert}, \\
\end{aligned}
\end{equation}
where $B_\alpha$ is defined in (\ref{eq:B2}), $B_\beta$ in (\ref{eq:B_b}) and
$Q_t$ is the maximum of expressions (\ref{eq:Qt_bound1}) and (\ref{eq:Qt_bound2}).

\subsection{Proof of Proposition~\ref{prop:2/3opt}}\label{pr:2/3opt}
Since the partitions are countable, we can choose an $\epsilon \in (0, 1/3)$ 
such that both of the values $1/2 - \epsilon$ and $1/2 + \epsilon$
are in the interior of a subinterval of the partition. 
This assertion alone prevents an oblivious configuration based scheduling algorithm 
to schedule jobs of size $1/2 - \epsilon$ and $1/2 + \epsilon$ in the same server 
at the same time, even though they can fit together perfectly in it.  

To complete the proof, it suffices to consider a single server of capacity one 
and assume that jobs have one of the two resource requirements, $1/2 - \epsilon$ 
and $1/2 + \epsilon$, with equal probability. We will now analyze the case in
which the two values are in the interior of different subintervals, as the case in 
which they fall in the same one is clearly worse.

In what follows, for compactness, we define all the vectors to be $2$-dimensional 
with each dimension corresponding to a type, although the 
number of subintervals can be much larger. In other words, we omit the entities 
of the vector that correspond to subintervals
with zero arrivals. Thus, the 
arrival rate vector is given by $\lambda(1/2, 1/2)$. Under an oblivious algorithm, 
the possible maximal feasible configurations 
(configurations that cannot be increased and still be feasible) are
$(2, 0)$ and $(0, 1)$. In particular, configuration $(2, 0)$ is feasible in a best 
case scenario where jobs of size $1/2 - \epsilon$ are mapped to a subinterval 
with the end-bound in $(1/2 - \epsilon, 1/2]$.

It is on the other hand obvious that the configuration $(1, 1)$ is also feasible 
for the job types considered in this example. Hence a workload $\rho=\lambda/\mu$ 
should be feasible if $\mu(1, 1) > \lambda(1/2, 1/2)$ or 
$\rho = \frac{\lambda}{\mu} < 2$. So  $\rho^\star = 2$.
However under the partition assumption, the following conditions should 
hold for any feasible $\rho$
\begin{equation}
\begin{aligned}
&p_1 \mu (2, 0) + p_2 \mu (0, 1) \ge \lambda(1/2, 1/2), \\
&p_1 + p_2 = 1, \quad p_1, p_2 \ge 0.
\end{aligned}
\end{equation}
The maximum $\rho$ is obtained in this case by choosing $p_1=1/3$
and $p_2=1/3$. That is equivalent with $\rho \le 4/3 = 2/3 \rho^\star$.

\subsection{Proof of Corollary~\ref{cor:VQS_J}}\label{pr:VQS_J}
We consider the following $4$ systems which differ in the way that they process jobs of size less than $1/2^J$:
\begin{enumerate}[leftmargin=*]
	\item The jobs are completely discarded from queue and are not 
	processed further
	\item Jobs join the queue without any changes
	\item Jobs join the queue and have their resource requirement rounded to $1/2^J$
	\item Jobs join the queue and have their resource requirement re-sampled from the  
	distribution $F_R$ until their resource value becomes more than $1/2^J$.
\end{enumerate}

We denote the maximum workload achieved in each of the $4$ systems by
$\rho^\star_1, \rho^\star_2, \rho^\star_3, \rho^\star_4$. The 
relation between the job sizes in the systems is increasing. Also the 
distribution of job sizes in the first and last system is the same, 
but in the latter the arrival rate of the jobs is increased by a factor of 
$1/(1- \epsilon)$. It follows that the following relationship must hold between 
the optimal workload of these $4$ systems:
\begin{equation}\label{eq:multiway}
\rho^\star_1 \ge \rho^\star_2 = \rho^\star \ge \rho^\star_3 \ge \rho^\star_4 \ge
\rho^\star_1(1 -\epsilon).
\end{equation} 
The bound of VQS algorithm from Theorem~\ref{thm:VQS1} is valid for the third system, so let $\rho^\star_{VQS}$ be the maximum supportable workload by $VQS$. It then follows from that theorem and inequality \dref{eq:multiway} that
\begin{equation}
\rho^\star_{VQS} \ge \frac{2}{3} \rho^\star_3 \ge \frac{2}{3} \rho^\star_4 =
\frac{2}{3} (1 -\epsilon) \rho^\star_1 \ge \frac{2}{3} (1 -\epsilon) \rho^\star_2
= \frac{2}{3} (1 -\epsilon) \rho^\star
\end{equation}

\subsection{Proof of Theorem~\ref{thm:VQS2}}\label{pr:thm_VQS2}
To prove the throughput result for \algB-\algA, the fundamental
change compared to the proof of Theorem~\ref{thm:VQS1}, is that the proof of
Lemma~\ref{lem:E} needs to make use of new assumptions.
We restate the Lemma next and prove it under the assumptions of the algorithm 
\algB{}-\algA{}.

\begin{lemma*}\label{lem:E2}
	We can ensure $\mathds{P}(E_{{S}(t_0),N_1,N_2,\gamma}) > 1-\epsilon$
	when scheduling under \algB-\algA, if 
	$N_1 >\frac{\log{(\epsilon/(2L))}}{\log{\left(1 - \mu^{K_{max}}\right)}}$
	and $\left\Vert \mathbf{Q}(t_0) \right\Vert > B_{\gamma} \frac{N_2}{\epsilon}$
	for some constant $B_{\gamma}$.
\end{lemma*}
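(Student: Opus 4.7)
The plan is to mirror the proof of Lemma~\ref{lem:E} almost verbatim, exploiting the observation that the active-configuration renewal rule in \algB{}-\algA{} is \emph{identical} to that of \algB{}. Specifically, both algorithms set $\mathbf{k}^\ell(\tau_i^\ell) = \argmax_{\mathbf{k} \in \mathcal{K}_{RED}^{(J)}} \langle \mathbf{k}, \mathbf{Q}(\tau_i^\ell) \rangle$ at each emptying epoch $\tau_i^\ell$ and keep $\mathbf{k}^\ell(t)$ fixed until the next emptying. Hence ``the weight of the active configuration of server $\ell$ at time $t$'' is a quantity that depends only on $\mathbf{Q}(\cdot)$ and on the emptying epochs of $\ell$; it is \emph{not} affected by the extra Best-Fit step~(iii) of \algB{}-\algA{}, which only adds more jobs into the server beyond those prescribed by $\mathbf{k}^\ell(t)$.

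I would decompose $E_{\state{S}(t_0),N_1,N_2,\gamma}$ as in the proof of Lemma~\ref{lem:E} by bounding its probability through two sub-events for each server: (a) the server becomes empty at some time $t_{e_\ell(1)} \in [t_0, t_0 + N_1]$, and (b) after $t_{e_\ell(1)}$, the queue vector $\mathbf{Q}$ does not drift so much that the frozen active configuration, which was max-weight at $t_{e_\ell(1)}$, falls below $\gamma$ times the running max weight over $\mathcal{K}_{RED}^{(J)}$. For (a), since the number of jobs per server is still bounded by $K_{max} = \lfloor 1/u \rfloor$ (here $2^J$) and each job departs independently with probability $\mu$ per slot, the same simultaneous-departure argument gives $\mathds{P}(t_{e_\ell(1)} - t_0 < N_1) \ge 1 - (1-\mu^{K_{max}})^{N_1}$, yielding the stated threshold on $N_1$. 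For (b), the chain of estimates in \dref{eq:emptyA1}--\dref{eq:P_AD} carries over unchanged: the drift $\Vert \mathbf{Q}(t_{e_\ell(i)}) - \mathbf{Q}(t)\Vert$ is controlled by $A[t_{e_\ell(1)},t] + D[t_{e_\ell(1)},t]$, whose first moment is governed by $\lambda + \mu K_{max} L$, a quantity untouched by the BF-S augmentation step.

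The one point requiring a moment's care is confirming that the Best-Fit additions in step~(iii) do not invalidate the ``at most $K_{max}$ jobs per server'' bound used in (a). Since every scheduled job still occupies at least $1/2^J$ of the unit capacity, no server can ever hold more than $K_{max}$ jobs, regardless of whether those jobs were placed by the configuration-driven steps~(i)--(ii) or by the residual Best-Fit step~(iii). Thus the worst-case simultaneous-departure probability of $\mu^{K_{max}}$ per slot remains a valid lower bound, and the departure-rate contribution to $\Vert \mathbf{D}[t_{e_\ell(1)},t]\Vert$ is likewise unchanged. With these observations in place, every remaining estimate transplants with no modification, and one may take $B_\gamma = 4L(\lambda + \mu K_{max} L)/B_{\gamma 1}$ exactly as in the proof of Lemma~\ref{lem:E}. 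I expect the only subtle part to be articulating cleanly why step~(iii)'s extra jobs are benign; the rest is bookkeeping.
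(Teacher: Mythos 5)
There is a genuine gap. Your proposal treats the extra Best-Fit step~(iii) as harmless because it ``only adds more jobs beyond those prescribed by $\mathbf{k}^\ell(t)$,'' but this is precisely where the argument from Lemma~\ref{lem:E} breaks down. The event $E_{\state{S}(t_0),N_1,N_2,\gamma}$ is about the weight of the configuration the server \emph{actually schedules}, not about the frozen active configuration $\mathbf{k}^\ell(t)$. In \algB{} these coincide (when the relevant queues are nonempty) because the algorithm reserves capacity and fills from the head of the line, which is why Lemma~\ref{lem:deltaV} is entitled to assert $k^{(\ell)}_j(t)\ge \tilde k^{(\ell)}_j(t)$. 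In \algB{}-\algA{} this identity can fail: the BF-S jobs added in step~(iii) consume the slack, the $\mVQ_1$ reservation is dropped, and scheduling proceeds largest-first rather than head-of-line; after a configuration job departs, a newly arrived (larger) job of the same type need not fit, and the server can hold fewer type-$j^\star$ jobs than $\mathbf{k}^\ell(t)$ prescribes. So the claim ``every remaining estimate transplants with no modification'' is not justified.

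The paper's proof of Lemma~\ref{lem:E2} spends almost all of its effort precisely on this point. It shows that the actual scheduled jobs dominate the active configuration \emph{provided the queue at $t_0$ is large enough to survive $N_2$ slots of draining}, splitting into the case with no $\mVQ_1$ job (yielding the extra requirement $\left\Vert\mathbf{Q}(t_0)\right\Vert > 2J K_{\max}^2 N_2$) and the case with a $\mVQ_1$ job in the active configuration, which itself subdivides according to how $Q_1$ compares to the optimal weight $U$. In the subcase $Q_1 \ge (\gamma+1)U/2$ the server may end up scheduling only a $\mVQ_1$ job, so inequality~\dref{eq:emptyA1} is weakened to $\langle\mathbf{k}(t_{e_\ell(i)}),\mathbf{Q}(t_{e_\ell(i)})\rangle \ge \tfrac{1+\gamma}{2}\langle\mathbf{k},\mathbf{Q}(t_{e_\ell(i)})\rangle$, which propagates to a \emph{different} constant $B_{\gamma 1}$ and hence a different $B_\gamma$ than in Lemma~\ref{lem:E}. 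Your proposal's final claim, that one may take $B_\gamma = 4L(\lambda+\mu K_{\max}L)/B_{\gamma 1}$ ``exactly as before,'' is therefore incorrect: you would need to establish the domination of the scheduled configuration over the active one, identify the correct threshold on $\left\Vert\mathbf{Q}(t_0)\right\Vert$ for each subcase, and rework $B_{\gamma 1}$. The parts you did port over (the $N_1$ threshold via $\mu^{K_{\max}}$, the bound on $\left\Vert\mathbf{D}\right\Vert$ via $\mu K_{\max} L$) are fine, but they are the easy part.
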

\begin{subproof}
	The first part can be proven under the assumption that once the configuration
	of a server becomes less than $\gamma$ of max-weight configuration, 
	it will become empty in at most $N_1$ time slots for an appropriate value of 
	$N_1$. The analysis of this part is the same as the one in Lemma~\ref{lem:E}
	
	The other condition we need to justify is that a server that becomes active, 
	will schedule according to a configuration that has weight at least $\gamma$ 
	times the optimal, for at least $N_2$ time slots, unless it gets empty again.
	
	We need to distinguish 2 cases for that depending on whether the active 
	configuration of server has a job from $\mVQ_1$ or not. In what follows 
	we highlight only the changes compared to proof of Lemma~\ref{lem:E}.
	
	\textbf{No job from $\mVQ_1$:}
	In this case the server will have $k_{j^\star}$ jobs of type-$j^\star$ 
	in its active configuration for some $j^\star \in [0, 2J-1]$. 
	Given that the jobs 
	in $\mVQ_{j^\star}$ are scheduled from largest to smallest, 
	then the jobs in server will be a superset of those in configuration if 
	$Q_{j^\star}(t_0) > K_{max} N_2$ or 
	$k_{j^\star} Q_{j^\star}(t_0) > K_{max}^2 N_2$.
	Since 
	\begin{equation}
	k_{j^\star} Q_{j^\star}(t_0) \ge \frac{1}{2J}
	\sum_{j=0}^{2J-1} k_{j} Q_{j}(t_0) \ge 
	\frac{\left\Vert \mathbf{Q}(t_0) \right\Vert}{2J}
	\end{equation}
	a sufficient condition can be
	\begin{equation}
	\left\Vert \mathbf{Q}(t_0) \right\Vert > 2J K_{max}^2 N_2
	\end{equation}
	Given this condition, the weight of scheduled configuration 
	in the next $N_2$ time slots will be at least the weight of the 
	active configuration. 
	As a next step we need the weight of active configuration to be at least
	$\gamma$ times the maximum weight for the following $N_2$ time slots, 
	so later arguments are the same as in proof of Lemma~\ref{lem:E}.
	
	\textbf{One job from $\mVQ_1$:}
	Under this condition we will further distinguish two cases depending on the 
	length of the other VQ in configuration which we will assume to be 
	$\mVQ_{j^\star}$.
	Let $U = Q_{j^\star}(t_0) k_{j^\star} + Q_{1}(t_0)$ be
	the weight of the max weight configuration.
	
	\begin{enumerate}
		\item $(\gamma+1)U/2 > Q_{1} \ge U/2$:
		That implies $Q_{j^\star}(t_0) k_{j^\star} > (1-\gamma)U/2$.
		A sufficient condition for this and previous condition to happen is,
		following the procedure for the case of "no jobs from $\mVQ_1$" is
		\begin{equation}
		\left\Vert \mathbf{Q}(t_0) \right\Vert > J (1-\gamma) U.
		\end{equation}
		Then the weight of scheduled configuration 
		will be at least the weight of the active configuration.
		As a next step we need the weight of active configuration to be at least
		$\gamma$ times the maximum weight for the following $N_2$ time slots, 
		so later arguments are the same as in proof of Lemma~\ref{lem:E}.

		\item $Q_{1} \ge (\gamma+1)U/2$:
		In this case we can at least ensure that if 
		\begin{equation}
		\left\Vert \mathbf{Q}(t_0) \right\Vert > J (\gamma+1) U
		\end{equation}
		the $\mVQ_1$ will never empty, but at the same time we need to consider
		the weight of server's configuration assuming that only job of type-$1$ will 
		be in it at all times. For this we consider our configuration has only 
		one job of type-$1$ for which we can claim as opposed to 
		equation~(\ref{eq:emptyA1}) that
		\begin{equation}
		\langle \mathbf{k}(t_{e_\ell(i)}), \mathbf{Q}(t_{e_\ell(i)}) \rangle \ge
		(1+\gamma)/2 \langle \mathbf{k}, \mathbf{Q}(t_{e_\ell(i)}) \rangle.
		\end{equation}
		This leads to the following equivalent of equation~(\ref{eq:DQ})
		
		\begin{equation}
		\begin{aligned}
		&\left \Vert \mathbf{Q}(t_{e_\ell(i)}) - \mathbf{Q}[n]\right \Vert > \\
		&\left \Vert \mathbf{Q}(t_{e_\ell(i)}) \right \Vert
		\frac{\left | \langle \mathbf{k}(t_{e_\ell(i)}) - \frac{1+\gamma}{2}\mathbf{k},
			\mathbf{k}(t_{e_\ell(i)}) - \gamma \mathbf{k} \rangle \right |}
		{\left \Vert \mathbf{k}(t_{e_\ell(i)}) - \frac{1+\gamma}{2}\mathbf{k} \right \Vert
			\left \Vert \mathbf{k}(t_{e_\ell(i)}) - \gamma \mathbf{k} \right \Vert}
		\end{aligned}
		\end{equation}
		with the equivalent of equation~\dref{eq:Cb} being
		\begin{equation}
		B_{\gamma1} = \min_{\mathbf{k^\prime}, \mathbf{k}}
		\frac{\left | \langle \mathbf{k^\prime} - \frac{1+\gamma}{2}\mathbf{k},
			\mathbf{k^\prime} - \gamma \mathbf{k} \rangle \right |}
		{\left \Vert \mathbf{k^\prime} - \frac{1+\gamma}{2}\mathbf{k} \right \Vert
			\left \Vert \mathbf{k^\prime} - \gamma \mathbf{k} \right \Vert}
		\end{equation}
		Later analysis is the same as in proof of Lemma~\ref{lem:E}
		with only the constant $B_{\gamma1}$ being different.
	\end{enumerate}
\end{subproof}

\end{document}